\documentclass[UKenglish,cleveref,autoref,thm-restate,a4paper, authorcolumns]{lipics-v2021} 
\usepackage{mathtools}
\usepackage{cite}
\usepackage{algorithm}
\usepackage{algpseudocode}
\usepackage{xcolor} 
\usepackage[utf8]{inputenc}
\usepackage{url}

\newcommand{\ceil}[1]{\left\lceil #1\right\rceil}

\newcommand{\RE}{\mathbb{R}}            
\newcommand{\bd}{\partial\kern+1pt}     

\newcommand{\SP}{\kern+1pt}             
\newcommand{\PERP}{\kern-2pt \perp \kern-2pt} 
\DeclareMathOperator{\interior}{int}    

\title{On Voronoi diagrams in the Funk Conical Geometry}
\titlerunning{On Voronoi diagrams in the Funk Conical Geometry}

\author{Aditya Acharya}{Department of Computer Science, University of Maryland, College Park, USA \and \url{https://www.cs.umd.edu/~acharya/}}{adach@umd.edu}{https://orcid.org/0000-0002-0359-1913}{}

\author{Auguste H. Gezalyan}{Department of Computer Science, University of Maryland, College Park, USA \and \url{~}}{octavo@umd.edu}{https://orcid.org/0000-0002-5704-312X}{}

\author{David M. Mount}{Department of Computer Science, University of Maryland, College Park, USA \and \url{https://www.cs.umd.edu/~mount/}}{mount@umd.edu}{https://orcid.org/0000-0002-3290-8932}{}

\author{Danesh Sivakumar}{Department of Computer Science, University of Maryland, College Park, USA \and \url{~}}{dsivakumar@terpmail.umd.edu}{https://orcid.org/0009-0008-2484-5549}{}

\authorrunning{A. Acharya, A. H. Gezalyan, and D. M. Mount}

\Copyright{Aditya Acharya, Auguste H. Gezalyan, and David M. Mount}

\ccsdesc[500]{Theory of computation~Computational geometry}

\keywords{Funk metric, Voronoi diagrams, Apollonius diagrams, convex cones}

\date{\today}

\acknowledgements{}

\hideLIPIcs  
\nolinenumbers 
\begin{document}

\maketitle


\begin{abstract}
The forward and reverse Funk weak metrics are fundamental distance functions on convex bodies that serve as the building blocks for the Hilbert and Thompson metrics. In this paper we study Voronoi diagrams under the forward and reverse Funk metrics in polygonal and elliptical cones. We establish several key geometric properties: (1) bisectors consist of a set of rays emanating from the apex of the cone, and (2) Voronoi diagrams in the $d$-dimensional forward (or reverse) Funk metrics are equivalent to additively-weighted Voronoi diagrams in the $(d-1)$-dimensional forward (or reverse) Funk metrics on bounded cross sections of the cone. Leveraging this, we provide an $O\big(n^{\ceil{\frac{d-1}{2}}+1}\big)$ time algorithm for creating these diagrams in $d$-dimensional elliptical cones using a transformation to and from Apollonius diagrams, and an $O(mn\log(n))$ time algorithm for 3-dimensional polygonal cones with $m$ facets via a reduction to abstract Voronoi diagrams. We also provide a complete characterization of when three sites have a circumcenter in 3-dimensional cones. This is one of the first algorithmic studies of the Funk metrics.
\end{abstract}

\section{Introduction}
The Funk metric was originally defined by Paul Funk in 1929 to study geometries with straight line geodesics \cite{funk1929geometrien} (see Section~\ref{sec:prelims} for definitions). This work directly addresses Hilbert's fourth problem, which sought to characterize all geometries where straight lines are geodesics. The forward and reverse Funk metrics are deeply connected to the Hilbert metric, introduced by David Hilbert in 1895 \cite{hilbert1895linie}. The Hilbert metric is a generalization of the Cayley-Klein metric of hyperbolic geometry to general convex bodies. In fact, the Hilbert metric can be defined as the average of the forward and reverse Funk metrics. The Funk metric also plays a role in the definition of other distance functions, such as the Thompson metric \cite{thompson1963certain}, which is defined as the maximum of the forward and reverse Funk metrics.

As the Hilbert geometry has many applications in information geometry \cite{nielsen2019clustering, nielsen2022nonlinear, karwowski2025bicone}, convex approximation \cite{abdelkader2018delone, abdelkader2024convex, vernicos2018flag}, and real analysis \cite{lemmens2013birkhoff} (among other fields), it has become of interest to develop classical algorithms of computational geometry such as Voronoi diagrams \cite{gezalyan2023voronoi, bumpus2023software}, Farthest-site Voronoi diagrams \cite{song2025farthest}, Delaunay triangulations \cite{gezalyan2024delaunay}, and support vector machines \cite{hilbertSVM,acharya2026classifiers} in the Hilbert geometry. Software for experimenting in the forward and reverse Funk metric is available \cite{banerjee2025software}. This motivates the study of the Funk metrics in two ways. First, understanding the behavior of the Funk geometry provides insights for constructing improved algorithms in related metrics, such as Hilbert and Thompson. Second, there is evidence that the Funk metrics are more natural to study than Hilbert. Faifman argues this perspective in ``A Funk perspective on billiards, projective geometry and Mahler volume'' \cite{faifman2024funk}. Faifman shows that the Holmes-Thompson volume and surface area are preserved under polarity in the Funk metrics with a myriad of other desirable properties. 

The Funk metrics have also recently emerged as an area of contemporary research in Finsler geometries. Funk is considered one of the most basic Finsler geometries \cite{papadopoulos2014handbook}. In particular, Papadopoulos and Troyanov showed that every convex body contains a canonical weak Finsler structure whose distance is inherently the forward Funk metric \cite{papadopoulos2008weak}. Additionally, Shen used it to construct an R-flat spray and employed the forward Funk metric to find a Finsler metric with zero curvature \cite{shen2001funk}. Most recently, the forward Funk was used to solve problems involving trajectory in the context of Zermelo's navigation problem \cite{chavez2025geometry}.

Cones are a particularly natural setting to study the Funk and Hilbert metrics. For instance, the cone of symmetric positive definite matrices is a fundamental setting for information geometry \cite{nielsen2023fisher}. The study of the Hilbert metric (the average of the forward and reverse Funk metrics) on cones is a well-established area of real analysis, including the study of Perron-Frobenius operators, Bellman operators in Markov decision processes and Shapley operators in stochastic games, which are all operators on cones \cite{lemmens2013birkhoff, karwowski2025bicone,metz1995hilbert}. Furthermore, the Funk metric has also been studied in the context of the Holmes-Thompson metric and  makes essential use of conical geometry \cite{arya2026cauchy}. As bounded convex domains are simply cross sections of cones, our results extend implicitly to bounded convex domains \cite{nielsen2019clustering}.

\textbf{Our Contributions}. We study Voronoi diagrams in both the forward and reverse Funk metrics in cones. Voronoi diagrams are a natural first structure to study, as they give insights into the behavior of the metric and underlie many other geometric algorithms, such as Delaunay triangulations and nearest neighbor searching. We establish several key properties in arbitrary dimension including: (1) bisectors are composed of a set of rays emanating from the apex of the cone, and (2) Voronoi diagrams in the $d$-dimensional Funk metrics are equivalent to additively-weighted Voronoi diagrams in the $(d-1)$-dimensional Funk metrics on bounded cross sections of the cone. Leveraging this, we provide an $O\big( n^{\ceil{\frac{d-1}{2}}+1}\big)$ algorithm for creating these diagrams in $d$-dimensional elliptical cones using a transformation to and from Apollonius diagrams, and an $O(mn\log(n))$ algorithm for $3$-dimensional polygonal cones with $m$ facets via a reduction to abstract Voronoi diagrams. We also provide a complete characterization of when three sites admit a circumcenter in a $3$-dimensional cone. This and a concurrent work \cite{arya2026cauchy} are the first algorithmic studies of the Funk metrics.

\section{Preliminaries} \label{sec:prelims}

In this paper, when we refer to a cone $C\subset \RE^d$ we mean a fully dimensional convex set which is closed under positive scaling (that is, whenever $p\in C$ so is $\beta p$ for any $\beta \geq 0$). We assume our cones are \emph{pointed}, meaning that they contain no lines. This implies that each cone has a unique \emph{apex} at the origin. We denote the boundary of $C$ by $\bd C$ and its interior by $\interior C$. We will examine two types of cones, elliptical and polygonal cones (see Figure \ref{fig:ConeTypes}). For a point $p \in \RE^d$, we write $C_p$ for the cone $C$ translated so that its apex is at $p$ and $C^r_p$ for the cone $-C$ (obtained by reflecting $C$) translated so that its apex is at $p$. We denote the Euclidean distance by $d_2(\cdot,\cdot)$. 

\begin{figure}[h]
    \centering
    \includegraphics[width=.6\textwidth]{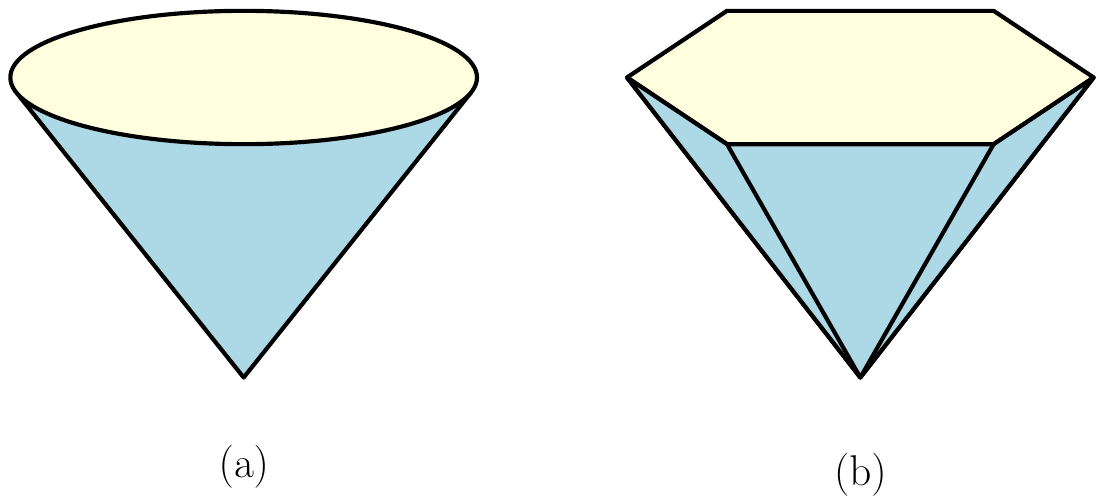}
    \caption{(a) An elliptical cone. (b) A polygonal cone.}
    \label{fig:ConeTypes}
\end{figure}

 Next, let us recall the definition of a metric space and introduce the forward and reverse Funk metrics. In this section we provide a summary of well known results \cite{papadopoulos2014handbook,lemmens2013birkhoff} with proofs for completeness.

\begin{definition}[Metric Space]
\label{def:metric}
The pair $(\mathbf{X},d)$, where $\mathbf{X}$ is a set and $d:\mathbf{X}\times \mathbf{X} \rightarrow \RE$, is a \textbf{metric space} if for all $ a,b,c \in \mathbf{X}$:
\begin{enumerate}
    \item $d(a,b) \geq 0$, $d(a,b) = 0$ if and only if $a=b$
    \item $d(a,b) = d(b,a)$
    \item $d(a,c)\leq d(a,b)+d(b,c)$
\end{enumerate}
\end{definition}

Metric spaces generalize the notion of a distance on a set. When some of the three properties fail we call $d$ a \emph{weak metric}. To define the forward and reverse Funk metrics, we begin by defining a partial order on $C$.

\begin{definition}[Conical Partial Order]
    Given a convex cone $C$, we define the following partial order on $\interior C$: for $a, b \in \interior C$, we say $a\leq_C b$ if and only if $b-a \in C$. Equivalently, $a\leq_C b$ if and only if $b \in C_a$ (see Figure~\ref{fig:HasseDef} in the Appendix).
\end{definition}

We define the forward and reverse Funk metrics in a cone through this partial order.

\begin{definition}[Forward Funk metric] \label{def:infCone}
For $a,b \in \interior C$, the forward Funk metric $F_C  (a,b)$ is
\[
    F_{C}(a,b)
        ~ = ~  \ln \inf \{\beta \in \RE \mid a \leq_C \beta b\}.
\]
\end{definition}

\begin{definition}[Reverse Funk metric] \label{def:revinfCone}
For $a,b \in \interior C$, the reverse Funk metric is \[F^r_C(a,b) = F_C(b,a).\]
\end{definition}

Note that these definitions generalize the standard definition of the Funk metric in a bounded convex domain to a cone in a way that is consistent with the Hilbert metric definition on the cone \cite{lemmens2013birkhoff}. In particular, for any $a,b \in \Omega$ where $\Omega$ is a bounded cross section of $C$, the restriction satisfies $F_\Omega (a,b)=F_C(a,b)$. The Funk metrics are weak metrics on $C$ (see Definition \ref{def:metric}) that satisfy neither the positivity property (property 1) nor the symmetry property (property 2). They only satisfy the triangle inequality (property 3, see Lemma \ref{lem:triangleInequality}).

In bounded convex domains, the Funk metric is always positive. In unbounded regions, however, numerators and denominators in the standard Funk formulation (see Remark 2.3 \cite{lemmens2013birkhoff}) can be infinite. The cone-based definitions above (see Definitions \ref{def:infCone} and \ref{def:revinfCone}), which follow Birkhoff's version of the Hilbert metric \cite{lemmens2013birkhoff} handle this naturally. A consequence of these definitions is that the Funk metrics become signed distance functions and can take negative values. When $b$ is in $C_a$, the value of $\ln(\beta)$ is negative; when $b$ is on the boundary of $C_a$, $\ln(\beta)$ is $0$; and when $b$ is below $C_a$, $\ln(\beta)$ is positive (see Figure \ref{fig:FunkDistance} (a)-(c)). For completeness, we prove straight lines are geodesics and that the triangle inequality holds in the Funk metrics in the Appendix (Lemmas \ref{lem:triangleInequality} and \ref{lem:lineGeodesic}).

\begin{figure}[h]
    \centering
    \includegraphics[width=1.0\textwidth]{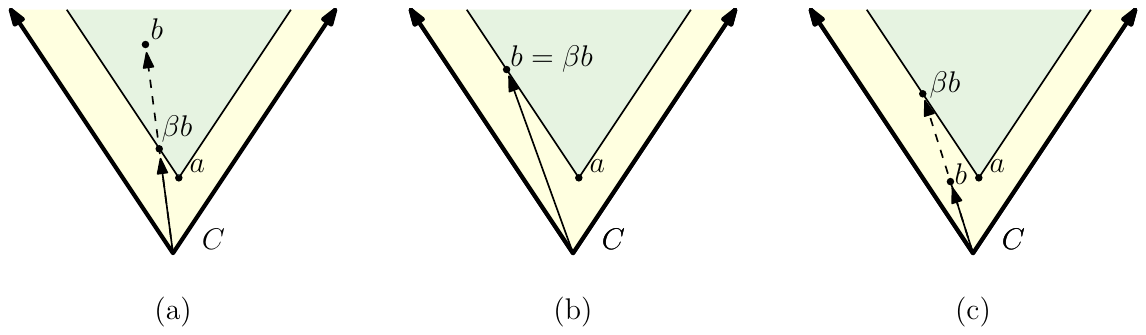}
    \caption{Two points $a$ and $b$ where (a) $F_C(a,b)<0$, (b) $F_C(a,b)=0$, (c) $F_C(a,b)>0$.}
    \label{fig:FunkDistance}
\end{figure}

Since the infimum definition of the forward Funk metric is not directly amenable to computation, we provide an alternative. To compute $F_C(a, b)$, it suffices to find the $\beta$ such that $\beta b$ intersects $\bd C_a$ (see Figure \ref{fig:FunkDistance}). This geometric observation leads to an equivalent and more computationally feasible definition using similar triangles. 

\begin{proposition} \label{prop:TriangleCone}
Let $a, b \in \interior C$. Consider the $2$-dimensional slice through $a$, $b$, and the origin. Let $R$ be the boundary ray in this slice such that the ray from the origin through $b$ intersects $\bd C_a$ on a side parallel to $R$. Then, $F_{C}(a,b) = \ln \frac{d_2(a, R)}{d_2(b, R)}$, where $d_2(a,R)$ and $d_2(b,R)$ denote the perpendicular Euclidean distances from $a$ and $b$ to $R$ respectively.
\end{proposition}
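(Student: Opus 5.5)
We prove the statement by reducing it to a planar similar-triangles computation.

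\textbf{Step 1: identify the infimum with a crossing of $\bd C_a$.} The set $\{\beta > 0 \mid a \leq_C \beta b\}$ equals $\{\beta \mid \beta b \in C_a\}$. Since $b \in \interior C$, the ray $\{\beta b : \beta>0\}$ is a convex set. Hence its intersection with the convex set $C_a$ is a convex subset of the ray.

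We claim this subset is nonempty and unbounded. Indeed, $\beta b - a \in C$ for all large $\beta$, because $b - a/\beta \to b \in \interior C$ and $C$ is closed under positive scaling. The subset also excludes small $\beta$: as $\beta \to 0$ we have $\beta b - a \to -a \notin C$, since $C$ is pointed and $a \neq 0$.

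So the admissible set is a closed half-line $[\beta^*, \infty)$. Here $\beta^* b$ is the unique point where the ray from the origin through $b$ enters $\bd C_a$, and $F_C(a,b) = \ln \beta^*$.

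\textbf{Step 2: reduce to the plane $P$.} Let $P$ be the $2$-dimensional plane through $0$, $a$, $b$. If $a$ and $b$ are collinear with $0$, the statement degenerates and we handle it directly: then $b = \lambda a$ and $\beta^* = 1/\lambda$, which matches the distance ratio for either boundary ray.

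Otherwise, $P \cap C$ is a planar pointed cone bounded by two rays from the origin. Also $P \cap C_a = a + (P \cap C)$, because $a \in P$ means $P - a = P$. Its boundary therefore consists of two rays from $a$, each parallel to one boundary ray of $P\cap C$.

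Since $\beta b \in P$, the crossing point $q = \beta^* b$ lies on one of these two sides. Let $R$ be the boundary ray of $P \cap C$ parallel to that side; this is exactly the $R$ in the statement.

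\textbf{Step 3: similar triangles.} The side of $\bd C_a$ containing $q$ lies on the line $a + \ell$, where $\ell$ is the line spanned by $R$. The perpendicular distance from any point of $a + \ell$ to $\ell$ equals $d_2(a,R)$. We may measure distance to the line $\ell$ rather than the ray $R$, since the perpendicular foot lies on $R$ itself: points of $C$ project onto $R$ within the slice.

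Distance to the line $\ell$, which passes through the origin, is positively homogeneous: $d_2(\beta b, \ell) = \beta\, d_2(b,\ell)$. Applying this at $q$ gives
\[
\beta^* d_2(b,R) = d_2(a,R), \qquad\text{so}\qquad F_C(a,b) = \ln \frac{d_2(a,R)}{d_2(b,R)}.
\]

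\textbf{Main obstacle.} The delicate point is the justification in Step 3 that perpendicular distance to the ray $R$ equals distance to its supporting line for the relevant points. We handle it by noting that $a$ and $b$ lie in the interior of the planar cone $P \cap C$, so their orthogonal projections onto $\ell$ land on the $R$ half of $\ell$. Alternatively, we can simply interpret $d_2(\cdot,R)$ as distance to the line spanned by $R$.

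A secondary point is the existence of the crossing in Step 1, which uses pointedness of $C$.
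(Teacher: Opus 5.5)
Your argument is correct and follows the paper's route. You locate $\beta^* b$ where the ray through $b$ meets the side of $\bd C_a$ parallel to $R$. Along that side the perpendicular distance to the line of $R$ equals $d_2(a,R)$, and this distance scales by $\beta$ along rays from the apex. Your Step 1 (the admissible set is exactly $[\beta^*,\infty)$) and the collinear case are details the paper leaves implicit.

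One correction to Step 3 and your ``main obstacle'' paragraph. The claim that points of the planar cone project orthogonally onto the $R$ half of $\ell$ is false when $P\cap C$ has opening angle greater than $\pi/2$. For example, take boundary rays at angles $0$ and $150^\circ$, and let $a,b$ have unit length at angles $140^\circ$ and $120^\circ$.
\begin{itemize}
\item The ray through $b$ meets the side $a+\{t(1,0): t\ge 0\}$ of $\bd C_a$, so $R$ is the ray at angle $0$.
\item Both $a$ and $b$ project onto the negative half of $\ell$, so their distances to the ray $R$ are both $1$.
\item Yet $\beta^* = \sin 40^\circ/\sin 60^\circ \approx 0.74$.
\end{itemize}
So under a distance-to-the-ray reading the formula is actually false. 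Your ``alternative'' reading, distance to the line spanned by $R$, is not optional. It is what ``perpendicular distance'' in the statement means, and with it (dropping the false projection claim) your proof is complete.
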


\begin{proof}
    In the 2-dimensional slice let $\beta b$ be where the ray through the origin and $b$ intersects $\bd C_a$. Since this boundary ray is parallel to $R$ we can see that $d_2(\beta b, R)=d_2(a,R)$. By similar triangles we can see that $\frac{d_2(\beta b, R)}{d_2(b,R)}=\beta$. Hence  $\frac{d_2(a, R)}{d_2(b,R)}=\beta$. Taking the natural log of both sides gives the result (see Figure \ref{fig:similarTriangles}(a)-(c)).
\end{proof}

 Recall that the ball of radius $\rho$ around $p$ is $B_\rho(p) = \{q : d_2(p,q) \leq \rho\}$. We now characterize balls in the forward and reverse Funk metrics. We refer to $B_\rho ^F(a)$ and $B_\rho^{rF}(a)$ as the forward and reverse Funk balls of radius $\rho$ around a point $a$.

\begin{lemma}\label{lem:ForwardBallCone}
Given $a \in \interior C$ and $\rho \in \RE$, the forward and reverse Funk balls satisfy:
\begin{enumerate}
\item[$(i)$] $B_\rho^F(a) = C_{e^{-\rho}a}$
\item[$(ii)$] $B_\rho^{rF}(a) = C^r_{e^{\rho}a}$.
\end{enumerate}
\end{lemma}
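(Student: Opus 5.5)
We must first fix what the balls are: $B_\rho^F(a)=\{q\in\interior C : F_C(a,q)\le\rho\}$ and $B_\rho^{rF}(a)=\{q\in\interior C : F^r_C(a,q)\le\rho\}=\{q : F_C(q,a)\le\rho\}$, with centre $a$ in the first argument, in line with $B_\rho(p)$. The plan is to unwind the infimum in Definition~\ref{def:infCone} into a membership condition in a translated cone. We then rescale using the fact that $C$ is closed under positive scaling, which means $\leq_C$ is preserved under multiplication by positive scalars.

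For $(i)$, fix $q\in\interior C$ and set $S_q=\{\beta>0 : a\le_C \beta q\}$, so that $F_C(a,q)=\ln\inf S_q$. First, $S_q$ is upward closed. If $\beta\in S_q$ and $\beta'\ge\beta$, then $\beta' q-a=(\beta q-a)+(\beta'-\beta)q\in C$, because $C$ is a convex cone and $q\in C$. Second, $S_q$ is closed, because $C$ is closed. Hence $S_q=[\beta_0,\infty)$, the infimum $\beta_0$ is attained, and $F_C(a,q)\le\rho$ holds iff $e^{\rho}\in S_q$, i.e.\ iff $e^{\rho}q-a\in C$. Dividing by $e^{\rho}>0$, this is equivalent to $q-e^{-\rho}a\in C$, that is $q\in C_{e^{-\rho}a}$. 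Every point of $C_{e^{-\rho}a}$ lies in $C$, since $C+C\subseteq C$, so intersecting with $\interior C$ causes no discrepancy beyond boundary conventions. One should also note that $S_q$ is nonempty: since $q\in\interior C$, for large $\beta$ the point $q-a/\beta$ lies in $C$.

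For $(ii)$, $F_C(q,a)\le\rho$ holds iff $e^{\rho}a-q\in C$, by the same attainment argument with the roles of the two points swapped. This is equivalent to $q\in e^{\rho}a-C=C^r_{e^{\rho}a}$. Here we intersect with $\interior C$, since $C^r_{e^{\rho}a}$ itself extends outside $C$.

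The main obstacle is the attainment step, namely that the infimum is a minimum, so that "$\le\rho$" corresponds exactly to a closed translated cone. This needs upward closure and closedness of $S_q$, both argued above. The statement should be read with the balls understood as subsets of $\interior C$.
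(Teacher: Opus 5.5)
Your proof is correct, but it takes a different route from the paper's. The paper passes to the two-dimensional slice through $a$, $p$ and the apex and invokes the similar-triangles formula of Proposition~\ref{prop:TriangleCone}, $F_C(a,p)=\ln\frac{d_2(a,R)}{d_2(p,R)}$ for the appropriate boundary ray $R$. Since $d_2(e^{-\rho}a,R)=e^{-\rho}d_2(a,R)$, the condition $F_C(a,p)\le\rho$ becomes $d_2(e^{-\rho}a,R)\le d_2(p,R)$, which the paper then reads off geometrically as $p\in C_{e^{-\rho}a}$; the reverse case is declared analogous.

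You instead work directly with the infimum in Definition~\ref{def:infCone}. The set of feasible scalars is an up-set (because $C$ is closed under addition) and closed (because $C$ is closed), so the infimum is attained. Hence $F_C(a,q)\le\rho$ is exactly $e^{\rho}q-a\in C$, and positive homogeneity finishes.

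What your approach buys:
\begin{itemize}
\item It is dimension-free and needs no slicing or selection of the relevant boundary ray.
\item It makes explicit the attainment of the infimum, which the similar-triangles formula tacitly presupposes when it takes $\beta b$ to be the point where the ray hits $\bd C_a$.
\item It handles $(ii)$ by an actual computation rather than by analogy. This includes your correct observation that $C^r_{e^{\rho}a}$ must be intersected with the domain.
\end{itemize}
The paper's route, in exchange, keeps the distance-to-boundary picture that it reuses later for bisectors.

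One small sharpening: in $(i)$ there is no boundary discrepancy at all. Since $e^{-\rho}a\in\interior C$ and $\interior C+C\subseteq\interior C$, you have $C_{e^{-\rho}a}\subseteq\interior C$.
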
    

\begin{proof}We show that $p \in B_\rho^F(a)$ if and only if $p \in C_{e^{-\rho}a}$. Consider the 2D slice of the cone given by the plane through $a$, $p$, and the origin. By the similar triangles formula, $F_C(a,p) = \ln \frac{d_2(a,R)}{d_2(p,R)}$, where $R$ is the appropriate ray of $C$. Since scaling a point by $e^{-\rho}$ scales its distance to $R$ by the same factor, we have $d_2(e^{-\rho}a, R) = e^{-\rho}d_2(a,R)$. Therefore,
\[
F_C(a,p) \leq \rho \iff \frac{d_2(a,R)}{d_2(p,R)} \leq e^\rho \iff \frac{e^{-\rho}d_2(a,R)}{d_2(p,R)} \leq 1 \iff d_2(e^{-\rho}a, R) \leq d_2(p,R).
\]
This holds if and only if $p \in C_{e^{-\rho}a}$. The reverse Funk is analogous.\end{proof}

\begin{figure}[h]
    \centering
    \includegraphics[width=.9\textwidth]{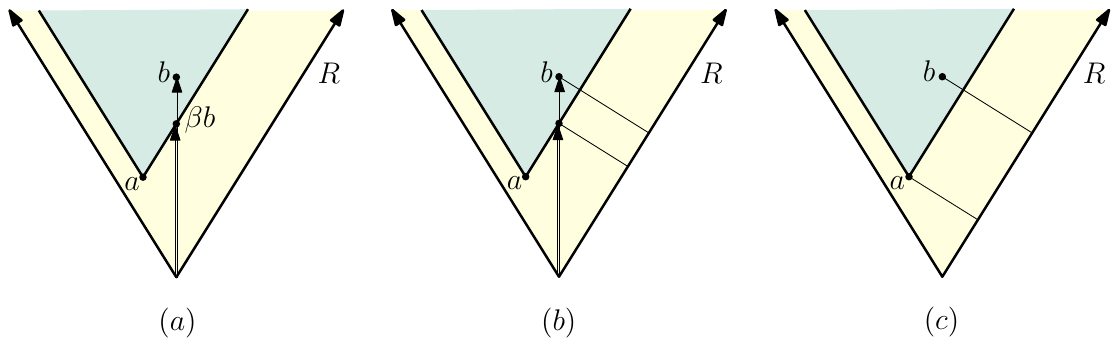}
    \caption{Computing $F_C(a,b)$: (a) scaling $b$ to $\bd C$, (b) using a ratio of distances to the boundary, (c) using the distance of $a$ to the boundary.}
    \label{fig:similarTriangles}
\end{figure}
\section{Funk Conical Voronoi Diagrams}

Let $S=\{s_1,\dots,s_n\}\subset \interior    C$ be a set of sites. Given a metric $\zeta$ the Voronoi cell of a site $s$ is:\[V(s)=\{q\in C\mid \zeta(s,q)\leq \zeta(s',q) \text{ for all }s' \in S\}\]
When sites are weighted $S=\{(s_1,w_1),\dots,(s_n,w_n)\}\subset \interior C$ the Voronoi cell of a site $s$ is: 
\[V((s,w))=\{q\in C\mid \zeta(s,q)+w\leq \zeta(s',q) +w'\text{ for all }(s',w') \in S\}\]
The bisector between two sites $J(s,s')$ is the set of points equidistant to both.
Since the Funk metrics are asymmetric, we define the following conventions:
\begin{enumerate}
    \item The \emph{forward Funk} Voronoi cell of $s$ is  $\{q\in C\mid F_C(s,q)\leq F_C(s',q) \text{ for all }s' \in S\}$
    \item The \emph{reverse Funk} Voronoi cell of $s$ is  $\{q\in C\mid F^r_C(s,q)\leq F^r_C(s',q) \text{ for all }s' \in S\}$
\end{enumerate}
Weighted versions for both are defined analogously using the weighted Voronoi cell definition.

\subsection{Empty Voronoi Cells} 

We begin by studying Voronoi diagrams in the Funk metrics in arbitrary convex cones. We make the general position assumption that no site lies on the boundary of a cone centered at another site. We first characterize when a site has an empty Voronoi cell. In the next section, we show that forward and reverse Funk Voronoi cells are star-shaped. We assume the cone is convex with apex at the origin.

\begin{lemma}\label{lem:ForwardDominate}
    Given sites $a, b \in \interior C \subset \RE^d$, if $b \in \interior  C_a$ then for all $c \in \interior C$, $F_C(a,c) < F_C(b,c)$ (see Figure~\ref{fig:FunkDistance}(a)). In this case we say that $a$ \textbf{dominates} $b$ in the Funk metric.
\end{lemma}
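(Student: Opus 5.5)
Since $F_C(x,c) = \ln \inf\{\beta \mid x \leq_C \beta c\}$, it is enough to compare the feasible sets of $\beta$ for $x=a$ and $x=b$. Write $\beta_a(c)$ and $\beta_b(c)$ for the two infima, and take $b-a\in\interior C$. The plan is to show that every $\beta$ feasible for $b$ is feasible for $a$, and that the infimum for $a$ is strictly smaller.

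First I record two facts.
\begin{itemize}
\item \emph{Transitivity.} $C$ is a convex cone, so $C+C\subseteq C$, and moreover $\interior C + C\subseteq \interior C$.
\item \emph{Feasible sets are rays.} For fixed $x,c\in\interior C$ the feasible set $\{\beta : \beta c - x\in C\}$ is a closed ray $[\beta_x(c),\infty)$. It is upward closed because $\beta' c - x = (\beta c - x) + (\beta'-\beta)c$ and $c\in C$. It is nonempty because $c\in\interior C$, so $c - x/\beta \in C$ for $\beta$ large. Its infimum is positive because the cone is pointed: if $\beta\to 0$ then $-x\in C$, which together with $x\in C$ contradicts pointedness. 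It is closed because $C$ is closed.
\end{itemize}
Consequently $\beta_x(c)$ is attained, and $\beta_x(c)c - x\in\bd C$. If it were in $\interior C$, we could decrease $\beta$ slightly.

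Now take $\beta=\beta_b(c)$, so that $\beta c - b\in C$. Then
\[
\beta c - a = (\beta c - b) + (b-a) \in C + \interior C\subseteq \interior C.
\]
So $\beta$ is feasible for $a$, and $\beta c - a$ lies in the interior. Since $\interior C$ is open, a slightly smaller $\beta' < \beta$ still has $\beta' c - a\in\interior C$. Hence $\beta_a(c)\le \beta' < \beta_b(c)$. Taking logarithms gives $F_C(a,c) < F_C(b,c)$.

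The main point needing care is strictness. It rests on $\interior C + C \subseteq \interior C$: if $u\in\interior C$ has a ball $B_\epsilon(u)\subseteq C$, then $B_\epsilon(u)+v\subseteq C$ for $v\in C$. Combined with attainment of the infimum, this yields a strictly smaller feasible $\beta$. Alternatively, one could argue via Lemma~\ref{lem:ForwardBallCone}: $c$ lies on $\bd C_{\beta_b^{-1} b}$, and $C_{\beta_b^{-1}b}\subseteq \interior C_{\beta_b^{-1}a}$ because $b\in\interior C_a$ and scaling preserves $\leq_C$. The direct argument above is cleaner, so I would use it.
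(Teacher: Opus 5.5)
Your proof is correct and follows the same route as the paper: every $\beta$ feasible for $b$ is also feasible for $a$, because $(\beta c - b) + (b-a) \in C$, which is the containment $C_b \subseteq C_a$. The one real difference is in your favor. The paper derives the \emph{strict} inequality directly from the proper inclusion $C_b \subsetneq C_a$, which by itself only gives $\le$; that inclusion also holds when $b \in \bd C_a$, where $F_C(a,c) = F_C(b,c)$ can occur for some $c$. Your use of the attainment of $\beta_b(c)$ together with $\interior C + C \subseteq \interior C$ is exactly what makes the strictness rigorous.
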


\begin{proof}
 Recall from Definition~\ref{def:infCone}, 
\[
    F_C(a,c) 
        ~ = ~ \ln \inf\{\beta \in \RE \mid \beta c \in C_a\} \text{ and } F_C(b,c) 
        ~ = ~ \ln \inf\{\beta \in \RE \mid \beta c \in C_b\}.
\]
Since $b \in \interior C_a$, we have $C_b \subsetneq C_a$ so every $\beta$ satisfying $\beta c \in C_b$ also satisfies $\beta c \in C_a$. Therefore, the infimum over the larger set $\{\beta : \beta c \in C_a\}$ is smaller: $F_C(a,c) < F_C(b,c)$.
\end{proof}

From this lemma, we obtain the following immediate consequence.

\begin{corollary}
    Let $S$ be a set of sites in $C$. If $a \leq_C b$ for some $a, b \in S$, then $b$ has an empty Voronoi cell in the Voronoi diagram of $S$ in the forward Funk.
\end{corollary}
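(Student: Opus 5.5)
The corollary should follow almost directly from Lemma~\ref{lem:ForwardDominate}. The plan is to show that every point $q$ lies strictly closer to $a$ than to $b$ in the forward Funk metric. Then $q$ can never satisfy $F_C(b,q)\le F_C(a,q)$, and so $V(b)=\emptyset$.

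First, translate the hypothesis into the setting of the lemma. By definition of the conical partial order, $a\le_C b$ means $b-a\in C$, that is, $b\in C_a$. By the general position assumption, no site lies on the boundary of a cone apexed at another site, so $b\notin\bd C_a$. Since $a\neq b$ are distinct sites, this gives $b\in\interior C_a$. Here I would note that $b=a$ is excluded because the sites are distinct; otherwise the statement would have to be read with ties.

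Next, apply Lemma~\ref{lem:ForwardDominate} with this $a$ and $b$. It gives $F_C(a,q)<F_C(b,q)$ for every $q\in\interior C$. Hence no interior point satisfies the defining inequality $F_C(b,q)\le F_C(s',q)$ for $s'=a$. The only remaining issue is points $q\in\bd C$, since the Voronoi cell is defined over $C$ while the Funk metric is defined on $\interior C$. I would handle this by adopting the convention, implicit in the definitions, that the cells are taken in $\interior C$ where the metric is defined. Alternatively, one can extend $F_C$ to $\bd C$ by the same infimum formula and observe that the strict containment $C_b\subsetneq C_a$ still yields a strict inequality, because $b$ is interior to $C_a$ and so $\bd C_b$ is strictly separated from $\bd C_a$ along each ray.

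The only delicate point, and hence the main obstacle, is this boundary and non-strictness bookkeeping: using general position to upgrade $b\in C_a$ to $b\in\interior C_a$, and making sure the strict inequality from the lemma covers the whole domain over which cells are defined. Everything else is immediate.
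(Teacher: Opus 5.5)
Your argument is correct and matches the paper's. The paper gives no separate proof and presents the corollary as an immediate consequence of Lemma~\ref{lem:ForwardDominate}. Your use of the general position assumption to upgrade $a\leq_C b$ (i.e.\ $b\in C_a$) to $b\in\interior C_a$ is exactly the step the paper leaves implicit.

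One correction to your boundary discussion: keep the convention that cells are taken in $\interior C$, and drop the alternative extension argument, because it is false.

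\begin{itemize}
\item Since $a,b\in\interior C$, both $C_a$ and $C_b$ lie in $\interior C$.
\item For $q\in\bd C$, the ray $\{\beta q:\beta>0\}$ stays in $\bd C$, so it meets neither cone.
\item Extending $F_C$ by the infimum formula therefore gives $F_C(a,q)=F_C(b,q)=+\infty$. The comparison is $\infty\le\infty$, not a strict inequality.
\item Under that extension, $\bd C$ would lie in every Voronoi cell, so $V(b)$ would not be empty.
\end{itemize}

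The restriction to $\interior C$ is thus the only reading under which the corollary holds.
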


\begin{lemma}\label{lem:BackwardDominate}
Given two sites $a, b \in \interior C \subset \RE^d$, if $b \in \interior C^r_a$ then for all $c \in \interior C$, $F^r_C(a,c) < F^r_C(b,c)$. In this case we say that $a$ \textbf{dominates} $b$ in the reverse Funk metric.
\end{lemma}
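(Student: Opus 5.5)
We will argue in direct parallel with Lemma~\ref{lem:ForwardDominate}, after unwinding the reverse Funk distance. By Definition~\ref{def:revinfCone}, $F^r_C(a,c)=F_C(c,a)=\ln\inf\{\beta\mid c\leq_C \beta a\}$, and likewise $F^r_C(b,c)=\ln\inf\{\beta\mid c\leq_C\beta b\}$. The condition $c\leq_C \beta a$ means $\beta a - c\in C$, i.e.\ $c\in C^r_{\beta a}$. So the reverse distance from $a$ to $c$ is governed by the reflected cones based at scalar multiples of $a$, and the goal is to compare the feasible sets of $\beta$ for $a$ and for $b$.

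First we translate the hypothesis. The statement $b\in\interior C^r_a$ means $a-b\in\interior C$, i.e.\ $b\leq_C a$ with strict (interior) comparability. The key step is to show that every $\beta$ feasible for $b$ is feasible for $a$. Suppose $\beta b-c\in C$. Feasible $\beta$ are positive, since $c\in\interior C$ and $C$ is pointed. Then
\[
\beta a - c = (\beta b - c) + \beta(a-b)\in C + \interior C\subseteq \interior C,
\]
using that $C$ is a convex cone closed under positive scaling. Hence the feasible set for $a$ contains the feasible set for $b$, which gives $F^r_C(a,c)\le F^r_C(b,c)$.

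Strictness is the only real obstacle; we handle it as in the forward case but more carefully, using interiority. Let $\beta^*>0$ attain the infimum for $b$; the feasible set is a closed ray $[\beta^*,\infty)$ since $C$ is closed. The computation above shows $\beta^* a-c\in\interior C$, so by openness $(\beta^*-\varepsilon)a-c\in C$ for some small $\varepsilon>0$. Thus the infimum for $a$ is at most $\beta^*-\varepsilon<\beta^*$. Taking logarithms gives $F^r_C(a,c)<F^r_C(b,c)$ for every $c\in\interior C$. As an alternative, one could invoke Lemma~\ref{lem:ForwardDominate} via the duality $c\mapsto$ reflection, but the direct containment argument above seems cleanest.
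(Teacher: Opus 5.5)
Your proof is correct and uses the same argument as the paper: decompose $\beta a - c = (\beta b - c) + \beta(a-b) \in C + \interior C \subseteq \interior C$ at the optimal $\beta$ for $b$. Your positivity check on feasible $\beta$ and your openness step (replacing $\beta^*$ by $\beta^*-\varepsilon$) spell out the passage to a \emph{strict} inequality of infima, which the paper leaves implicit. The closing duality remark is not needed.
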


\begin{proof}
Since $b \in \interior C^r_a$, $a \in C_b$. Hence, $a \geq_C b$. Now consider a third point $c \in \interior C$. Let $\eta$ be the infimum value such that $c \leq_C \eta b$, that is, $\eta b - c \in C$. Algebraically, we can see that
\[ 
    \eta a - c 
        ~ = ~ \eta a +\eta b - \eta b - c 
        ~ = ~ \eta (a - b) + (\eta b -c).
\] 
Since both $\eta(a-b)$ (as $a-b\in \interior C$) and $\eta b -c$ are in $C$ we know $\eta a - c \in \interior C$. Hence $\eta a \in C_c$. So: $\inf\{\beta : c \leq_C \beta a\} < \inf\{\beta : c \leq_C \beta b\}.$ By taking logarithms: $F_C(c,a) < F_C(c,b)$, and $F_C^r(a,c) < F_C^r(b,c)$.
\end{proof}

\begin{corollary}
    Let $S$ be a set of sites in $C$. If $b \leq_C a$ for some $a, b \in S$, then $b$ has an empty Voronoi cell in the Voronoi diagram of $S$ in the reverse Funk.
\end{corollary}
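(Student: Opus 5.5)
The corollary follows from Lemma~\ref{lem:BackwardDominate}, applied with the roles of the two sites as given. There is one degenerate case to handle separately.

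First, translate the hypothesis into the form that lemma needs. Saying $b \leq_C a$ means $a - b \in C$, so $a \in C_b$. Equivalently, $b - a \in -C$, which says $b \in C^r_a$. The general position assumption states that no site lies on the boundary of a cone centered at another site. Hence $b \notin \bd C^r_a$, and this holds whenever $a \neq b$: if $b$ were on $\bd C^r_a$, then $a$ would be on $\bd C_b$, which is excluded. Therefore $b \in \interior C^r_a$. We may assume $a \neq b$, since the sites are distinct.

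Next, invoke Lemma~\ref{lem:BackwardDominate}. It gives $F^r_C(a,c) < F^r_C(b,c)$ for every $c \in \interior C$. So no interior point $c$ can satisfy $F^r_C(b,c) \le F^r_C(a,c)$, and the reverse Funk Voronoi cell of $b$ contains no interior point of $C$.

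The main obstacle is the status of points on $\bd C$. The cells are defined as subsets of $C$, but the metric is only defined on $\interior C$. I will adopt the paper's implicit convention that the Voronoi diagram lives in $\interior C$, on which the cell of $b$ is empty as shown. If boundary points must be addressed, I would show that $F^r_C(\cdot, c)$ is either undefined or $+\infty$ there. Alternatively, I would read \emph{empty} as meaning empty interior, which follows from the strict inequality on the dense set $\interior C$.
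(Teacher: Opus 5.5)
Your argument is correct and is the same route as the paper, which gives no separate proof and treats the corollary as an immediate consequence of Lemma~\ref{lem:BackwardDominate}. Your use of the general position assumption, for $a \neq b$ (which the statement implicitly requires, since $b \leq_C b$ always holds), to upgrade $a - b \in C$ to $b \in \interior C^r_a$ supplies exactly the step the paper leaves implicit; your fallback remark on boundary points is slightly off, since for $q \in \bd C \setminus \{0\}$ the natural extension of $F^r_C(b,q)$ is finite rather than $+\infty$, but the remark is not needed, and the lemma's decomposition $\eta a - q = \eta(a-b) + (\eta b - q)$ gives the strict inequality at such $q$ anyway.
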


\subsection{Voronoi Cells}

In this section, and future sections, we will denote the Voronoi cell of a site $p$ as $V(p)$. We prove that Voronoi cells are star-shaped. Recall that a set is star-shaped if there exists a point $p$ in it such that for all $q$ in the set, the line segment from $p$ to $q$ is contained in the set. 

\begin{lemma} \label{lem:star}
The Voronoi cell of a weighted site $(p,w_p)$ in the forward and reverse Funk metrics in a cone $C$ is star-shaped with respect to $p$ (see Figure \ref{fig:FunkCells} (b)).
\end{lemma}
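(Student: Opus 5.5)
The plan is to show directly that for any $q \in V((p,w_p))$ and any point $x$ on the segment $\overline{pq}$, we also have $x \in V((p,w_p))$. Since $C$ is convex and $p,q \in C$, the segment lies in $C$, so $x$ is a legitimate candidate point. The whole argument rests on two facts from the Appendix:
\begin{itemize}
\item the triangle inequality (Lemma~\ref{lem:triangleInequality});
\item additivity of the Funk distance along straight segments (Lemma~\ref{lem:lineGeodesic}), i.e.\ $F_C(u,v) = F_C(u,x) + F_C(x,v)$ whenever $x$ lies on $\overline{uv}$, taken in the direction from $u$ to $v$.
\end{itemize}

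For the forward Funk metric, fix any other weighted site $(s,w_s)$. By the triangle inequality, $F_C(s,q) \le F_C(s,x) + F_C(x,q)$. Since $q \in V((p,w_p))$, we have $F_C(p,q) + w_p \le F_C(s,q) + w_s$. Combining these gives
\[
F_C(s,x) + w_s ~\ge~ F_C(s,q) + w_s - F_C(x,q) ~\ge~ F_C(p,q) + w_p - F_C(x,q) .
\]
Additivity along $\overline{pq}$ says $F_C(p,q) - F_C(x,q) = F_C(p,x)$. Hence $F_C(p,x) + w_p \le F_C(s,x) + w_s$ for every $s$, so $x \in V((p,w_p))$.

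For the reverse Funk metric the same argument runs with the arguments swapped, since $F^r_C(s,x) = F_C(x,s)$. The triangle inequality now gives $F_C(q,s) \le F_C(q,x) + F_C(x,s)$. Membership of $q$ in the cell gives $F_C(q,p) + w_p \le F_C(q,s) + w_s$. Additivity is applied along the segment traversed from $q$ to $p$, giving $F_C(q,p) = F_C(q,x) + F_C(x,p)$. Chaining these exactly as before yields $F_C(x,p) + w_p \le F_C(x,s) + w_s$, which is the required statement in the reverse metric.

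The main point needing care is the additivity step, because the Funk distance is asymmetric and can be negative. We must apply Lemma~\ref{lem:lineGeodesic} in the correct direction in each case: $p \to x \to q$ for the forward metric and $q \to x \to p$ for the reverse metric. We must also check that it holds for signed values in the cone setting and not only in bounded domains. In the cone this follows from Proposition~\ref{prop:TriangleCone}. Along a segment in the 2-dimensional slice through the origin, the relevant boundary ray $R$ is the same for all three points. The distances $d_2(\cdot,R)$ then give ratios that multiply, so their logarithms add. If the appendix lemma only states that lines are geodesics (equality in the triangle inequality), that is exactly the additivity needed, and no further work is required.
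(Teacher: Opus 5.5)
Your proof is correct and uses the same two ingredients as the paper's proof: the triangle inequality and additivity of $F_C$ along the segment (Lemmas~\ref{lem:triangleInequality} and~\ref{lem:lineGeodesic}). The paper argues by contradiction, picking a site strictly closer to an intermediate point, while you argue directly for every competing site; this is the same computation, and your explicit reverse case (additivity along $q \to x \to p$) correctly fills in what the paper only calls analogous.
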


\begin{proof}
Let us consider the forward Funk. Suppose towards a contradiction that there exists a site $(p,w_p)$ and points $x, y \in \interior C$ such that $x \in V(p)$, $y \notin V(p)$, and $y$ lies on the line segment $p$ to $x$. Since $x$ and $y$ lie on the same ray from $p$, with $y$ between $p$ and $x$, by the geodesic property of straight lines (Lemma~\ref{lem:lineGeodesic}):
\[
    F_C(p, x) 
        ~ = ~ F_C(p, y) + F_C(y, x).
\]
Let $(q,w_q)$ be the site closest to $y$. Since $y \notin V(p)$, we have $F_C(q, y)+w_q < F_C(p, y)+w_p$. Since $x \in V(p)$, we have $F_C(p, x) +w_p\leq F_C(q, x)+w_q$. Combining these inequalities:
\[
    F_C(q, x) + w_q 
        ~ \geq ~ F_C(p, x) + w_p
        ~ = ~ F_C(p, y) + F_C(y, x) + w_p
        ~ > ~ F_C(q, y) + w_q + F_C(y, x).
\]
This contradicts the triangle inequality for the Funk metric. Therefore, Voronoi cells are star-shaped with respect to their sites. This applies equally across bounded cross sections. The proof is analogous for the reverse Funk.
\end{proof}

\begin{figure}[h]
    \centering
    \includegraphics[width=.7\textwidth]{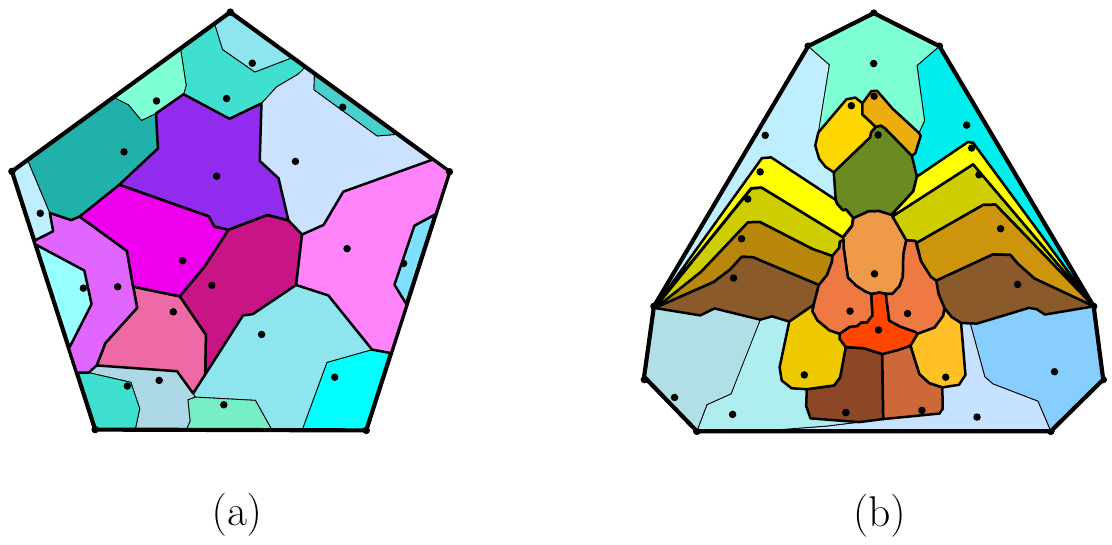}
    \caption{(a) Voronoi cells in the reverse Funk metric, and (b) in the forward Funk metric.}
    \label{fig:FunkCells}
\end{figure}

\subsection{Voronoi Algorithms}

\subsubsection{Bisectors as Rays Through the Origin}

We now proceed to discuss bisectors. We start with the additive properties of the Funk metrics and use them to show that bisectors in the Funk metrics are made out of rays through the apex of the cone (the origin).
\begin{lemma}\label{lem:additive}
Let $p, q \in \interior C$ and $\lambda > 0$. Then the forward and reverse Funk metrics satisfy the following additive decomposition along rays from the apex of the cone: 

\begin{enumerate}
    \item $F_C(p, q) 
        = F_C(p, \lambda p) + F_C(\lambda p, q) = -\ln(\lambda) + F_C(\lambda p, q)$

    \item $F^r_C(p, q) = F^r_C(p, \lambda p) + F^r_C(\lambda p, q) = \ln(\lambda) + F^r_C(\lambda p, q).$
\end{enumerate}
\end{lemma}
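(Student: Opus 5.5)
Work directly from Definition~\ref{def:infCone}, using that $C$ is closed under positive scaling, so the partial order $\leq_C$ is invariant under scaling by positive constants: $x \leq_C y$ iff $\mu x \leq_C \mu y$ for all $\mu>0$. The plan is to show separately that $F_C(p,\lambda p) = -\ln\lambda$ and that $F_C(\lambda p, q) = \ln\lambda + F_C(p,q)$. Adding these gives the first identity. The second identity then follows by swapping the arguments via Definition~\ref{def:revinfCone}.

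\textbf{Step 1: the distance along the ray.} Compute $F_C(p,\lambda p) = \ln\inf\{\beta : p \leq_C \beta\lambda p\}$. The condition is $(\beta\lambda - 1)p \in C$. Since $p\in\interior C$ and $C$ is pointed, this holds exactly when $\beta\lambda \geq 1$: for $\beta\lambda<1$ we would get $-p\in C$ with $p \ne 0$, so $C$ would contain the line through $p$. Hence the infimum is $1/\lambda$, and $F_C(p,\lambda p) = -\ln\lambda$. In the same way $F_C(\lambda p, p) = \ln\lambda$.

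\textbf{Step 2: scaling the first argument.} We have $\lambda p \leq_C \beta q$ iff $p \leq_C (\beta/\lambda) q$, by scale invariance of $C$. Substituting $\beta' = \beta/\lambda$ gives
\[
\inf\{\beta : \lambda p \leq_C \beta q\} = \lambda \inf\{\beta' : p \leq_C \beta' q\}.
\]
Taking logarithms, $F_C(\lambda p, q) = \ln\lambda + F_C(p,q)$, so $F_C(p,q) = -\ln\lambda + F_C(\lambda p,q)$. Combined with Step 1, this is item 1.

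\textbf{Step 3: the reverse metric.} Here $F^r_C(p,q) = F_C(q,p)$ and $F^r_C(\lambda p,q) = F_C(q,\lambda p)$. So I need the scaling law in the second argument. We have $q \leq_C \beta\lambda p$ iff $q \leq_C \beta'' p$ with $\beta'' = \beta\lambda$, so
\[
\inf\{\beta : q\leq_C \beta\lambda p\} = \lambda^{-1}\inf\{\beta'' : q \leq_C \beta'' p\}.
\]
Therefore $F_C(q,\lambda p) = F_C(q,p) - \ln\lambda$, that is, $F^r_C(p,q) = \ln\lambda + F^r_C(\lambda p,q)$. Step 1 gives $F^r_C(p,\lambda p) = F_C(\lambda p,p) = \ln\lambda$, which completes item 2.

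The only delicate point is Step 1. There one must check that the infimum is exactly $1/\lambda$ and not smaller, and this is where pointedness of $C$ is used. The infima are finite because $q\in\interior C$ makes $\beta q - p \in C$ for all large $\beta$. Everything else is a change of variables inside the infimum.
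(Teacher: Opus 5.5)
Your proof is correct, but it takes a different route from the paper's.

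\textbf{The paper's route.} The paper argues geometrically. It passes to the 2D slice through $p$, $q$ and the origin and applies the similar-triangles formula of Proposition~\ref{prop:TriangleCone}. It then asserts that one boundary ray $R$ serves for all three distances, so the ratios $d_2(\cdot,R)$ telescope. The reverse case follows by ``a similar process'' applied to $F_C(q,p)=F_C(q,\lambda p)+F_C(\lambda p,p)$.

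\textbf{Your route.} You stay with the infimum definition and use only that $\leq_C$ is invariant under positive scaling. This yields the two homogeneity laws $F_C(\lambda x,y)=F_C(x,y)+\ln\lambda$ and $F_C(x,\lambda y)=F_C(x,y)-\ln\lambda$. Both items then follow by substitution, with pointedness needed only to fix $F_C(p,\lambda p)=-\ln\lambda$ exactly.

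\textbf{What each buys.} Your version is more self-contained and more careful. It needs no choice of boundary ray; the paper's claim that the same $R$ works for $\lambda p$ tacitly rests on the homothety $C_{\lambda p}=\lambda C_p$, which is precisely your Step 2. It treats uniformly the case where $q$ lies on the ray through $p$, where the paper's slice is not uniquely determined. It also states explicitly the second-argument scaling law that the paper's reverse-Funk step uses without comment. The paper's version keeps the lemma tied to the distance-to-boundary formula that is later used to write bisector equations explicitly, and it gives a more visual reason why the decomposition holds.
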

\begin{proof}
Consider the 2D plane through $p$, $q$, and the origin. Let $R$ be the boundary ray of $C$ in this plane such that the forward Funk distance from $p$ to $q$ is computed using $R$ (via the similar triangles formula). Since $p$ and $\lambda p$ lie on the same ray from the origin, all three points $p$, $\lambda p$, and $q$ lie in this same plane, and we can compute all distances using the same boundary ray $R$ (by similar triangles the distance between $p$ and $\lambda p$ could be calculated with respect to any ray through the apex). We have:
\begin{align*}
    F_C(p, \lambda p) + F_C(\lambda p, q) 
        & ~ = ~ \ln \frac{d_2(p, R)}{d_2(\lambda p, R)} + \ln \frac{d_2(\lambda p, R)}{d_2(q, R)}
          ~ = ~ \ln \frac{d_2(p, R)}{d_2(\lambda p, R)} \frac{d_2(\lambda p, R)}{d_2(q, R)} \\
        & ~ = ~ \ln \frac{d_2(p, R)}{d_2(q, R)} 
          ~ = ~ F_C(p, q).
\end{align*}
Note that $\ln \frac{d_2(p, R)}{d_2(\lambda p, R)} = -\ln(\lambda)$. For the reverse Funk we recall that $F^r_C(a,b) = F_C(b,a)$. Using a similar process as before:
\begin{align*}
    F^r_C(p, q) 
        & ~ = ~ F_C(q, p)
          ~ = ~ F_C(q, \lambda p) + F_C(\lambda p, p) \\
        & ~ = ~ F^r_C(\lambda p, q) + F^r_C(p, \lambda p)
          ~ = ~ F^r_C(p, \lambda p) + F^r_C(\lambda p, q).
\end{align*}The rest follows from: $F^r_C(p,\lambda p) = F_C(\lambda p,p) = \ln(\lambda)$.
\end{proof}

Adding these provides an alternative proof of the Hilbert metric's projective invariance. Next we show that bisectors in the Funk metrics are made of rays through the origin. 

\begin{theorem}\label{thrm:RaysThroughApex}
The bisector between two sites $a, b \in \interior C$ in the forward and reverse Funk metrics consists of a set of rays emanating from the origin.
\end{theorem}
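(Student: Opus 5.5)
The plan is to show that membership in the bisector is invariant under positive scaling of the query point. If $q$ lies on the bisector $J(a,b)$, then every $\lambda q$ with $\lambda>0$ also lies on it. The bisector is then a union of open rays from the origin.

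\emph{Step 1: the scaling identity for the forward Funk metric.} I would apply Lemma~\ref{lem:additive} with the query point as the scaled argument. By the definition of the forward Funk metric,
\[
F_C(a,\lambda q) ~=~ \ln\inf\{\beta \mid a\leq_C \beta\lambda q\} ~=~ \ln\Big(\tfrac{1}{\lambda}\inf\{\beta' \mid a\leq_C \beta' q\}\Big) ~=~ F_C(a,q)-\ln\lambda,
\]
using the substitution $\beta'=\beta\lambda$. The same identity can be read off from Lemma~\ref{lem:additive} together with the triangle equality along the ray through $q$, since $F_C(q,\lambda q)=-\ln\lambda$. The identity also holds with $a$ replaced by $b$.

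\emph{Step 2: bisectors are unions of rays.} Subtracting the two identities gives
\[
F_C(a,\lambda q)-F_C(b,\lambda q) ~=~ F_C(a,q)-F_C(b,q),
\]
so the sign of the difference is constant along each open ray $\{\lambda q:\lambda>0\}$. In particular, $q\in J(a,b)$ if and only if $\lambda q\in J(a,b)$. Hence $J(a,b)$ is a union of open rays from the origin, whose closure adds the apex. The same argument shows that each Voronoi cell is a cone, and with weights the identity is unchanged because the weights cancel in the difference.

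\emph{Step 3: the reverse Funk metric.} For $F^r_C(a,\lambda q)=F_C(\lambda q,a)$ the computation is
\[
\inf\{\beta \mid \lambda q\leq_C \beta a\} ~=~ \lambda\inf\{\beta'' \mid q\leq_C \beta'' a\},
\]
where $\beta''=\beta/\lambda$ and I divide the cone relation by $\lambda>0$. This gives $F^r_C(a,\lambda q)=F^r_C(a,q)+\ln\lambda$, which is the second part of Lemma~\ref{lem:additive} applied with the roles of the points arranged accordingly. The additive term is independent of the site, so the difference is again scale-invariant and the conclusion follows as in Step 2.

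I expect the main obstacle to be bookkeeping rather than any real difficulty. Lemma~\ref{lem:additive} is stated as scaling the first argument $p$, whereas here I need to scale the query point, which is the second argument of $F_C$. The cleanest route is therefore the direct computation from the infimum definition, using that $\leq_C$ is preserved under multiplication by positive scalars, which holds because $C$ is a cone. A second point to check is that the ray $\lambda q$ stays in $\interior C$ for all $\lambda>0$; this is immediate since $\interior C$ is closed under positive scaling.
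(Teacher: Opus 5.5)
Your proof is correct and follows the paper's argument. The paper writes $F_C(a,p)=F_C(a,\lambda p)+F_C(\lambda p,p)$ (and likewise for $b$) and cancels the site-independent term $F_C(\lambda p,p)=\ln\lambda$, which is exactly your identity $F_C(a,\lambda q)=F_C(a,q)-\ln\lambda$. Deriving that identity directly from the infimum definition is, if anything, a tighter justification than the paper's appeal to ``the same reasoning as Lemma~\ref{lem:additive}'', since that lemma is stated for scaling the first argument rather than the query point.
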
 

\begin{proof}
To show that bisectors consist of a set of rays emanating from the origin, we will prove that if $F_C(a, p) = F_C(b, p)$, then $F_C(a, \lambda p) = F_C(b, \lambda p)$ for all $\lambda > 0$. Since $p$ and $\lambda p$ lie on the same ray through the origin, distances from any fixed site decompose additively along this ray (by the same reasoning as Lemma~\ref{lem:additive}). Thus:
\[
    F_C(a, p) 
        ~ = ~ F_C(a, \lambda p) + F_C(\lambda p, p) 
        \text{~~~and~~~} 
    F_C(b, p) 
        ~ = ~ F_C(b, \lambda p) + F_C(\lambda p, p).
\]
Since $F_C(a, p) = F_C(b, p)$ by hypothesis, we have 
\[
    F_C(a, \lambda p) + F_C(\lambda p, p) 
        ~ = ~ F_C(b, \lambda p) + F_C(\lambda p, p).
\]
Subtracting $F_C(\lambda p, p)$ from both sides yields $F_C(a, \lambda p) = F_C(b, \lambda p)$.
Since for any point $p$ on the bisector, $F_C(a, p) = F_C(b, p)$, we can see that the entire ray from the origin through $p$ lies on the bisector.

For the reverse Funk it can analogously be shown that if $F^r_C(a, p) = F^r_C(b, p)$, then $F^r_C(a, \lambda p) = F^r_C(b, \lambda p)$ for all $\lambda > 0$. 
\end{proof}

\begin{theorem}\label{thrm:reduction}
Calculating the forward and reverse Funk Voronoi diagrams of a set of sites in a cone $C \subset \RE^d$ is equivalent to computing the forward and reverse Funk weighted Voronoi diagrams on any $(d-1)$-dimensional bounded cross-sectional convex region $\Omega$.
\end{theorem}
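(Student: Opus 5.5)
Since the Voronoi diagram is a union of rays from the apex, we pass to a cross section and carry the site sizes along as weights. Fix any bounded $(d-1)$-dimensional cross section $\Omega$ of $C$; for concreteness take $\Omega = C \cap H$, where $H=\{x : \langle u, x\rangle = 1\}$ and $u$ is a vector with $\langle u,x\rangle>0$ for all $x\in C\setminus\{0\}$ (such $u$ exists because $C$ is pointed). Every ray from the origin into $\interior C$ meets $\Omega$ in exactly one point. For each site $s_i$, let $\lambda_i = 1/\langle u, s_i\rangle$ and $\hat s_i = \lambda_i s_i \in \Omega$. For each query point $q\in\interior C$, let $\hat q = q/\langle u,q\rangle\in\Omega$.

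First I will show that each Voronoi cell is a union of rays from the origin. By Theorem~\ref{thrm:RaysThroughApex}, or more precisely by its proof, the difference $F_C(s_i,q)-F_C(s_j,q)$ is unchanged when $q$ is replaced by $\mu q$. The reason is that $F_C(s,q) = F_C(s,\mu q) + F_C(\mu q, q)$ holds for every site $s$ with the same second term. Therefore $q\in V(s_i)$ if and only if $\hat q\in V(s_i)$, and it suffices to compare distances for $\hat q\in\Omega$.

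Next I will convert the sites to weighted sites on $\Omega$ using Lemma~\ref{lem:additive}. Part 1 gives $F_C(s_i,\hat q) = -\ln\lambda_i + F_C(\hat s_i,\hat q)$. By the remark after Definition~\ref{def:revinfCone}, $F_C(\hat s_i,\hat q)=F_\Omega(\hat s_i,\hat q)$. Hence $F_C(s_i,\hat q) = F_\Omega(\hat s_i,\hat q) + w_i$ with $w_i=-\ln\lambda_i=\ln\langle u,s_i\rangle$. So the forward Funk cell of $s_i$ intersected with $\Omega$ is exactly the additively weighted forward Funk cell of $(\hat s_i,w_i)$ in $\Omega$. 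The reverse case is the same computation using part 2 of Lemma~\ref{lem:additive}, with weight $w_i^r=\ln\lambda_i=-\ln\langle u,s_i\rangle$.

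Then I will check that the correspondence runs both ways. Given a weighted diagram in $\Omega$ with weights $w_i$, choose the cone sites $s_i = e^{w_i}\hat s_i$. The conical diagram is then the cone over the weighted diagram, and conversely the weighted diagram is the conical diagram intersected with $\Omega$; the reverse case uses $s_i = e^{-w_i}\hat s_i$. Both maps take linear time, which is what justifies calling the problems ``equivalent'' for algorithmic purposes.

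The step needing the most care is the identification $F_C=F_\Omega$ on the cross section. The paper asserts this without proof, so I would verify it briefly. For $a,b\in\Omega$, the infimal $\beta$ with $\beta b - a\in C$ satisfies $\beta b = a + (\beta b - a)$ with $\beta b - a\in\bd C$. This matches the classical Funk ratio $|ab^*|/|bb^*|$ computed in $\Omega$ by similar triangles, i.e.\ Proposition~\ref{prop:TriangleCone} applied within the 2D slice. I also need to note that the general position assumption, and the behavior of points on $\bd C$ where only the apex ray is excluded, carry over unchanged.
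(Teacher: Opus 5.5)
Your proposal is correct and follows the same route as the paper. Ray-invariance of distance differences (from the proof of Theorem~\ref{thrm:RaysThroughApex}) reduces everything to $\Omega$, Lemma~\ref{lem:additive} turns each site's radial scaling into an additive weight, and your version is just more explicit: it gives the weights $\pm\ln\langle u,s_i\rangle$, the inverse map $s_i=e^{\pm w_i}\hat s_i$, and a check of $F_C=F_\Omega$ on the cross section, which the paper only asserts.
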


\begin{proof}
Since bisectors are composed of rays through the origin (see Theorem \ref{thrm:RaysThroughApex}), the Voronoi diagram on a bounded cross section $\Omega$ of the cone is equivalent to calculating the Voronoi diagram on the entire cone.  Sites are projected through the origin onto $\Omega$ and given additive weights determined by their scaling factor (see Lemma \ref{lem:additive}). A point $p$ belongs to the Voronoi cell of a site $x$ in $C$ if and only if its projection onto $\Omega$ belongs to the weighted Voronoi cell of the projected site (a consequence of Lemma \ref{lem:additive}). 
\end{proof}

\subsubsection{Bisectors in 2-Dimensional Cones}

We provide explicit equations for bisectors in $2$-dimensional cones. This will provide geometric intuition for the higher-dimensional constructions.

\begin{lemma}\label{lem:forward2DBisector}
    Given two sites $p, q \in \interior C \subset \RE^2$ such that neither dominates the other:
    \begin{enumerate}
        \item The \emph{forward Funk bisector} is a ray from the origin through $\bd C_p \cap \bd C_q$. 
        \item The \emph{reverse Funk} bisector is a ray from the origin through $\bd C^r_p \cap \bd C^r_q$.
    \end{enumerate}
\end{lemma}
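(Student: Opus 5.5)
In a 2-dimensional cone $C$ with boundary rays $R_1$ and $R_2$, I would prove both parts by working directly with the ball characterization of Lemma~\ref{lem:ForwardBallCone} together with Theorem~\ref{thrm:RaysThroughApex}. Since bisectors are unions of rays from the origin, it suffices to find which directions (rays through the apex) are equidistant, and to show exactly one such ray exists when neither site dominates.

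\textbf{Forward case.} For a direction $u\in\interior C$ and a site $s$, $F_C(s,\lambda u)=\rho$ exactly when $\lambda u\in\bd C_{e^{-\rho}s}$ by Lemma~\ref{lem:ForwardBallCone}. Equivalently, $e^{F_C(s,u)}$ is the factor $\beta$ such that $\beta u\in \bd C_s$.

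Hence $F_C(p,u)=F_C(q,u)$ iff the ray from the origin in direction $u$ meets $\bd C_p$ and $\bd C_q$ at the same point. So the bisector is precisely the union of rays from the origin through points of $\bd C_p\cap\bd C_q$ lying in $\interior C$.

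It remains to show that $\bd C_p\cap\bd C_q$ is a single point in $\interior C$ when neither site dominates. Each $\bd C_s$ is the union of two rays from $s$, parallel to $R_1$ and $R_2$. Write coordinates in the basis of the directions of $R_1,R_2$, so $p=(p_1,p_2)$ with $p_i>0$; then $C_p=\{x: x_1\ge p_1,\,x_2\ge p_2\}$.

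Non-domination (and general position) means, say, $p_1<q_1$ and $p_2>q_2$. Then $\bd C_p\cap\bd C_q$ is the single corner point $(q_1,p_2)$, which has positive coordinates and so lies in $\interior C$. The bisector is the ray through it.

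\textbf{Reverse case.} The argument is symmetric. By Lemma~\ref{lem:ForwardBallCone}(ii), $F^r_C(s,u)=\ln\gamma$ where $\gamma u\in\bd C^r_s$. The reverse ball $C^r_s=\{x: x_1\le s_1,\,x_2\le s_2\}$, whose boundary meets the boundary of the other reverse ball at $(\min(p_1,q_1),\min(p_2,q_2))=(p_1,q_2)$. This point again lies in $\interior C$.

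\textbf{Main obstacle.} The delicate step is justifying the "iff". In 2D there are two candidate boundary rays, and $F_C(s,u)$ uses whichever side of $\bd C_s$ the ray hits. The clean way around this is the oblique-coordinate description: $F_C(s,u)=\ln\max(s_1/u_1,\,s_2/u_2)$ and $F^r_C(s,u)=\ln\min(u_1/s_1,\dots)^{-1}$, i.e.\ $-\ln\min(u_1/s_1,u_2/s_2)$.

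With these formulas, the equation $\max(p_1/u_1,p_2/u_2)=\max(q_1/u_1,q_2/u_2)$ can be analyzed directly. The left side is decreasing in $u_1/u_2$ relative to the right across the crossing, so the solution set is exactly the direction $u_1/u_2=q_1/p_2$, matching the corner point found above. I would verify this monotonicity carefully, since it is what guarantees the bisector is a single ray rather than a wedge.
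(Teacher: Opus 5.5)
Your proof is correct, but it takes a different route from the paper. The paper fixes the angular order $\langle R_1,p,q,R_2\rangle$ and assigns to each site the boundary ray that governs its distance. It then uses the similar-triangles formula (Proposition~\ref{prop:TriangleCone}) to turn $F_C(p,z)=F_C(q,z)$ into the linear equation $L_1(q)\,L_2(z)=L_2(p)\,L_1(z)$. This is a line through the apex, which the paper then observes passes through $\bd C_p\cap\bd C_q$.

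You instead read off from the definition that $e^{F_C(s,u)}$ is the unique $\beta>0$ with $\beta u\in\bd C_s$. So the bisector is exactly the union of rays from the apex through $\bd C_p\cap\bd C_q$. You then work in coordinates along the boundary rays, where $C_p$ is a translated quadrant. There you compute that this intersection is the single corner $(\max(p_1,q_1),\max(p_2,q_2))$, and $(\min(p_1,q_1),\min(p_2,q_2))$ for the reverse balls.

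The paper's version gives an explicit bisector equation in terms of boundary linear forms. It reuses this as the template for the weighted, sector-by-sector bisectors in polygonal cross sections. Your version gives completeness. The paper's choice of boundary rays is only valid for directions angularly between $p$ and $q$. It takes for granted that there are no bisector rays outside that wedge and that $\bd C_p\cap\bd C_q$ is a single point. Your corner computation establishes both and makes the role of non-domination and general position explicit. Your first step is also dimension-free: in any cone, the forward bisector is the union of rays from the apex through $\bd C_p\cap\bd C_q$.

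For the same reason, the ``main obstacle'' you flag is not an obstacle. Each ray from the apex meets $\bd C_s$ exactly once, so the equivalence needs no case analysis on which side is hit, and the monotonicity check is redundant.

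Do fix two slips in the reverse case. First, if $\gamma u\in\bd C^r_s$ then $F^r_C(s,u)=-\ln\gamma$, not $\ln\gamma$; this is harmless when testing equality. Second, the closed form is $F^r_C(s,u)=\ln\max(u_1/s_1,u_2/s_2)$. The expression $-\ln\min(u_1/s_1,u_2/s_2)$ you wrote equals $F_C(s,u)$. Analyzing that equation would reproduce the forward bisector through $(q_1,p_2)$ instead of the correct reverse bisector through $(p_1,q_2)$.
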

\begin{proof}
    Consider the forward Funk. Let the boundary rays of $C$ be $R_1$ and $R_2$, given by the equations $R_1: c_1 x + c_2 y = 0$ and $R_2: c_3 x + c_4 y  = 0$. Let $p = (x_1, y_1)$ and $q = (x_2, y_2)$ lie in angular order $\langle R_1, p, q, R_2 \rangle$. This ordering determines that distances from $p$ are calculated using $R_1$, while distances from $q$ are calculated using $R_2$. 
    
    We define the linear forms $L_1(x,y) = c_1 x + c_2 y $ and $L_2(x,y) = c_3 x + c_4 y $ corresponding to $R_1$ and $R_2$ respectively. Note that $L_i(x,y)$ gives a value proportional to the signed distance from $(x,y)$ to ray $R_i$. A point $(x,y)$ lies on the Funk bisector if and only if $F_C(p, (x,y)) = F_C(q, (x,y))$. By the similar triangles formula, this equality becomes (we can avoid absolute values because of the angular order and exponentiate to remove logarithms): 
    \[\frac{L_1(x_2, y_2)}{L_1(x,y)} = \frac{L_2(x_1, y_1)}{L_2(x,y)}.\] Cross-multiplying yields: \[L_1(x_2, y_2) \cdot L_2(x,y) = L_2(x_1, y_1) \cdot L_1(x,y).\] Or equivalently: \[L_2(x,y) \cdot L_1(x_2, y_2) - L_1(x,y) \cdot L_2(x_1, y_1) = 0.\]
    
    Since this equation is linear in $x$ and $y$, the bisector is a line. This line passes through both the origin (the apex of $C$), since the linear forms have no constant term, and the intersection point $\bd C_p \cap \bd C_q$, since it is the intersection of the zero radius Funk balls.

    For reverse Funk the situation is analogous with $C^r$ in place of $C$ and the fractions inverted. The resulting formula is as follows:
    \[L_2(x, y)\cdot L_1(x_1,y_1) - L_1(x, y)\cdot L_2(x_2,y_2)=0. \qedhere\]  
\end{proof}

Note that Voronoi diagrams in both 2-dimensional Funk geometries are equivalent to angular sorting after pruning dominated sites.

\subsection{Voronoi Diagrams in d-Dimensional Elliptical Cones }

We now extend our results to $d$-dimensional elliptical cones. Since the \emph{Funk metrics are affine invariants}, we will consider the circular case, to which the elliptical case can be affinely mapped  \cite{papadopoulos2014handbook}. We show that we can compute the forward and reverse Funk $d$-dimensional Voronoi diagrams in the circular cones by computing the Apollonius diagram on a set of disks representing the zero balls of the sites in $(d-1)$-dimensions with additional linear time filtering and relocation. \emph{We focus on the forward Funk in this section, but the reverse Funk is analogous except when noted}.

Consider a $d$-dimensional circular cone $C \subset \RE^d$ with apex at the origin and a set of sites $S=\{s_1,\dots,s_n\}$ in the interior of $C$. By Theorem~\ref{thrm:RaysThroughApex} the bisectors in the Funk metrics consist of rays through the origin. 

To begin, we consider the sites by Euclidean distance from the apex and let $\Omega$ be an orthogonal cut to the cone's axis below the lowest site for the reverse Funk and above the highest for the forward Funk. Note that $\Omega$ is a $(d-1)$-dimensional ball.

\textbf{Constructing Apollonius Sites:} For each site $s_i\in S$, let $B_i=C_{s_i}\cap \Omega$, the ball made from intersecting the Funk zero radius ball of $s_i$ with $\Omega$. For each $s_i$ we define the corresponding Apollonius site $(c_i,w_i)$ where $c_i$ is the Euclidean center of ball $B_i$ with Euclidean radius $w_i$. Note the Euclidean center and radius are different from the Funk center and radius. Denote this new set of sites in the Apollonius as $S'$.

\textbf{Structural Equivalence:} We now establish that the combinatorial structure of the forward and reverse Funk Voronoi diagrams are determined by the Apollonius diagram.

\begin{lemma}\label{lem:tangentBalls}
    If $p$ lies on the bisector of two sites $s_i$ and $s_j$ in the forward Funk Voronoi diagram, then the ball around $p$ of radius $F_C^r(p,s_i)=F_C^r(p,s_j)$ is tangent to $B_i$ and $B_j$.
\end{lemma}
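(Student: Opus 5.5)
The plan is to reduce everything to Euclidean geometry in the circular cone and to read tangency off the equality case of a containment between translated cones. I interpret the ball around $p$ of radius $\rho:=F^r_C(p,s_i)=F_C(s_i,p)$ as the reverse Funk ball $B^{rF}_\rho(p)$, or more precisely the disk it cuts out on $\Omega$ after the rescaling of Theorem~\ref{thrm:reduction}. By Theorem~\ref{thrm:RaysThroughApex} the whole ray through $p$ lies on the bisector. So I may replace $p$ by its radial projection onto $\Omega$, using Lemma~\ref{lem:additive} to track how $\rho$ shifts by $\ln\lambda$ (the shift is the same for $s_i$ and $s_j$).

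The first step is to unwind the definition of the reverse ball. By Definition~\ref{def:infCone}, $F_C(s_i,p)\le\rho$ iff $s_i\le_C e^{\rho}p$, i.e.\ $e^{\rho}p\in C_{s_i}$; this is also Lemma~\ref{lem:ForwardBallCone}(ii), since $s_i\in C^r_{e^\rho p}$. Equality $F_C(s_i,p)=\rho$ means $e^\rho p\in\bd C_{s_i}$. Equivalently, the translated cone $C_{e^\rho p}$ is contained in $C_{s_i}$ and touches it along a common boundary ray.

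The second step makes this Euclidean in the circular cone. Let $\theta$ be the half-angle and $h(\cdot)$ the height along the axis. Then $y\in C_x$ iff the horizontal distance from $y$ to $x$ is at most $\tan\theta\,(h(y)-h(x))$. Hence, for $x$ below $\Omega$, the set $C_x\cap\Omega$ is the disk centered at the vertical projection of $x$ with radius $\tan\theta\,(h(\Omega)-h(x))$. This is how $(c_i,w_i)$ was defined for $B_i$. Let $D$ be the disk $C_{e^\rho p}\cap\Omega$ (or, after rescaling, the disk in $\Omega$ centered at the projection of $p$). The containment $C_{e^\rho p}\subseteq C_{s_i}$ gives $D\subseteq B_i$. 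The shared boundary generator of the two nested translated cones meets $\Omega$ in a common boundary point of $D$ and $B_i$. Two nested disks with a common boundary point are internally tangent: $d_2(\mathrm{center}(D),c_i)=w_i-\mathrm{radius}(D)$. Running the same argument with $s_j$, using the same $\rho$ because $p$ is on the bisector, gives the same disk $D$ tangent to $B_j$. This is exactly the Apollonius-type tangency (equal weighted distance to $(c_i,w_i)$ and $(c_j,w_j)$) needed afterwards.

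I expect the main obstacle to be the bookkeeping rather than any conceptual difficulty. The statement is phrased loosely, so I must fix which object is the ball around $p$: the apex is $e^\rho p$ rather than $p$, and it must be matched correctly to the rescaled point on $\Omega$. I must also check that the height conventions make $D$ nonempty. For the forward case $\Omega$ lies above all sites, so $h(\Omega)>h(s_i)$; I also need $h(\Omega)\ge h(e^\rho p)$, possibly after moving along the ray via Lemma~\ref{lem:additive}. Finally, tangency can degenerate when one site dominates the other; this is excluded because dominated sites are pruned (Lemma~\ref{lem:ForwardDominate}) and by the general position assumption. I would first verify all of this in the 2D axial slice through $s_i$ and $p$, where everything reduces to similar triangles as in Proposition~\ref{prop:TriangleCone}, and then lift to $d$ dimensions by rotational symmetry.
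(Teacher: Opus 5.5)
Your first step, that $F_C(s_i,p)=\rho$ puts $x:=e^{\rho}p$ on $\bd C_{s_i}\cap\bd C_{s_j}$, is the same key fact the paper uses. Your internal-tangency computation is also correct when $h(x)\le h(\Omega)$. The gap is that this case covers only part of the bisector, and your proposed reduction to it fails. From Definition~\ref{def:infCone}, $F_C(s_i,\lambda p)=F_C(s_i,p)-\ln\lambda$, so sliding $p$ to $\lambda p$ gives the apex $e^{\rho-\ln\lambda}\lambda p=x$ again. Thus $x$ is the unique point of the ray on $\bd C_{s_i}\cap\bd C_{s_j}$, and Lemma~\ref{lem:additive} cannot move it.

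Concretely, let $u$ be the radial projection of $p$ onto $\Omega$. Then $h(x)=e^{F_C(s_i,u)}h(\Omega)$, so $h(x)>h(\Omega)$ exactly when $F_C(s_i,u)>0$, i.e.\ when $u\notin B_i$ (equivalently $u\notin B_j$). Since $B_i\subset\interior\Omega$ while the bisector reaches $\bd\Omega$, this case always occurs. Your case is only the portion of the bisector inside $B_i\cap B_j$, which is empty unless the two zero balls meet. On the rest of the bisector $D=C_x\cap\Omega=\emptyset$, and your argument gives nothing. There is a related mismatch: you announce the reverse ball $B^{rF}_\rho(p)$, which by Lemma~\ref{lem:ForwardBallCone}(ii) is $C^r_x$, but the disk you analyse is cut out by the forward cone $C_x$, a different set.

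The missing idea is to use $C^r_x$ itself; this is the paper's one-line argument (the reverse ball's apex sits on $\bd C_{s_i}\cap\bd C_{s_j}$, so the ball touches $B_i$ and $B_j$ from outside). Because $x\in\bd C_{s_i}$, the segment $[s_i,x]$ lies in both $\bd C_{s_i}$ and $\bd C^r_x$. Meanwhile $\interior C_{s_i}\cap\interior C^r_x=\emptyset$: a common interior point $y$ would give $y-s_i,\,x-y\in\interior C$, hence $x\in\interior C_{s_i}$. When $h(s_i)<h(\Omega)<h(x)$, this segment crosses $\Omega$ at a common boundary point, so $C^r_x\cap\Omega$ is externally tangent to $B_i$. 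Writing $\bar x$ for the vertical projection of $x$ onto $\Omega$, this means $d_2(\bar x,c_i)=w_i+\tan\theta\,(h(x)-h(\Omega))$, and likewise for $B_j$.

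Your computation is then precisely the complementary case $h(x)<h(\Omega)$, where $C^r_x$ misses $\Omega$ entirely; the paper's sketch passes over this case. Splitting on the sign gives the full statement $d_2(\bar x,c_i)-w_i=d_2(\bar x,c_j)-w_j=\tan\theta\,(h(x)-h(\Omega))$, with external or internal tangency respectively. Note also that the centre of the tangent disk is $\bar x$, which in general is not the radial projection of $p$, so the parenthetical in your second step should be corrected.
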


\begin{proof}
    This follows from the fact that $C_{F^r(p,s_i) p}$'s apex is at $\bd C_{s_i}\cap \bd C_{s_j}$ and therefore must trace the outside of the balls where they touch (see Figure \ref{fig:Tangency} (a)).
\end{proof}

\begin{corollary}
    If a point $p$ is on the bisector of two sites $s_i$ and $s_j$ in the reverse Funk Voronoi then the ball around $p$ with radius $F_C(p,s_i)=F_C(p,s_j)$ is tangent to $B_i$ and $B_j$.
\end{corollary}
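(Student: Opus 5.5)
The plan is to mirror the proof of Lemma~\ref{lem:tangentBalls}, swapping the roles of $C$ and $C^r$ and of the forward and reverse metrics. In the reverse Funk setting, $\Omega$ is the orthogonal cut below the lowest site. I take the Apollonius disks to be $B_i = C^r_{s_i}\cap\Omega$, the reverse zero balls $B^{rF}_0(s_i)$ of Lemma~\ref{lem:ForwardBallCone}(ii) cut by $\Omega$.

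\textbf{Step 1 (bisector condition).} Let $p$ be on the reverse Funk bisector of $s_i$ and $s_j$. Then
\[
F^r_C(s_i,p)=F_C(p,s_i)=F_C(p,s_j)=F^r_C(s_j,p)=:\rho .
\]
So $s_i$ and $s_j$ both lie on the boundary of the forward Funk ball of radius $\rho$ about $p$. By Lemma~\ref{lem:ForwardBallCone}(i), this boundary is $\bd C_x$ with $x=e^{-\rho}p$.

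\textbf{Step 2 (reflect the incidence).} Next I translate $s_i\in\bd C_x$ into a statement about $x$. The condition $s_i - x\in\bd C$ is equivalent to $x\in\bd C^r_{s_i}$. Hence $x\in \bd C^r_{s_i}\cap\bd C^r_{s_j}$, which is the reverse analogue of the fact used in Lemma~\ref{lem:tangentBalls} that the relevant apex sits on $\bd C_{s_i}\cap\bd C_{s_j}$. Moreover, $x\in C^r_{s_i}$ gives $C^r_x\subseteq C^r_{s_i}$. Because $x$ lies on the boundary, the two translated copies of $-C$ share a common boundary ray, namely the ray from $x$ in direction $x-s_i\in -\bd C$. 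The same holds for $s_j$.

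\textbf{Step 3 (cut by $\Omega$).} I intersect with $\Omega$. The set $C^r_x\cap\Omega$ is a $(d-1)$-ball, since $C$ is circular and $\Omega$ is orthogonal to the axis. It is the cross-section of the ball around $p$ of radius $F_C(p,s_i)$, taken through its apex $x$ (after reflection). It is contained in $B_i$ and meets $\bd B_i$ at the point where the shared boundary ray crosses $\Omega$. A ball inside another ball that touches its boundary is internally tangent to it, so it is tangent to $B_i$, and likewise to $B_j$.

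\textbf{Main obstacle.} The delicate part is Step 3. I need the shared boundary ray to actually reach $\Omega$, which is why $\Omega$ is chosen below the lowest site: then $x$, which lies below $p$ along its ray, is above $\Omega$ whenever the configuration is relevant. I also need to pin down which ball the statement means by ``the ball around $p$'', namely the cross-section $C^r_{e^{-\rho}p}\cap\Omega$. I would handle both points by working in the 2-dimensional slice through the axis and $x-s_i$, exactly as in Proposition~\ref{prop:TriangleCone}.
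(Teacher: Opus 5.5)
Your Steps 1 and 2 are correct. With $\rho=F_C(p,s_i)$ and $x=e^{-\rho}p$ you get $s_i\in\bd C_x$, i.e.\ $x\in\bd C^r_{s_i}\cap\bd C^r_{s_j}$, which is the reflected form of the fact used in Lemma~\ref{lem:tangentBalls}.

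The gap is in Step 3, where you replace the ball by the wrong cone. By Lemma~\ref{lem:ForwardBallCone}(i), the ball around $p$ of radius $\rho$ is $C_x$, which opens upward from $x$. The cone $C^r_x$ is the opposite nappe and is not this ball. The paper's reflected argument pairs the \emph{oppositely} oriented cones $C_x$ and $C^r_{s_i}$. Their intersection is $\{z : x\leq_C z\leq_C s_i\}$. Because $s_i-x\in\bd C$ and every boundary ray of a circular cone is extreme, this set is just the segment $[x,s_i]$. Since $s_i$ lies above $\Omega$, whenever $x$ lies below $\Omega$ the disks $C_x\cap\Omega$ and $B_i$ meet only in the point $[x,s_i]\cap\Omega$, so they are externally tangent. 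This is the configuration of Figure~\ref{fig:Tangency}(b) and of the cone $C'$ in Lemma~\ref{lem:filterReverse}.

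Your nested-cone argument via $C^r_x\subseteq C^r_{s_i}$ handles only the other case, $x$ above $\Omega$. There $C_x\cap\Omega=\emptyset$ and the tangency is internal; for $p\in\Omega$ this means $\rho<0$, i.e.\ $p\in B_i\cap B_j$, a negative Apollonius radius. Your assertion that $x$ is above $\Omega$ ``whenever the configuration is relevant'' is unjustified and false. Indeed, if $x$ lies below $p$ and $p$ is taken in $\Omega$, as in Theorem~\ref{thrm:reduction}, then $x$ lies below $\Omega$. Moreover, $x$ does not depend on where $p$ sits on its ray: by Lemma~\ref{lem:additive}, replacing $p$ by $\lambda p$ shifts $\rho$ by $\ln\lambda$, so the height of $x$ is fixed by the sites.

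Concretely, take $C=\{(u_1,u_2,v): v\geq \sqrt{u_1^2+u_2^2}\}$, $s_{1,2}=(\mp 2.5,0,3)$, and $\Omega$ at $v=1$. The point $p=(0,0,1)$ is on the bisector with $\rho=\ln 2$, so $x=(0,0,\tfrac12)$ lies below $\Omega$. Then $C^r_x\cap\Omega$ is empty. Meanwhile $C_x\cap\Omega$, of radius $\tfrac12$, is externally tangent at $(\mp\tfrac12,0,1)$ to the disks $C^r_{s_{1,2}}\cap\{v=1\}$, which have centres $(\mp2.5,0,1)$ and radius $2$. You need the external-tangency argument for this case, and can keep yours only for $x$ above $\Omega$.
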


\begin{proof}
    This follows as Lemma \ref{lem:tangentBalls}, except the cones are reflected (see Figure \ref{fig:Tangency} (b)).
\end{proof}

\begin{figure}[h]
    \centering
    \includegraphics[width=.7\textwidth]{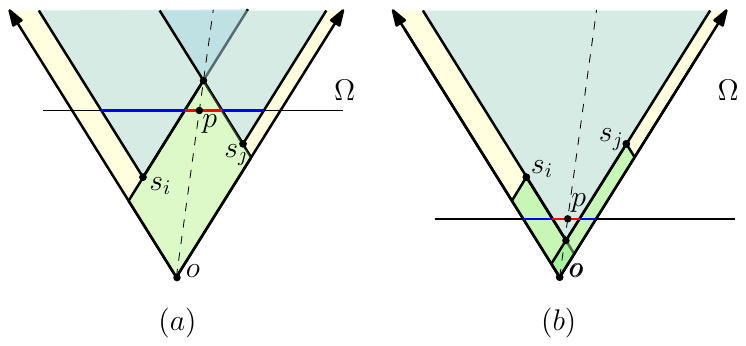}
    \caption{(a) Reverse Funk ball tangent to two forward Funk balls, (b) forward Funk ball tangent to two reverse Funk balls.}
    \label{fig:Tangency}
\end{figure}

\begin{lemma}\label{lem:filterReverse}
Let $v'$ be a vertex in the Apollonius Voronoi diagram determined by sites  $(c_1,w_1),\dots,(c_{d},w_{d})$,
and let $B$ be the corresponding ball tangent to the $d$ Apollonius sites. Then $v'$ corresponds to a vertex in reverse Funk if and only if $B\subset \Omega$.
\end{lemma}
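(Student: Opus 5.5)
The plan is to set up an explicit dictionary between points of $\RE^d$ and balls in the hyperplane of $\Omega$, and then read the condition ``$v'$ is a reverse Funk Voronoi vertex'' through that dictionary. In the reverse Funk case $\Omega$ lies below all sites, and the zero balls are $B_i = C^r_{s_i}\cap\Omega$. Since $C$ is circular with a fixed half-angle, each point $x$ on the far side of $\Omega$ from the sites corresponds to exactly one ball $D(x) = C_x \cap \Omega$ in the hyperplane of $\Omega$. The Euclidean center of $D(x)$ is the orthogonal projection of $x$, and its radius is proportional to the height of $x$ below $\Omega$. The inverse map sends a ball $B$ to the apex $x_B$ of the unique translate of $C$ cutting it out. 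By similarity of the translated cone $C_{x_B}$ with $C$, which is cut by the same hyperplane, we get $x_B\in C$ if and only if $C_{x_B}\subseteq C$, if and only if $B\subseteq\Omega$. Because $\Omega$ is a $(d-1)$-ball coaxial with $C$, this is a direct computation on radii and center distances.

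First I would restate what a reverse Funk Voronoi vertex is. By Theorem~\ref{thrm:RaysThroughApex} and Theorem~\ref{thrm:reduction}, a vertex is a ray from the origin. Take a point $p$ on it. Then
\[
F^r_C(s_1,p)=\dots=F^r_C(s_d,p)=\rho\le F^r_C(s_j,p)\quad\text{for all } j,
\]
that is, $F_C(p,s_i)=\rho$ for the $d$ defining sites. By Lemma~\ref{lem:ForwardBallCone}(i), this means $s_1,\dots,s_d\in\bd C_x$ with $x=e^{-\rho}p\in\interior C$, and no site lies in $\interior C_x$. Membership $s_i\in\bd C_x$ is equivalent to $x\in \bd C^r_{s_i}$, so the cones $C_x$ and $C^r_{s_i}$ are externally tangent. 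This is the content of the Corollary following Lemma~\ref{lem:tangentBalls}. Intersecting with $\Omega$ then shows that $D(x)$ is externally tangent to each $B_i$ and overlaps no other $B_j$. These are exactly the conditions for the center of $D(x)$ to be an Apollonius vertex with tangent ball $D(x)$.

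Both directions then follow.
\begin{itemize}
\item[($\Rightarrow$)] Given a reverse Funk vertex, the construction above produces $x\in\interior C$ with $D(x)=B$ the Apollonius ball at $v'$. Since $x\in C$, we get $B\subseteq\Omega$.
\item[($\Leftarrow$)] Given $B\subset\Omega$, lift it to $x=x_B$; the condition $B\subset\Omega$ puts $x$ in $C$. Tangency of $B$ to $B_1,\dots,B_d$ lifts to $s_i\in\bd C_x$. The emptiness of $B$ with respect to the other Apollonius disks lifts to no $s_j$ lying in $\interior C_x$. Hence the ray from the origin through $x$ has equal reverse Funk distance to $s_1,\dots,s_d$ and larger distance to every other site, so it is a Voronoi vertex.
\end{itemize}

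The main obstacle is making the lifting of emptiness rigorous. One needs
\[
s_j\in\interior C_x \iff \interior C^r_{s_j}\ni x \iff \interior D(x)\cap \interior B_j\neq\emptyset,
\]
i.e., that overlap of the cone-cuts at the single level $\Omega$ faithfully detects intersection of the full cones, together with the correct sign conventions (external rather than internal tangency, and $x$ lying below $\Omega$). I would handle this by intersecting both cones with the $2$-plane through their common axis direction. There the statement reduces to comparing distance between centers with sum of radii, which is linear in the heights and therefore preserved under cutting by any horizontal hyperplane below the sites. Finally, by the general position assumption I would check that boundary cases, such as $B$ touching $\bd\Omega$, do not occur.
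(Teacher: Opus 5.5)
Your proposal is correct and follows the same route as the paper. Both arguments lift $B$ to the apex $x_B$ of the translate of $C$ that cuts it out of the hyperplane of $\Omega$, use that $B\subseteq\Omega$ exactly when $x_B\in C$, and read tangency of $B$ to the zero balls $B_i$ as equidistance of $x_B$ to $s_1,\dots,s_d$. Your version is more complete than the paper's in two respects: you verify $B\subseteq\Omega \iff x_B\in C$ explicitly, and you lift the emptiness of the Apollonius ball to the condition that no other site is strictly closer, which the paper (checking only equidistance) leaves implicit.
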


\begin{proof}
If $B$ is not in $\Omega$ then it does not correspond to a forward Funk ball around a vertex $v$ and hence it cannot be equidistant to the sites. 

If $B$ is in $\Omega$ then consider the cone $C'$ congruent to $C$ whose cross section with $\Omega$ is $B$. Let its apex be $v$. Then the zero ball of $v$ is exactly $B$ and since $B$ is tangent to the zero balls of $s_1,s_2,\dots,s_d$, we know $v$ is equidistant to those sites and $v$ is a reverse Funk vertex.
\end{proof}

\begin{lemma}\label{lem:vertex-existence}
If $d$ sites $s_1,\dots,s_d \in S$ determine a Voronoi vertex in the forward Funk metric in $\Omega$, then the corresponding sites in the Apollonius $(c_1,w_1),\dots,(c_d,w_d)$ determine a vertex in the Apollonius diagram.
\end{lemma}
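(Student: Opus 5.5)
Let $v$ be the forward Funk Voronoi vertex (in $C$) determined by $s_1,\dots,s_d$. By Theorem~\ref{thrm:RaysThroughApex} we may slide $v$ along its ray through the apex and still have a vertex. The plan is to turn the equidistance of $v$ into a ball in $\Omega$ tangent to $B_1,\dots,B_d$, and then check that this ball is the Apollonius ball of $(c_1,w_1),\dots,(c_d,w_d)$.

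\textbf{Normalize the common distance.} Let $\rho = F_C(s_i,v)$, which is the same for all $i$. By Lemma~\ref{lem:additive}, replacing $v$ by $\lambda v$ shifts every $F_C(s_i,\cdot)$ by the same constant. So we may rescale so that $\rho = 0$, meaning $v \in \bd C_{s_i}$ for every $i$. This is the configuration in the proof of Lemma~\ref{lem:tangentBalls}.

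\textbf{Build the tangent ball.} Consider the reflected cone $C^r_v$ and the set $B = C^r_v\cap\Omega$. Since $\Omega$ is an orthogonal cut of a circular cone and $C^r_v$ is congruent to $C$, the set $B$ is a $(d-1)$-ball, or empty or degenerate, which is handled below. Because $v\in\bd C_{s_i}$, the cones $C_{s_i}$ and $C^r_v$ touch along a common boundary ray from $v$ without overlapping in their interiors. Tangency of the cones then gives tangency of their sections $B_i$ and $B$ with $\Omega$, exactly as in Lemma~\ref{lem:tangentBalls}.

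\textbf{Check the tangency is external.} Since $v$ is a Voronoi vertex, no site is closer to $v$. For the defining sites, $F_C(s_i,v)=0$ says $v$ lies on $\bd C_{s_i}$ and not inside it, so $v \notin \interior C_{s_i}$. This gives $\interior C^r_v\cap \interior C_{s_i}=\emptyset$, hence $\interior B\cap\interior B_i=\emptyset$, so $B$ is externally tangent to each $B_i$. For any other site $s$, the inequality $F_C(s,v)\ge 0$ gives $v\notin\interior C_s$, so $B$ does not overlap any other $B_j$.

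\textbf{Translate to the Apollonius diagram.} Let $c$ be the Euclidean center of $B$ and $r$ its radius. External tangency gives $d_2(c,c_i)-w_i=r$ for $i\le d$, and $d_2(c,c_j)-w_j\ge r$ for all other $j$. So $c$ is equidistant to the $d$ Apollonius sites in the additively weighted sense and no other site is closer, which makes $c$ an Apollonius vertex determined by $(c_1,w_1),\dots,(c_d,w_d)$. General position ensures it is a genuine vertex rather than a point on a lower-dimensional face.

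\textbf{Main obstacle.} The hard step is guaranteeing that $B$ is a nondegenerate ball lying where the Apollonius picture needs it, given that $v$ may lie above $\Omega$ or outside the region the Apollonius sites describe. I would first use the freedom of scaling along the ray: the condition $\rho=0$ pins down $\lambda$ uniquely. Then I would use the choice of $\Omega$ above all sites to argue that $v$, lying on $\bd C_{s_i}$, is positioned so that $C^r_v$ meets $\Omega$ in a proper ball. Failing that, I would allow $B$ to be a point or to have negative radius, since the Apollonius conditions still hold formally. In that case the correspondence would rest only on the distance identities, without requiring $B\subset\Omega$, which is consistent with Lemma~\ref{lem:filterReverse} being a one-directional filter.
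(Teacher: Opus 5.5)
Your main line is the paper's argument. The paper's proof observes that the reverse Funk ball about the vertex with the common radius is tangent to the zero balls, so its Euclidean center is equidistant in the Apollonius diagram. By Lemma~\ref{lem:ForwardBallCone}, that ball is exactly your $C^r_v$ after rescaling to $\rho=0$. Your check that no other site is closer is a useful addition that the paper omits.

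The genuine gap is the case you flag as the main obstacle, and your first remedy there cannot work. Choosing $\Omega$ above all sites does not put the normalized $v$ on or above the hyperplane $H$ containing $\Omega$. The common distance is negative exactly when the vertex in $\Omega$ lies inside all $d$ zero balls. In that case the normalized $v$ lies strictly below $H$, so $C^r_v\cap\Omega=\emptyset$ and there is no tangent ball at all. For instance, take $C=\{(x,z)\in\RE^2\times\RE : \|x\|\le z\}$ with $\Omega$ at height $10$, and three sites $(x_i,9)$ where the $x_i$ form an equilateral triangle of circumradius $1/2$ centered on the axis. The normalized vertex is $(0,9.5)$, below $H$. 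Yet the Apollonius sites (centers $x_i$, radii $1$) have their vertex at $0$ with weighted distance $-1/2$. The paper's one-line proof passes over this case too.

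A smaller point: even when $v$ is above $H$, $C^r_v\cap\Omega$ need not be a ball, because $C^r_v\cap H$ can extend past $\bd\Omega$. This is why Lemma~\ref{lem:filterForward} only requires $B$ to meet $\Omega$. You should intersect with $H$ instead of $\Omega$.

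Your fallback is the correct fix, but it has to be proved rather than asserted, and once proved it replaces the tangency argument altogether. Write $C=\{(x,z) : \|x\|\le kz\}$ and $H=\{z=h\}$. A site $s=(x_s,z_s)$ below $H$ then has Apollonius site $c_s=x_s$ and $w_s=k(h-z_s)$. For any point $v=(x_v,z_v)$, $v\in\bd C_s$ iff $\|x_v-x_s\|=k(z_v-z_s)$, which is equivalent to $\|x_v-c_s\|-w_s=k(z_v-h)$. Likewise, $v\notin\interior C_s$ iff $\|x_v-c_s\|-w_s\ge k(z_v-h)$.

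Apply this to your normalized vertex. It shows directly that $x_v$ has weighted distance $k(z_v-h)$ to each of $(c_1,w_1),\dots,(c_d,w_d)$, and no smaller weighted distance to any other Apollonius site, whatever the sign of $z_v-h$. When $z_v\ge h$ this is your external tangency of $C^r_v\cap H$ with the $B_i$. When $z_v<h$ it says that the forward zero ball $C_v\cap H$ is internally tangent to each $B_i$.
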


\begin{proof}
    A point $p$ is a vertex in the forward Funk metric between $d$ sites $s_1,s_2,\dots,s_d$ if and only if the reverse Funk ball around $p$ with corresponding radius is tangent to the zero radius balls around $s_1,s_2,\dots,s_d$. Call this ball $B$. Note that $B$ is therefore tangent to $(c_1,w_1),\dots,(c_d,w_d)$, and hence the Euclidean center of $B$ is equidistant in the Apollonius.
\end{proof}

\begin{figure}[h]
    \centering
    \includegraphics[width=.8\textwidth]{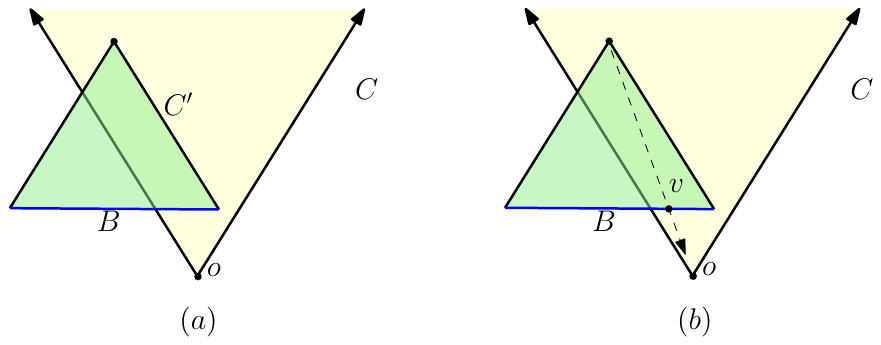}
    \caption{(a) The apex of $C'$ lies in $C$. (b) Locating the reverse Funk vertex $v$ in $B$.}
    \label{fig:Tangency2}
\end{figure}

\begin{lemma}\label{lem:filterForward}
Let $v'$ be a vertex in the Apollonius Voronoi diagram determined by sites  $(c_1,w_1),(c_2,w_2),\dots,(c_d,w_d)$,
and let $B$ be the corresponding ball tangent to the $d$ Apollonius sites. Then $v'$ corresponds to a vertex in the forward Funk Voronoi diagram.
\end{lemma}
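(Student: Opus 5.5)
Given an Apollonius vertex $v'$ with tangent ball $B$ (Euclidean center $v'$, radius $r$) externally tangent to $B_1,\dots,B_d$, I will construct an explicit point $v\in\interior C$ whose reverse Funk ball of the appropriate radius meets $\Omega$ in exactly $B$. Unlike Lemma~\ref{lem:filterReverse}, no filtering is needed, because the reverse cone $C^r$ opens downward and every ball in $\Omega$ can be realized by one. This mirrors the reverse case with the roles of $C$ and $C^r$ swapped, as in Figure~\ref{fig:Tangency2}.

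\textbf{Step 1 (realize $B$ by a reflected cone).} Since $\Omega$ is an orthogonal cut above all sites and $C$ is circular, every cone $C^r_x$ with apex $x$ above $\Omega$ meets the hyperplane of $\Omega$ in a $(d-1)$-ball. Conversely, for any ball $B$ in that hyperplane there is a unique apex $v$ with $C^r_v\cap\operatorname{aff}\Omega = B$: take $v$ on the line through $v'$ parallel to the axis, at height above $\Omega$ equal to $r$ times the cotangent of the half-angle. I then check that $v\in\interior C$, and this is the main obstacle. The plan is to use tangency. $B$ is externally tangent to each $B_i\subset\Omega$, and each $B_i$ is the cross section of $C_{s_i}$, so $B_i\subset\Omega$ and its apex $s_i\in\interior C$. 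The cone $C^r_v$ touches $\bd C_{s_i}$ (a circular cone and a reflected congruent one whose cross sections are tangent in a common hyperplane are tangent along a segment from that point), so $v\in\bd C_{s_i}\subset C_{s_i}\subset C$. Interiority follows from general position. If $v$ instead lay outside $C$, I would argue by similar triangles: $B$ touches $B_i$ at a point inside $\Omega$, and the segment from that tangency point upward along the generator of $\bd C_{s_i}$ stays in $C$. If $B$ pokes outside $\Omega$ I expect this still to work, because the tangency points themselves lie in $\Omega$.

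\textbf{Step 2 (equidistance).} Set $\rho$ so that $C^r_v = B^{rF}_\rho(e^{-\rho}v)$ by Lemma~\ref{lem:ForwardBallCone}(ii); equivalently, rescale $v$ along its ray. By Theorem~\ref{thrm:RaysThroughApex} and Lemma~\ref{lem:additive}, scaling does not affect bisector membership, so I may work with $v$ directly. Tangency of $C^r_v$ with each $C_{s_i}$ means $v\in\bd C_{s_i}$, that is, $F_C(s_i,v)=0$ for all $i$. Hence $v$ is equidistant to $s_1,\dots,s_d$ in the forward Funk metric, as in the converse direction of Lemma~\ref{lem:tangentBalls}.

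\textbf{Step 3 (it is a genuine vertex).} I must show no other site $s_j$ is strictly closer, i.e., $F_C(s_j,v)\ge 0$, which means $v\notin\interior C_{s_j}$. Because $v'$ is an Apollonius vertex, $B$ is disjoint from the interiors of all other $B_j$. If $v\in\interior C_{s_j}$, then $C^r_v\cap C_{s_j}$ would have nonempty interior near the hyperplane of $\Omega$ below $v$, so $B$ and $B_j$ would overlap, which is a contradiction. I would make this rigorous by observing that $C^r_v$ and $C_{s_j}$ are a reflected pair of congruent cones: they overlap in interior iff $v\in\interior C_{s_j}$ iff their cross sections with any horizontal hyperplane between their apexes overlap. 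The point $v$ is then a forward Funk Voronoi vertex, and together with Lemma~\ref{lem:vertex-existence} this gives the correspondence.
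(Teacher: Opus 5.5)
Your proposal is correct and follows essentially the paper's route: you realize $B$ as the cross section of a cone congruent to $C^r$, and you show its apex $v$ lies in $C$ because the generator of $\bd C_{s_i}$ through a tangency point of $B$ and $B_i$ (which lies in $\Omega$) passes through $v$, so $v\in\bd C_{s_i}$ and $F_C(s_i,v)=0$. Your Step 3, which uses $v\in\interior C_{s_j} \iff \interior B\cap\interior B_j\neq\emptyset$ to rule out a strictly closer site, supplies a check the paper's proof leaves implicit; also, interiority of $v$ needs no general-position appeal, since $s_i\in\interior C$ already gives $C_{s_i}\subseteq\interior C$.
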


\begin{proof}
Let $v'$ be an Apollonius vertex with corresponding ball $B$. Since $B$ is tangent to $d$ zero balls contained in $\Omega$, the ball $B$ intersects $\Omega$. Consider the cone through $B$, $C'$, that is congruent to $C^r$. Note that for any point $p$ on $\bd C'\cap \Omega$ the boundary ray of $C$ through $p$ extending upwards must lie entirely inside $C$ (see Figure \ref{fig:Tangency2} (a)). Since the ray emanates from the apex of $C'$ the apex of $C'$ must be in $C$. Hence $C'$ corresponds to the zero ball around some point in $C$ which must therefore be a Voronoi vertex in the forward Funk Voronoi.
\end{proof}

\begin{corollary}
    In the reverse Funk metric, $d$ points have a Voronoi vertex only if they have one in the forward Funk metric. 
\end{corollary}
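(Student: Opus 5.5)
The corollary follows by chaining the two filtering lemmas through the Apollonius diagram. Take $d$ sites $s_1,\dots,s_d$ and suppose they determine a reverse Funk Voronoi vertex $v$. I will show they also determine a forward Funk Voronoi vertex.

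\emph{Step 1: pass to the Apollonius diagram.} By the reverse analogue of Lemma~\ref{lem:tangentBalls} (its corollary), the forward Funk ball around $v$ of the common radius $F_C(v,s_i)$ is tangent to each $B_i$. Its trace on $\Omega$ is the zero ball of $v$, a Euclidean $(d-1)$-ball $B\subset\Omega$ tangent to the balls $B_1,\dots,B_d$. Hence the Euclidean center of $B$ is equidistant, in the additively weighted sense, from $(c_1,w_1),\dots,(c_d,w_d)$. Repeating the argument of Lemma~\ref{lem:vertex-existence} with the cones reflected, I conclude that these Apollonius sites determine an Apollonius vertex $v'$. This is also the ``only if'' direction of Lemma~\ref{lem:filterReverse}.

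\emph{Step 2: return to forward Funk.} Lemma~\ref{lem:filterForward} holds for every Apollonius vertex with no containment condition. Applied to $v'$, it gives a forward Funk Voronoi vertex determined by $s_1,\dots,s_d$.

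\emph{Main obstacle.} The two filtering lemmas use different cross sections $\Omega$: one below the lowest site for reverse Funk, one above the highest for forward Funk. So the Apollonius site sets $S'$ are not literally the same in the two settings. I would handle this by noting that the conclusion is only about existence of a point equidistant to the $d$ sites. Voronoi vertices of the cone are rays through the apex (Theorem~\ref{thrm:RaysThroughApex}), so existence does not depend on which cross section is used. I can therefore fix a single cut, or transport a tangent-ball configuration from one cut to the other by the same apex-scaling used in Theorem~\ref{thrm:reduction}.

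I would also note that the statement is about $d$-tuples of sites having some equidistant vertex, not about vertices of the full diagram with all of $S$. Hence no emptiness condition from other sites needs checking; tangency alone suffices.
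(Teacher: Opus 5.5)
You are following the paper's own chain: a reverse vertex gives an Apollonius vertex, which gives a forward vertex via Lemma~\ref{lem:filterForward}. Step~1 is fine. The obstacle you flag is real, but you have misdiagnosed it, and it is fatal.

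\textbf{The two Apollonius diagrams are different.} The two Apollonius site sets differ not in the height of the cut but in the orientation of the zero balls. In reverse Funk they are slices $C^r_{s_i}\cap\Omega$ of the reflected cones, with the cut below the sites. In forward Funk they are slices $C_{s_i}\cap\Omega$, with the cut above. Take a circular cone with vertical axis and a site at height $b_i$. Both slices are centered at the orthogonal projection of $s_i$ onto the cutting plane. However, their radii are proportional to $b_i-h'$ (reverse, cut at height $h'$) and $h-b_i$ (forward, cut at height $h$). So the additive weights are negated.

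Neither of your devices repairs this. Moving the cut only adds a common constant to all weights. Scaling about the apex maps reflected cones to reflected cones. So neither turns tangency to the balls $C^r_{s_i}\cap\Omega$ into tangency to the balls $C_{s_i}\cap\Omega$. The vertex $v'$ from your Step~1 is a vertex of a different Apollonius diagram from the one Lemma~\ref{lem:filterForward} is about, and Step~2 does not apply to it. The paper's one-line proof makes the same silent identification of the two diagrams, so it has the same gap.

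\textbf{The statement itself is false.} No other argument can close the gap, as this example shows. Let $C=\{(x_1,x_2,t): t\ge\sqrt{x_1^2+x_2^2}\}\subset\RE^3$, and take sites $s_1=(-3,0,10)$, $s_2=(0,0,9)$, $s_3=(3,0,10)$. Each difference $s_j-s_i$ has vertical component strictly smaller in absolute value than its horizontal part. Hence $\pm(s_j-s_i)\notin C$, so there is no domination in either metric and general position holds.

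There is a reverse Funk vertex. Let $z=(0,4,5)\in\interior C$. Then $s_1-z=(-3,-4,5)$, $s_2-z=(0,-4,4)$ and $s_3-z=(3,-4,5)$ all lie in $\bd C$. Hence $F^r_C(s_i,z)=F_C(z,s_i)=0$ for $i=1,2,3$, and $z$ is a reverse Funk vertex. By symmetry, so is $(0,-4,5)$.

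There is no forward Funk vertex. Suppose $q$ were one, with common distance $\rho$. Then $F_C(s_i,e^{\rho}q)=F_C(s_i,q)-\rho=0$, so $w=e^{\rho}q$ lies in $\bigcap_i\bd C_{s_i}$. Writing $w=(x_1,x_2,t)$, this means $t-9=\sqrt{x_1^2+x_2^2}$ and $t-10=\sqrt{(x_1+3)^2+x_2^2}=\sqrt{(x_1-3)^2+x_2^2}$. The last equality forces $x_1=0$. Then $\sqrt{9+x_2^2}=t-10<t-9=|x_2|$, which is impossible.

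So these three sites have two reverse Funk vertices and no forward Funk vertex. Conceptually, forward vertices are common points of the surfaces $\bd C_{s_i}$. Reverse vertices are common points inside $C$ of the surfaces $\bd C^r_{s_i}$. The example shows the second condition does not imply the first.
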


\begin{proof}
    Apollonius vertices are reverse Funk Voronoi vertices only if their forward Funk balls are subsets of $\Omega$ as opposed to just intersecting $\Omega$ (see Lemma \ref{lem:filterForward}).\end{proof}

\textbf{Algorithm.} Let $\Omega$ be a bounded cut above the sites in forward Funk or below for reverse Funk. We compute the forward Funk or reverse Funk Voronoi diagram as follows:

\begin{enumerate}
    \item \textbf{Setup:} For each site $s_i \in S$, compute the zero ball $B_i = C_{s_i} \cap \Omega$. Extract the Euclidean center $c_i$ and radius $w_i$ of $B_i$ to form the Apollonius site $(c_i, w_i)$. Let $S'$ denote the resulting set of Apollonius sites.
    
    \item \textbf{Apollonius:} Compute the Apollonius diagram of $S'$ in $(d-1)$ dimensions with complexity $O(n \log h)$ (where $h$ is the number of nondominated sites) when $d-1=2$ \cite{karavelas2002dynamic} or $O(n^{\ceil{\frac{d-1} {2}}+1})$ when $d-1>2$ \cite{boissonnat2005convex}. Let $V'$ denote the set of Apollonius vertices.
    
    \item \textbf{Relocate:} For each Apollonius vertex $v' \in V'$, we compute the corresponding Funk vertex $v$ by doing the following: Consider the corresponding ball $B$ and the cone congruent to $C^r$ ($C$ in reverse Funk) passing through it. Calculate the cone's height and position a point at its apex. Then consider the ray through that point to the apex of $C$. Where that ray intersects $B$ is where the Funk vertex must be. In the forward Funk it always exists, but in the reverse it only exists if $B \subset \Omega$ (see Lemmas \ref{lem:vertex-existence} and \ref{lem:filterReverse}) (see Figure \ref{fig:Tangency2} (b)). 
\end{enumerate}

\begin{theorem} \label{thm:runtimeEllispe}
The forward and reverse Funk Voronoi diagrams of $n$ sites in a $d$-dimensional circular cone can be computed in $O(T_{d-1}(n))$ time, where $T_k(n)$ is the time to compute an Apollonius diagram of $n$ sites in $k$ dimensions.
\end{theorem}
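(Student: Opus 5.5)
The plan is to verify that the three-step algorithm above is correct and that every step other than the Apollonius computation costs at most linear time per site or per output feature. Both quantities are then dominated by $T_{d-1}(n)$, which is at least the size of the Apollonius diagram it outputs.

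\textbf{Correctness.} By Theorem~\ref{thrm:RaysThroughApex} and Theorem~\ref{thrm:reduction}, the $d$-dimensional Funk diagram is a cone over its trace on the cross section $\Omega$. It is therefore determined by its vertices, i.e.\ the rays through them, together with the adjacency structure. The vertices are handled as follows.
\begin{itemize}
\item For the forward Funk, Lemma~\ref{lem:vertex-existence} shows that every Funk vertex comes from an Apollonius vertex. Lemma~\ref{lem:filterForward} gives the converse, so the vertex sets are in bijection.
\item For the reverse Funk, Lemma~\ref{lem:filterReverse} (together with its corollary) shows that Funk vertices correspond exactly to those Apollonius vertices whose tangent ball satisfies $B\subset\Omega$.
\end{itemize}
Lower-dimensional faces (edges, ridges, and so on) are bisector pieces. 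By Lemma~\ref{lem:tangentBalls} and its corollary, these correspond to loci of centers of balls tangent to the associated $B_i$. So the incidence structure of the Apollonius diagram, restricted to the surviving vertices, is the combinatorial structure of the Funk diagram. In the reverse case we clip the faces incident to a discarded vertex at $\bd\Omega$.

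\textbf{Running time.}
\begin{itemize}
\item Setup: each zero ball $B_i=C_{s_i}\cap\Omega$ is a translated and scaled copy of the circular cross section. Its center and radius follow in $O(1)$ time from the height and axial offset of $s_i$, so the whole step takes $O(n)$.
\item Apollonius step: this costs $T_{d-1}(n)$ by definition.
\item Relocation: each vertex $v'$ with tangent ball $B$ takes $O(1)$ arithmetic. We place the apex of the cone congruent to $C^r$ (respectively $C$) over $B$, test $B\subset\Omega$ in the reverse case, and intersect a ray from the origin. Since the diagram has size $O(T_{d-1}(n))$, this step also costs $O(T_{d-1}(n))$.
\end{itemize}
Summing the three steps gives $O(T_{d-1}(n))$.

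\textbf{Main obstacle.} The delicate part is the reverse Funk case, where some Apollonius vertices are discarded. There I would argue that the resulting clipped diagram is obtained by a single traversal of the Apollonius diagram's faces. Each face is touched a constant number of times per incident feature, which keeps the cost linear in the Apollonius output. I would also check that the general position assumption rules out degenerate tangencies at $\bd\Omega$.
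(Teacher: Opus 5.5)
Your proposal is correct and follows the paper's proof, which does exactly this accounting: $O(n)$ setup, $T_{d-1}(n)$ for the Apollonius diagram, and $O(1)$ relocation for each of the $O(T_{d-1}(n))$ Apollonius vertices. The paper bounds the vertex count by citing the combinatorial complexity of Apollonius diagrams, where you instead note that $T_{d-1}(n)$ dominates the output size, and your correctness recap and reverse-case clipping go beyond what the paper's proof does; if you keep the clipping, do it against the locus where the tangent ball $B$ becomes tangent to $\bd\Omega$ (the boundary of the filter in Lemma~\ref{lem:filterReverse}, which relocation sends to $\bd\Omega$), and apply it to every face reaching that locus, including unbounded faces at surviving vertices, not only to faces incident to discarded vertices.
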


\begin{proof}
 In our setup we intersect planes and cones, which is purely algebraic and $O(1)$ per site, then we prepare our data, which is constant time per site so $O(n)$. The Apollonius diagram is computed in time $O(T_{d-1}(n))$. Note there are at most $O(T_{d-1}(n))$ vertices \cite{boissonnat2003combinatorial}. For each of these we require only $O(1)$ to relocate through a few simple algebraic steps. 
\end{proof}

\begin{corollary}
    The Voronoi diagram of a set of sites in an elliptical cone in the Funk metrics can be computed in time $O(n \log h)$ when $d-1=2$, and $O(n^{\ceil{\frac{d-1}{2}}+1})$ otherwise.
\end{corollary}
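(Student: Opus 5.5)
The corollary should follow by combining the affine reduction with Theorem~\ref{thm:runtimeEllispe} and the known running times for Apollonius diagrams. I would carry it out in three steps.

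\textbf{Step 1: reduce to a circular cone.} An elliptical cone $C$ is mapped to a circular cone $C'$ by an invertible linear map $A$ that fixes the apex. Since the Funk metrics are affine invariants, we have $F_{C}(a,b)=F_{C'}(Aa,Ab)$ for all $a,b \in \interior C$. This holds because $a\leq_C \beta b$ if and only if $Aa\leq_{C'}\beta Ab$, as $A$ is linear and $A(C)=C'$. Consequently $A$ carries forward and reverse Funk Voronoi cells, bisectors and vertices of $S$ onto those of $A(S)$ in $C'$. Applying $A$ to the $n$ sites takes $O(n)$ time. Pulling the computed diagram back through $A^{-1}$ costs time linear in its size, which is bounded by the construction time.

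\textbf{Step 2: apply Theorem~\ref{thm:runtimeEllispe}.} This gives the running time $O(T_{d-1}(n))$. The setup and relocation steps are each linear in the input or output size, so they are dominated by $T_{d-1}(n)$.

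\textbf{Step 3: substitute the known bounds for $T_{d-1}(n)$.} For $d-1=2$, the planar Apollonius diagram algorithm of \cite{karavelas2002dynamic} gives $T_2(n)=O(n\log h)$. For $d-1>2$, the lifting approach of \cite{boissonnat2005convex} gives $T_{d-1}(n)=O(n^{\ceil{\frac{d-1}{2}}+1})$.

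\textbf{Main obstacle.} I expect the only delicate point to be Step 3 in the planar case. The parameter $h$ there is the number of Apollonius sites with nonempty cells, and I need to check that it matches the number of non-dominated Funk sites. By Lemma~\ref{lem:ForwardDominate}, a site $b\in\interior C_a$ has $C_b\subsetneq C_a$. Hence $B_b\subset B_a$, so the disk of $b$ lies inside the disk of $a$ and $b$ has an empty Apollonius cell. Conversely, an Apollonius site with an empty cell corresponds to an empty Funk cell through the vertex correspondence of Lemmas~\ref{lem:vertex-existence} and~\ref{lem:filterForward}. If that identification is awkward to make precise, I would fall back on the weaker statement in which $h\le n$ is simply the number of nondominated Apollonius sites. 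The bound remains valid with that reading.
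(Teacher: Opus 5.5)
Your proposal is correct and matches the paper's proof, which likewise just combines Theorem~\ref{thm:runtimeEllispe} (after the affine reduction to a circular cone stated at the start of that section) with the Apollonius running times of \cite{karavelas2002dynamic,boissonnat2005convex}. Your discussion of $h$ goes beyond the paper, and its converse step is weak, since the vertex correspondence says nothing about cells incident to no vertex; argue it directly instead, using that $B_b\subseteq B_a$ if and only if $b\in C_a$ in a circular cone, and that the center of a disk contained in no other disk lies in its own Apollonius cell.
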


\begin{proof}
    This follows directly from Theorem \ref{thm:runtimeEllispe} and existing Apollonius algorithms \cite{karavelas2002dynamic,boissonnat2005convex}.
\end{proof}

\subsection{Voronoi Diagrams in 3-Dimensional Polygonal Cones.}

For polygonal cones we first characterize the properties of the forward and reverse Funk Voronoi diagram, then we describe how to compute some primitives and show that it can be solved with the abstract Voronoi diagram framework. As we did before, we consider the sites by Euclidean distance from the apex. Let $\Omega$ be an orthogonal cut to the cone's axis below the lowest site for reverse Funk and above the highest for forward Funk. Note that $\Omega$ is a 2-dimensional polygon. Projecting our sites to $\Omega$ gives them weights. To make the Voronoi diagram we will use a modified version of the concept of spokes from the Hilbert metric \cite{nielsen2017balls}. 

\begin{definition}[Forward Spokes]
For a point $p$ inside a polygon $\Omega$ with vertices $v_1, \ldots, v_m$, the \emph{forward spokes} from $p$ are the rays from $p$ to $v_i$. These spokes partition $\Omega$ into sectors around $p$ (see Figure \ref{fig:Spokes} (a)).
\end{definition}
\begin{definition}[Reverse Spokes]
For a point $p$ inside a polygon $\Omega$ with vertices $v_1, \ldots, v_m$, the \emph{reverse spokes} from $p$ are the rays from $p$ in the direction away from $v_i$. These spokes partition $\Omega$ into sectors around $p$ (see Figure \ref{fig:Spokes} (b)).
\end{definition}
\begin{figure}[h]
    \centering
    \includegraphics[width=.7\textwidth]{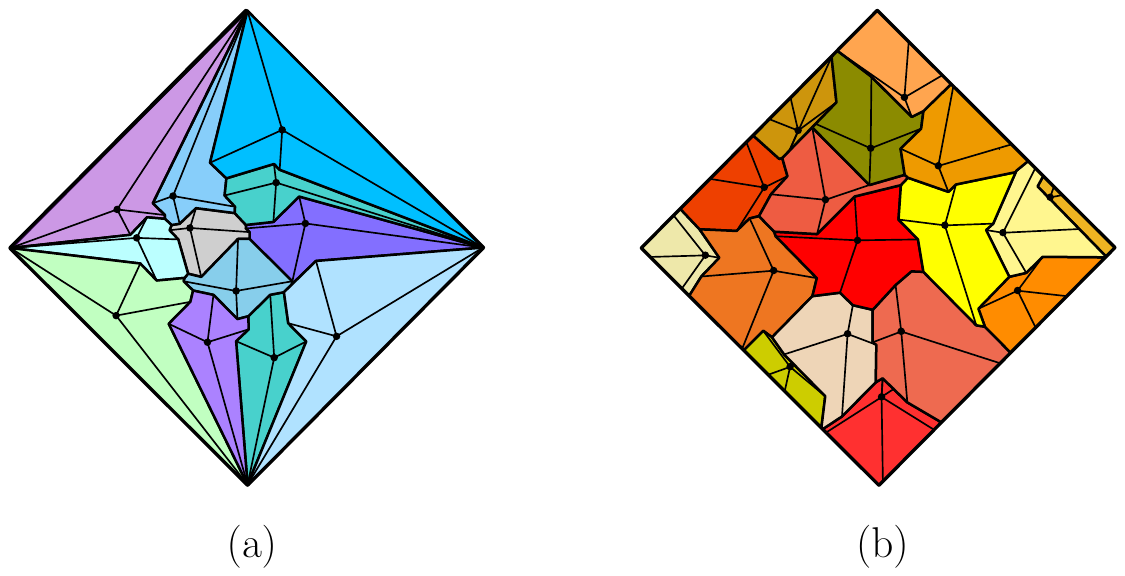}
    \caption{(a) Spokes in a forward Funk Voronoi and, (b) reverse Funk Voronoi.}
    \label{fig:Spokes}
\end{figure}
\begin{lemma}
Given two weighted sites $(s_1,w_1),(s_2,w_2)$ in a polygon $\Omega$, the forward and reverse Funk metric bisectors between them are piecewise composed of line segments separated by sector.
\end{lemma}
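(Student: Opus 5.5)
The approach is to show that inside each sector, the weighted Funk distance from a site is the logarithm of a ratio of affine functions, so that the bisector becomes a line there.

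First fix the forward Funk metric in the polygon $\Omega$, with edges on supporting lines $\ell_1,\dots,\ell_m$. For a point $s\in\interior\Omega$ and $x$ in the sector of $s$ bounded by the forward spokes to $v_i$ and $v_{i+1}$, the ray from $s$ through $x$ exits $\Omega$ on the edge $e_i=v_iv_{i+1}$. By the similar-triangles formula (Proposition~\ref{prop:TriangleCone} restricted to the cross section, where $F_\Omega=F_C$), we get
\[F_\Omega(s,x)=\ln\frac{d_2(s,\ell_i)}{d_2(x,\ell_i)}.\]
Write $L_i$ for the affine function, positive on $\interior\Omega$, with $L_i(x)=d_2(x,\ell_i)$. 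Then on that sector the weighted distance is $w+\ln L_i(s)-\ln L_i(x)$. For the reverse Funk metric, $F^r_\Omega(s,x)=F_\Omega(x,s)=\ln\bigl(L_j(x)/L_j(s)\bigr)$, where $e_j$ is the edge hit by the ray from $x$ through $s$. Equivalently, $e_j$ is the edge hit by the ray from $s$ pointing away from $x$. The index $j$ is therefore constant exactly on the regions cut out by the reverse spokes of $s$. In both cases the distance from a weighted site is, within each of its sectors, a constant minus or plus $\ln$ of one affine function.

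Next, overlay the sector decompositions of $s_1$ and $s_2$. Each cell of the overlay is an intersection of two convex sectors, so it is a convex polygon. On such a cell the indices $i$ (for $s_1$) and $k$ (for $s_2$) are fixed. The bisector condition in the forward case becomes
\[w_1+\ln L_i(s_1)-\ln L_i(x)=w_2+\ln L_k(s_2)-\ln L_k(x).\]
Exponentiating gives $\alpha L_k(x)=L_i(x)$ with the constant $\alpha=e^{w_2-w_1}L_k(s_2)/L_i(s_1)>0$. This is an affine equation in $x$, exactly as in the proof of Lemma~\ref{lem:forward2DBisector}. Its zero set is a line, unless it is identically satisfied or empty. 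Intersected with the convex cell, it is therefore a line segment (possibly empty). The reverse case gives $L_j(x)/L_j(s_1)\cdot e^{w_1}=L_l(x)/L_l(s_2)\cdot e^{w_2}$, which is again affine.

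The main obstacle is the degenerate case in which the affine equation vanishes identically on a cell, which would make the bisector two-dimensional there. This can only happen if $i=k$ and $\alpha=1$. I would dispose of it with the general position assumption, which is the cross-section analogue of ``no site lies on the boundary of a cone centered at another site''. Alternatively, when $i=k$ the equation reduces to a constant comparison, so the cell lies entirely on one side and contributes nothing to the bisector. Finally, I would note that continuity of the distance functions across spokes makes the pieces join. The bisector is thus a polygonal chain whose breakpoints lie on spokes of $s_1$ or $s_2$, which is the claimed piecewise-linear structure with pieces separated by sector boundaries.
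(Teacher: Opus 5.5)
Your main argument is the paper's. You overlay the forward (or reverse) spoke sectors of $s_1$ and $s_2$, and note that on each cell each weighted distance is a constant minus (or plus) the logarithm of one fixed affine function. Exponentiating then gives an affine equation, so the bisector meets each convex cell in a segment. There is a small slip: exponentiating gives $L_k(x)=\alpha L_i(x)$, not $\alpha L_k(x)=L_i(x)$, but this is immaterial.

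The problem is your treatment of the degenerate case $i=k$, $\alpha=1$. The paper's proof skips this case silently by assuming the two governing edges differ.
\begin{itemize}
\item Your ``alternative'' argument fails exactly here. When the two constants coincide, the constant comparison is an equality, so the whole open cell lies \emph{on} the bisector, not on one side of it.
\item The paper's general position assumption (no site on the boundary of a cone centered at another site) does not exclude this case. Take the positive orthant $C$ cut by the plane $x_1+x_2+x_3=6$, so $\Omega$ is a triangle, with sites $a=(1,2,3)$ and $b=(2,1,3)$ and zero weights. Neither $b-a=(1,-1,0)$ nor $a-b$ lies in $C$. Yet for every $x\in\Omega$ near $(2,2,2)$, the rays from $a$ and from $b$ through $x$ both exit through the edge $x_3=0$, and $F_C(a,x)=F_C(b,x)=\ln(3/x_3)$. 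So the bisector contains a two-dimensional region.
\end{itemize}

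In cone terms, write $C=\{y:\phi_j(y)\ge 0 \text{ for all } j\}$ with linear facet functionals $\phi_j$. Then $F_C(a,x)=\ln\max_j\bigl(\phi_j(a)/\phi_j(x)\bigr)$. Your degenerate condition is exactly $\phi_i(a)=\phi_i(b)$ for some facet $i$, meaning the two sites lie on a common plane parallel to a facet of $C$. This is much weaker than $b\in\partial C_a$, so the paper's assumption does not rule it out.

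To make the lemma and your proof correct, add an explicit general-position hypothesis: $e^{w_1}L_i(s_1)\neq e^{w_2}L_i(s_2)$ for every edge $i$ in the forward case, and $e^{w_1}/L_j(s_1)\neq e^{w_2}/L_j(s_2)$ for every edge $j$ in the reverse case. With that hypothesis, the rest of your argument goes through.
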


\begin{proof}
To start, we consider the forward Funk. The spokes from $s_1$ and $s_2$ partition $\Omega$ into finitely many regions. Within each region, the forward Funk distances to points from both $s_1$ and $s_2$ are computed using fixed boundary edges $E_i$ and $E_j$ of $\Omega$. Let the sites and the edges lie in order $
\langle E_i,s_1,s_2,E_j \rangle$. Let $E_i$ and $E_j$ be given by the linear forms $E_i(x,y) = 0$ and $E_j(x,y) = 0$. A point $(x,y)$ lies on the weighted bisector if $F_C(s_1, (x,y)) + w_1 = F_C(s_2, (x,y)) + w_2$. By the similar triangles formula:

\[ \ln\left(\frac{E_j(s_1)}{E_j(x,y)} \right)+w_1= \ln\left(\frac{E_i(s_2)}{E_i(x,y)}\right)+w_2.\]
Which becomes:
\[E_j(s_1) \cdot E_i(x,y) = e^{w_2 - w_1} \cdot E_i(s_2) \cdot E_j(x,y).\]

Which is linear in $(x,y)$. Therefore, the bisector in each sector is a line segment (or empty if it does not intersect the sector). By the continuity of $F_C(s_1, (x,y)) + w_1 - F_C(s_2, (x,y)) - w_2$ the bisector must be one continuous curve. So we have a piecewise linear bisector overall.

In the reverse Funk the proof is analogous except we use reverse spokes and get:
\[ \ln\left(\frac{E_i(x,y)}{E_i(s_1)} \right)+w_1= \ln\left(\frac{E_j(x,y)}{E_j(s_2)}\right)+w_2.\]
Which becomes:
\[E_j(s_2) \cdot E_i(x,y) = e^{w_2 - w_1} \cdot E_i(s_1) \cdot E_j(x,y). \qedhere \]
\end{proof}

We claim that the bisectors between two sites are composed of at most $m$ sides. This will allow us to bound the complexity of our algorithm.

\begin{lemma}\label{lem:polygonalBisector}
Given two weighted sites $(s_1,w_1), (s_2,w_2)$ in a polygon $\Omega$ with $m$ vertices, the forward and reverse Funk bisectors consist of $O(m)$ line segments.
\end{lemma}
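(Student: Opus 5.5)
The plan is to bound the number of sectors that the bisector $J$ can meet and then use the preceding lemma, which makes $J$ linear inside each sector. The naive bound is too weak. The $m$ spokes from $s_1$ and the $m$ spokes from $s_2$ overlay into up to $\Theta(m^2)$ cells, because spokes from $s_1$ can cross spokes from $s_2$. So I will not count overlay cells. Instead I will count how often $J$ crosses a spoke.

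I will work with the forward Funk first. Inside the sector of $s_1$ bounded by the spokes to $v_k$ and $v_{k+1}$, the distance from $s_1$ is governed by the single edge $v_kv_{k+1}$. The same holds for $s_2$. By the lemma, within any overlay cell $J$ is the zero set of a linear form. The breakpoints of $J$ therefore occur only where $J$ crosses a spoke of $s_1$ or a spoke of $s_2$, and it suffices to show that $J$ crosses each spoke $O(1)$ times.

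The key step is to show that $J$ meets any spoke of $s_1$, say the segment from $s_1$ to $v_k$, in at most one point. Parametrize the spoke by $x(t)=s_1+t(v_k-s_1)$ for $t\in[0,1)$. Along it, $F(s_1,x(t))=-\ln(1-t)$, since the spoke is a geodesic aimed at a boundary point. The function $t\mapsto F(s_2,x(t))$ has the form $\max_e \ln\big(E_e(s_2)/E_e(x(t))\big)$, where the maximum runs over the edges $e$ visible in the appropriate direction. In cone terms this is the Funk gauge, which is convex along lines.

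I then want $g(t)=F(s_2,x(t))+w_2-F(s_1,x(t))-w_1$ to be strictly monotone. Equivalently, I will use that $e^{-F(s_1,x(t))}=1-t$ is affine while $e^{-F(s_2,x(t))}=\min_e E_e(x(t))/E_e(s_2)$ is concave piecewise-affine. Their ratio comparison then yields at most two solutions, and a sign argument at $t\to1$ should reduce this to one: as $t\to 1$, $F(s_1,\cdot)\to\infty$ while $F(s_2,\cdot)$ stays finite unless $v_k$ is the point $s_2$ sees, and general position excludes that case. If uniqueness fails, I will settle for two crossings per spoke, which still suffices. The same argument applies symmetrically to the spokes of $s_2$.

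With $2m$ spokes, each crossed $O(1)$ times, and $J$ a single continuous curve by the continuity argument in the preceding proof, $J$ has $O(m)$ breakpoints and hence $O(m)$ segments. The reverse Funk case is analogous, using reverse spokes and $C^r$. The main obstacle is the per-spoke crossing bound: it requires the convexity and monotonicity comparison between two Funk distance functions restricted to a line. That is where I would concentrate the work, first trying the affine-versus-concave argument above.
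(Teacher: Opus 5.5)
Your argument is correct and shares the paper's skeleton: count crossings of the $2m$ spokes rather than overlay cells, then use that $J$ is a single curve, linear between crossings. The difference is the per-spoke bound.

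The paper gets ``each spoke is crossed at most once'' from star-shapedness of the cells (Lemma~\ref{lem:star}, which rests on the triangle inequality and straight-line geodesics). Along a spoke of $s_1$, the cell $V(s_1)$ is an initial segment, so $J$ can pass over the spoke only once.

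You prove the bound analytically on the spoke itself. $e^{-F(s_1,x(t))}=1-t$ is affine, and $h(t)=e^{-F(s_2,x(t))}=\min_e E_e(x(t))/E_e(s_2)$ is concave. So $\phi(t)=e^{-w_2}h(t)-e^{-w_1}(1-t)$ is concave and, unless it vanishes on a whole subinterval, has at most two zeros. This is self-contained, and it bounds every intersection point of $J$ with the spoke, including tangential touchings that ``passing over'' does not literally rule out. In effect it reproves star-shapedness along spokes from concavity of the Funk gauge. The paper's route is shorter because it reuses an existing lemma and gives the explicit $2m$ count at once.

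Your ``analogous'' reverse case does go through. Along a reverse spoke, $e^{F^r(s_1,x(t))}=1+t$ is affine and $e^{F^r(s_2,\cdot)}=\max_e E_e(\cdot)/E_e(s_2)$ is convex, again giving at most two roots.

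One remark in your sketch is false, though harmless: $F(s_2,x(t))$ does not stay finite as $t\to1$. The forward Funk distance from an interior point blows up at every boundary point. In your notation, $h(1)=\min_e E_e(v_k)/E_e(s_2)=0$ because the edges at $v_k$ vanish there. So both distances diverge, and $g$ tends to a finite limit of the form $\ln\frac{E_e(s_2)}{E_e(s_1)}+w_2-w_1$ for an edge $e$ at $v_k$. The sign argument therefore fails, and no general-position assumption rescues it.

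The same observation does give the uniqueness you wanted. Since $\phi(1)=0$ always, $t=1$ is one of the at most two zeros of $\phi$ on $[0,1]$, leaving at most one crossing on the open spoke. Equivalently, $h(t)/(1-t)=-\frac{h(1)-h(t)}{1-t}$ is minus a chord slope of the concave function $h$, hence non-decreasing, so $g$ is monotone. Your fallback of two crossings per spoke already suffices for the $O(m)$ bound in any case.
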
 

\begin{proof}Each of our sites has $m$ spokes coming out of it (going towards vertices for forward Funk and away from vertices for reverse Funk). These intersect to form sectors. In each sector the bisectors are line segments that are composed piecewise to make a connected curve. However, once a bisector passes over a spoke it can never pass back over it without violating the star-shapedness of Voronoi cells.

Since each site contributes at most $m$ spokes, the bisector crosses at most $2m$ spoke boundaries, and changes equation at most $2m$ times. So the bisector has complexity $O(m)$. 
\end{proof}

Now we can solve for the Voronoi diagram with the abstract Voronoi diagram framework \cite{klein1993randomized,klein1988abstract,mehlhorn1991construction, klein2009abstract}. In particular we will draw from ``Abstract Voronoi diagram revisited'' \cite{klein2009abstract}. First we will show that the forward and reverse Funk metrics satisfy the required axioms.

\textbf{Axiom A1 (Jordan Curves):} By Lemma \ref{lem:polygonalBisector}, 
    each bisector consists of $O(m)$ line segments forming a piecewise-linear curve in $\Omega$. For every bisector we extend where they hit $\bd \Omega$ toward infinity. The resulting curve is piecewise linear in $\mathbb{R}^2$, unbounded, and projects to a Jordan curve through the north pole under stereographic projection.

\textbf{Axiom A2 (Path-connected regions):} By Lemma \ref{lem:star} inside $\Omega$, each Voronoi cell is star-shaped with respect to its site and hence path-connected. Then bisectors are extended as rays outside of $\Omega$, the portion of the cell outside $\Omega$ becomes bounded by half-planes and the boundary of $\Omega$, and hence is path-connected. Since the interior portion is path connected, and the exterior is path-connected, and they share a boundary, their union is path connected.

\textbf{Axiom A3 (Coverage):} Inside $\Omega$, every point is at finite 
weighted Funk distance from all sites, hence belongs to the closure of at 
least one Voronoi region. Outside of $\Omega$, the extended bisector rays 
partition $\mathbb{R}^2 - \Omega$ into regions, each with an associated 
site. Therefore, the closures of all Voronoi regions cover $\mathbb{R}^2$.

To get the exact running time we must now solve the necessary primitive for our chosen abstract Voronoi diagram framework \cite{klein2009abstract}. This requires us to solve for the Voronoi diagram of five sites as a primitive. Then ``Abstract Voronoi diagram revisited''\cite{klein2009abstract} gives our runtime.

\begin{lemma}
    The Funk Voronoi diagrams of five sites can be computed in $O(m)$ time.
\end{lemma}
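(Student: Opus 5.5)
We compute the diagram of five weighted sites $(s_1,w_1),\dots,(s_5,w_5)$ in $\Omega$ directly, using the structure already established. There are three stages: compute all pairwise bisectors, overlay them in the refined spoke subdivision, and assign each face of the overlay to its nearest site.

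\textbf{Step 1: bisectors.} For each of the $\binom{5}{2}=10$ pairs we compute the bisector explicitly. By Lemma~\ref{lem:polygonalBisector} it has $O(m)$ segments. We trace it by walking through the sectors defined by the (forward or reverse) spokes of the two sites. Sort the $m$ spokes of each site angularly; this costs $O(m)$, since the vertex order of $\Omega$ already gives the angular order around an interior point. Start where the bisector meets $\bd\Omega$, or at any point where the sign of $F_C(s_1,\cdot)+w_1-F_C(s_2,\cdot)-w_2$ changes along $\bd\Omega$. In the current sector, solve the linear equation from the preceding lemma. Then advance to the next spoke crossing. Because a spoke is never recrossed (star-shapedness, Lemma~\ref{lem:star}), each step takes amortized $O(1)$ and each bisector takes $O(m)$ in total. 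Finding the starting point means scanning the $m$ edges of $\bd\Omega$ and evaluating the distance difference at the vertices, which is again $O(m)$.

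\textbf{Step 2: overlay.} Overlay the $10$ bisectors together with all $5m$ spokes and $\bd\Omega$. This is the main obstacle, because we must show the overlay has size $O(m)$ and can be built in $O(m)$ time, with no $\log$ factor and no quadratic blow-up.

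The approach we would take is to refine $\Omega$ by the spokes of all five sites. Each site's distance function is then given by a single linear form in each cell. The difficulty is that the arrangement of $5m$ spokes can have $\Theta(m^2)$ cells. To avoid this, we would restrict attention to a single site's spokes at a time: the region that matters for site $s_i$ is $V(s_i)$. By star-shapedness, $V(s_i)$ is swept radially from $s_i$, so it can be described by walking around $s_i$ in angular order and merging the four bisectors $J(s_i,s_j)$, $j\neq i$, each treated as a radially monotone curve around $s_i$.

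Concretely, each bisector $J(s_i,s_j)$ restricted to the part visible from $s_i$ is a radial function $r_j(\theta)$ with $O(m)$ pieces, sorted by angle. We do this for $\theta$ values where $s_i$ beats $s_j$ radially out to $r_j(\theta)$. The boundary of $V(s_i)$ is then the lower envelope $\min_j r_j(\theta)$, clipped by $\bd\Omega$. A lower envelope of four angularly sorted piecewise-linear radial functions, each of size $O(m)$, can be computed by a simultaneous angular sweep in $O(m)$ time. Since the functions are linear, each pair of pieces crosses $O(1)$ times, so the envelope has $O(m)$ breakpoints. Running this for the five sites yields all cells and their boundaries in $O(m)$ total.

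\textbf{Step 3: assembly.} Finally, we extract the Voronoi vertices as the points where envelope ownership changes. We stitch the cells into a planar subdivision, cross-checking consistency between adjacent cells. For the abstract Voronoi framework we extend the unbounded pieces as rays outside $\Omega$, exactly as in Axiom A1. All of this is linear in the total size of the cells, $O(m)$, so the Funk Voronoi diagram of five sites is obtained in $O(m)$ time.
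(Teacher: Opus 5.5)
Your proof is correct, but it is organized differently from the paper's. The paper prunes dominated sites and then runs a modified randomized incremental construction taken from the Hilbert-metric algorithm. It inserts the five sites one at a time and traces each new cell boundary through the current diagram, along bisector segments and across forward or reverse spokes. Its $O(m)$ bound rests on three things: the constant number of sites, the $O(m)$ bisector complexity, and the circumcenter analysis of Section~\ref{sec:circumcenter} (three sites give at most two vertices).

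You instead precompute all ten pairwise bisectors by a spoke walk. You then build each cell as $V(s_i)=\bigcap_{j\neq i}V_{ij}(s_i)$, using an angular lower envelope around $s_i$. This gives a deterministic, self-contained argument. It makes explicit why the $\Theta(m^2)$ overlay of all spokes is never needed, which the paper leaves implicit in its tracing step. It also yields the $O(m)$ size of the diagram directly from the envelope count, without the two-circumcenter bound. The price is a $\Theta(k^2m)$ cost for $k$ sites, which is harmless for a constant-size primitive.

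A few points to fix when writing it up:
\begin{enumerate}
\item The radial-graph property of $J(s_i,s_j)$ comes from the inequality chain in the proof of Lemma~\ref{lem:star}, not from star-shapedness alone. That chain shows $F_C(s_j,\cdot)+w_j-F_C(s_i,\cdot)-w_i$ is nonincreasing along every ray from $s_i$, so a ray meets the bisector at most once (generically).
\item The envelope presupposes $s_i\in V(s_i)$, which holds exactly when the cell is nonempty. Sites with empty cells must therefore be discarded first, as the paper does.
\item Bisector endpoints should not be found by sampling at vertices of $\Omega$. In the forward Funk both distances diverge on $\bd\Omega$, and $F_C(s_i,\cdot)+w_i-F_C(s_j,\cdot)-w_j$ tends on each edge $E$ to the constant $\ln\bigl(E(s_i)/E(s_j)\bigr)+w_i-w_j$. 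So the endpoints are the vertices that separate edges of opposite sign. In the reverse Funk the difference stays finite but can change sign inside an edge. There you should scan $\bd\Omega$ subdivided by the spoke hits of both sites, which still takes $O(m)$ time.
\end{enumerate}
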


\begin{proof}
We apply a modified version of the randomized incremental algorithm from previous work on Voronoi diagrams in the Hilbert metric \cite{gezalyan2023voronoi} but on five sites. We calculate the Funk distance between all pairs of sites and remove dominated ones. The algorithm 
inserts sites one at a time, tracing the boundary of each new Voronoi cell with spokes (we only need forward spokes or reverse spokes depending on the Funk metric)
through bisector segments and sector boundaries. For a constant number of 
sites (five in this case), the diagram has $O(m)$ complexity (refer to Section~\ref{sec:circumcenter} to see that three sites contribute at most two Voronoi vertices) with $O(m)$ total 
bisector segments and construction time $O(m)$.\end{proof}

\begin{theorem}
The forward or reverse Funk Voronoi diagram of $n$ sites in a 3-dimensional polygonal cone with $m$ facets can be computed in $O(nm \log (n))$ time.
\end{theorem}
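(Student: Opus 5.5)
The plan is to combine the dimension reduction of Theorem~\ref{thrm:reduction} with the abstract Voronoi diagram framework of Klein, Langetepe and Nilforoushan~\cite{klein2009abstract}. That framework's randomized incremental construction runs in expected $O(n\log n)$ elementary operations, provided the axioms hold and each basic operation can be carried out in time $t$. In our setting the basic operation is computing the diagram of a constant number (five) of sites, and bisector pieces have complexity $O(m)$, so we will instantiate $t=O(m)$.

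\textbf{Step 1: reduce to the cross section.} Choose the cut $\Omega$ orthogonal to the cone's axis: above all sites for the forward Funk metric, below all sites for the reverse Funk metric. Project each site through the apex onto $\Omega$. By Lemma~\ref{lem:additive}, it acquires the additive weight $\mp\ln\lambda_i$, where $\lambda_i$ is its scaling factor. This costs $O(n)$ time. By Theorem~\ref{thrm:reduction} and Theorem~\ref{thrm:RaysThroughApex}, the Voronoi diagram in $C$ is the cone over the weighted diagram in $\Omega$, so lifting back costs time linear in the output.

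\textbf{Step 2: prune dominated sites.} In the weighted planar setting, domination corresponds to one zero ball containing another. Such sites have empty cells by Lemmas~\ref{lem:ForwardDominate} and~\ref{lem:BackwardDominate}. We remove them, or we note that the abstract Voronoi diagram framework tolerates empty regions. Either way, every remaining cell is nonempty and star-shaped by Lemma~\ref{lem:star}.

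\textbf{Step 3: verify the axioms and the primitive.} Axioms A1--A3 were verified above. The five-site primitive runs in $O(m)$ time by the preceding lemma, and each bisector has $O(m)$ segments by Lemma~\ref{lem:polygonalBisector}. Substituting $t=O(m)$ into the bound of~\cite{klein2009abstract} gives expected $O(nm\log n)$ time.

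\textbf{Step 4: bound the output size.} The output diagram has $O(n)$ vertices, edges and faces, because abstract Voronoi diagrams are planar with nonempty connected regions. Each edge is a portion of a bisector and so has $O(m)$ pieces. The total description size is therefore $O(nm)$, which is dominated by the construction time.

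\textbf{Main obstacle.} The key step is justifying that every operation the framework performs, and not only the five-site diagram computation, costs $O(m)$. This includes conflict checks and the intersection of two bisectors, each of which has $O(m)$ pieces. A bisector--bisector intersection can be found by a linear merge along the spoke sectors in $O(m)$ time. It must also be shown that extending the bisectors as rays outside $\Omega$ does not create spurious intersections. I would handle this by noting that outside $\Omega$ the extended pieces are rays in generic directions, so the only relevant structure is the cyclic order of their endpoints on $\bd\Omega$. That order is consistent with the inside diagram.
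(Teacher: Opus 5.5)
Your outline is the paper's own argument. You pass to the weighted diagram on $\Omega$ (Theorem~\ref{thrm:reduction}), take axioms A1--A3 of \cite{klein2009abstract} as verified, and charge the $O(m)$ five-site primitive to each of the expected $O(n\log n)$ basic operations.

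The step that would fail is your resolution of the ``main obstacle''. Choosing extension rays in generic directions does not prevent spurious intersections, because rays leaving $\bd\Omega$ from different points can cross outside $\Omega$. Write $D(p,q)$ for the side of $J(p,q)$ where $p$ wins. Three such crossings can then bound a small triangle contained in $D(p,q)\cap D(q,r)\cap D(r,p)$. That triangle lies in no Voronoi region, so A3 fails. The paper leaves the extension directions unspecified, so it does not settle this either.

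In the forward metric your premise also fails. For $x$ near a relative-interior point of an edge $E$, both $F_C(p,x)=\ln(E(p)/E(x))$ and $F_C(q,x)=\ln(E(q)/E(x))$ are computed from $E$. So $F_C(p,x)+w_p-F_C(q,x)-w_q$ is locally the constant $\ln(E(p)/E(q))+w_p-w_q$. Barring ties, forward bisectors therefore reach $\bd\Omega$ only at vertices of $\Omega$. With only $m$ vertices, many bisectors share an endpoint. Hence the ``cyclic order of endpoints'' is not even defined, and even radial extensions would make bisectors overlap.

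The repair is not to extend at all. Apply the framework to $\interior\Omega$, which is homeomorphic to $\RE^2$ with $\bd\Omega$ sent to the north pole.
\begin{itemize}
\item A1: each bisector is an arc in $\interior\Omega$ with both ends on $\bd\Omega$, so it becomes a closed Jordan curve through the north pole.
\item A2: every non-dominated site lies in its own star-shaped cell (Lemma~\ref{lem:star}), so regions are nonempty and path-connected.
\item A3: coverage holds inside $\Omega$.
\end{itemize}
You also need bisector intersections not to accumulate at $\bd\Omega$. Near a vertex $v$ between edges $E_1,E_2$, a forward bisector lies in the sector between the two sites' spokes to $v$. There its equation is homogeneous in $(E_1(x),E_2(x))$, so it is a segment of a line through $v$. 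Distinct bisectors therefore do not meet near $v$, and all intersections stay in a compact part of $\interior\Omega$. In the reverse metric the distances extend continuously to $\bd\Omega$. So in general position, bisectors cross $\bd\Omega$ at distinct points and their finitely many intersections avoid $\bd\Omega$.

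Finally, Step~2 needs a cost bound. The framework assumes nonempty regions, so your ``tolerates empty regions'' option is not available. Pairwise domination tests are quadratic in $n$. You must show that the pruning fits within $O(nm\log n)$; the paper also leaves this unaddressed.
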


\subsection{Circumcenter Characterization in 3-Dimensional Cones}\label{sec:circumcenter}

One particular aspect of the Funk metrics is that three points in the Funk metrics do not necessarily lie on the boundary of a ball. The characterization with respect to both Funk metrics is almost entirely the same. We will make our definitions direction agnostic. Given three sites we characterize when they admit a forward or reverse Funk circumcenter in 3-dimensional cones. This analysis applies to any convex cone. 

Let $p,q,r$ be three sites in $C$. Without loss of generality assume that $r$ is the site furthest from the apex of the cone (or closest in the reverse Funk). We assume that neither $p$ dominates $q$ nor $q$ dominates $p$. Let $H$ be a plane through $r$ that is perpendicular to the axis of the cone. Let $\Omega = H\cap C$. By Theorem \ref{thrm:reduction} we obtain weighted sites $p'$ with weight $w_p$ 
and $q'$ with weight $w_q$ on $\Omega$. Since $r$ lies on the plane, it 
requires no weight.

\begin{lemma}\label{lem:ballInPlane}
    If $p,q,r$ admit a circumcenter $u$ in $C$ in the forward or reverse Funk metrics, then, every site along the ray through the apex and $u$ is the center of a ball through $p,q,r$.
\end{lemma}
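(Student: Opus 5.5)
We interpret the statement as follows. If $u$ is a circumcenter of $p,q,r$, i.e.\ $F_C(p,u)=F_C(q,u)=F_C(r,u)$ (or the same with $F^r_C$), then every point $\lambda u$ with $\lambda>0$ is also equidistant from $p,q,r$. Hence $\lambda u$ is the center of a ball whose boundary passes through the three sites.

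The plan is to reuse the argument of Theorem~\ref{thrm:RaysThroughApex} on each pair of sites. Fix $\lambda>0$ and any site $s\in\{p,q,r\}$. The points $u$ and $\lambda u$ lie on a common ray from the apex. The additive decomposition of Lemma~\ref{lem:additive}, applied exactly as in the proof of Theorem~\ref{thrm:RaysThroughApex}, gives
\[
F_C(s,u)=F_C(s,\lambda u)+F_C(\lambda u,u).
\]
The correction term $F_C(\lambda u,u)=\ln\lambda$ depends only on $u$ and $\lambda$, not on $s$. Subtracting it from each of the equal quantities $F_C(p,u)=F_C(q,u)=F_C(r,u)$ yields
\[
F_C(p,\lambda u)=F_C(q,\lambda u)=F_C(r,\lambda u)=\rho-\ln\lambda,
\]
where $\rho$ is the common value at $u$.

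Next we turn equidistance into the ball statement using the characterization of balls. The sites are at distance measured from the center $u$, so the relevant ball is the set of points $x$ with $F_C(x,u)\le \rho$, equivalently $F^r_C(u,x)\le\rho$. By Lemma~\ref{lem:ForwardBallCone}(ii) this is the reverse ball $C^r_{e^{\rho}u}$, and $p,q,r$ lie on its boundary. For the center $\lambda u$ the radius becomes $\rho-\ln\lambda$, so the ball is $C^r_{e^{\rho-\ln\lambda}\lambda u}=C^r_{e^{\rho}u}$. This is the same cone with the same apex, and $p,q,r$ still lie on its boundary. This gives an independent geometric check: the ball is determined by its apex $e^{\rho}u$, which is invariant along the ray. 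In the reverse Funk case we use $F^r_C(u,s)=F_C(s,u)$ and part (i) of Lemma~\ref{lem:ForwardBallCone}, and the computation is symmetric.

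The main obstacle is a technicality in Lemma~\ref{lem:additive}. That lemma is stated with the decomposition point equal to $\lambda$ times the \emph{first} argument, $F_C(p,q)=F_C(p,\lambda p)+F_C(\lambda p,q)$. Here we need the scaling on the \emph{second} argument, $F_C(s,u)=F_C(s,\lambda u)+F_C(\lambda u,u)$. We would justify this version directly with Proposition~\ref{prop:TriangleCone}. The points $s$, $u$, $\lambda u$ and the origin all lie in one $2$-dimensional slice. In that slice the ratio $d_2(\lambda u,R)/d_2(u,R)=\lambda$ holds for every boundary ray $R$ through the apex, so the same ray $R$ governs both $F_C(s,u)$ and $F_C(s,\lambda u)$, and the logarithms telescope. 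Alternatively, the definition gives this directly: $s\le_C\beta\lambda u$ if and only if $s\le_C(\beta\lambda)u$, so $F_C(s,\lambda u)=F_C(s,u)-\ln\lambda$. We would present this one-line algebraic identity as the core step, since it avoids choosing the ray $R$.
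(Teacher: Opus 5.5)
Your proposal is correct and is essentially the paper's argument. The paper just cites Theorem~\ref{thrm:RaysThroughApex}: a circumcenter lies on the pairwise bisectors, each of which contains the whole ray through $u$, and your shift identity $F_C(s,\lambda u)=F_C(s,u)-\ln\lambda$, derived directly from the definition, is exactly what drives that theorem. Your check that the ball is the same cone $C^r_{e^{\rho}u}$ for every center on the ray is a nice addition; just note that in the reverse case the relevant identity is $F^r_C(s,u)=F_C(u,s)$, so the scaling falls on the first argument, as in Lemma~\ref{lem:additive} as stated.
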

\begin{proof}
This follows from Theorem \ref{thrm:RaysThroughApex}.\end{proof}

Hence we can determine circumcenter existence in planar cross sections of $C$ with weighted point-sites rather than the full cone. We introduce our first region. 

\begin{definition}[Inner Regions: $Z_0$]
Define the zero balls $Z_p = C_p \cap \Omega$ and $Z_q = C_q \cap \Omega$, and let $Z_0 = Z_p \cup Z_q$.
\end{definition}

These are the weighted zero radius balls of $p$ and $q$ in the plane.

 \begin{lemma}\label{lem:Z0}
    If $r\in \interior Z_0$ then $r$ has no cell in the Voronoi diagram.
\end{lemma}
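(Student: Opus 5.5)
If $r\in\interior Z_0$, then without loss of generality $r\in\interior Z_p$, and $Z_p=C_p\cap\Omega$ with $r$ on $\Omega$. Hence $r\in\interior C_p$. The plan is to reduce this to domination and then invoke Lemma~\ref{lem:ForwardDominate}.

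\textbf{Forward Funk.} Here $\Omega$ lies above the lower sites. The containment $r\in\interior C_p$ says exactly that $p$ dominates $r$ in the sense of Lemma~\ref{lem:ForwardDominate}. That lemma gives $F_C(p,c)<F_C(r,c)$ for every $c\in\interior C$, so no point of $C$ is at least as close to $r$ as to $p$, and $V(r)$ is empty.

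It is worth checking that this is consistent with the weighted picture on $\Omega$ from Theorem~\ref{thrm:reduction}. There $p$ is replaced by $p'=\lambda p\in\Omega$ with weight $w_p=F_C(p,\lambda p)=-\ln\lambda$ (Lemma~\ref{lem:additive}). By Lemma~\ref{lem:ForwardBallCone}, the zero ball of $p$ meets $\Omega$ in the set of points $x$ with $F_C(p',x)+w_p\le 0$. So $r\in\interior Z_p$ means $F_C(p',r)+w_p<0=F_C(r,r)$, i.e.\ $r$ lies strictly inside the weighted ball of $p'$ of radius $0$, which is the weighted form of domination.

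\textbf{Reverse Funk.} The zero balls are defined with $C^r$ in place of $C$ (the direction-agnostic convention of this section), and $\Omega$ lies below the sites. Then $r\in\interior C^r_p$, and Lemma~\ref{lem:BackwardDominate} gives $F^r_C(p,c)<F^r_C(r,c)$ for all $c$, so again $V(r)=\emptyset$. The same argument with $q$ handles $r\in\interior Z_q$.

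\textbf{Main obstacle.} The only step needing care is the bookkeeping of which cone, $C$ or $C^r$, defines $Z_p$ in each metric, and confirming that the strict interior containment yields $r\in\interior C_p$ (respectively $\interior C^r_p$) rather than only boundary containment. This holds because $\Omega$ is a transversal slice and $r\in\Omega$, so interior points of $C_p\cap\Omega$ relative to $\Omega$ are interior points of $C_p$. Once that is settled, the argument is a direct application of the domination lemmas.
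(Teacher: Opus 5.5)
Your proposal is correct and matches the paper's proof, which likewise observes that $r\in Z_p$ means $p$ dominates $r$ and then invokes Lemmas~\ref{lem:ForwardDominate} and~\ref{lem:BackwardDominate}; your extra check against the weighted picture on $\Omega$ is consistent but not needed. One small point: passing from $r\in\interior Z_0$ to $r\in\interior Z_p$ or $r\in\interior Z_q$ tacitly uses the paper's general-position assumption that no site lies on the boundary of another site's cone, because in general the interior of a union can be larger than the union of the interiors.
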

\begin{proof}
    If $r$ is in $Z_p$ then $r$ is dominated by $p$ (see Lemma \ref{lem:BackwardDominate} and Lemma \ref{lem:ForwardDominate}).
\end{proof}

For the remainder of the section assume $r \not \in Z_0$. The weighted bisector $J((p',w_p), (q',w_q))$
intersects $\partial \Omega$ at two points $u$ and $v$ in either Funk metric (by Lemma \ref{lem:star}). We now define the tangent balls at these points (equivalent to infinite balls in Hilbert \cite{gezalyan2024delaunay}).

\begin{definition}[Tangent Balls]
Let $u, v \in \partial \Omega$ be the two points where $J((p',w_p), (q',w_q))$ intersects $\partial \Omega$ such that $p', u, q'$ is in clockwise order around $\Omega$, and $p', v, q'$ is in counter-clockwise order. Define $B(p:q)$ to be the Funk ball centered at $u$ tangent to both $Z_p$ and $Z_q$, and $B(q:p)$ to be the Funk ball centered at $v$ tangent to both. 
\end{definition}

Both these Funk balls must be in the direction opposite of the bisector's incident direction. We define the overlap region to be the intersection of the two tangent balls (see Figure \ref{fig:Regions} (a)).

\begin{definition}[Overlap Region]
Given two sites $p, q \in \interior C$, the overlap region in the Funk metrics is the intersection of both tangent balls at $\bd \Omega, $ $Z_{1}(p,q) = B(p:q) \cap B(q:p).$
\end{definition}

\begin{lemma}\label{lem:Z1}
    If $r$ is in $Z_1(p,q)$  then $p,q,r$ have no circumcenter. 
\end{lemma}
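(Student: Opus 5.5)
We argue by contradiction. Suppose $r \in Z_1(p,q)$ and that $p,q,r$ admit a circumcenter. By Lemma~\ref{lem:ballInPlane} we may assume the circumcenter $x$ lies in the cross section $\Omega$. Then $x$ lies on the weighted bisector $J((p',w_p),(q',w_q))$, and the Funk ball $B_x$ centered at $x$ has radius equal to the common weighted distance. This ball is tangent to $Z_p$ and $Z_q$ in the sense of Lemma~\ref{lem:tangentBalls}: its defining cone has apex on $\bd C_p \cap \bd C_q$, and its trace on $\Omega$ touches the zero balls without containing their interiors. For $x$ to also be equidistant to $r$, the point $r$ must lie on $\bd B_x$ rather than in $\interior B_x$. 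Indeed, since $r\in\Omega$ has no weight, $r\in \interior B_x$ would mean $r$ is strictly closer to $x$ than the radius. So it suffices to show that every tangent ball $B_x$ with $x \in J$ contains $Z_1(p,q)$ in its interior, apart from the endpoints.

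The key step is a monotonicity, or sweeping, argument along the bisector. Parametrize $J \cap \Omega$ as a continuous curve from $u$ to $v$; by Lemma~\ref{lem:star} and Lemma~\ref{lem:polygonalBisector} it is a single arc. Through Theorem~\ref{thrm:RaysThroughApex}, each $x$ on $J$ corresponds to a ray from the apex, and its tangent ball is the translate $C_{y}\cap\Omega$ (forward case, or the reflected cone in the reverse case) of a cone. The apex $y$ of that cone moves along the curve $\gamma=\bd C_p\cap \bd C_q$.

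I then claim that the tangent balls vary ``between'' the two extremes $B(p:q)$ and $B(q:p)$. For the two endpoint balls at $u$ and $v$, both lie on the same side and both are tangent to $Z_p$ and $Z_q$. Moving $x$ along $J$ continuously rotates the ball about its two tangency points, so the part of a ball away from $Z_0$ shrinks on one side and grows on the other. Concretely, I would show that for every intermediate $x$, the ball satisfies
\[ B(p:q)\cap B(q:p) \subseteq B_x . \]
I would prove this by writing each ball as $(y+C)\cap\Omega$ with $y\in\gamma$ and using convexity of $C$. The two endpoint apexes $y_u,y_v$ and the intermediate apex $y$ lie on $\gamma$. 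I expect $y$ to lie in $(y_u+ (-C))\cup\ldots$, or more precisely that any point dominating both $y_u$ and $y_v$ in $\leq_C$ dominates $y$. That is, $(y_u+C)\cap(y_v+C)\subseteq y+C$ on $\Omega$.

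This containment is the main obstacle. I would first try to establish it in the 2D slice picture: restricted to a supporting line of $\Omega$ near $r$, it reduces to the fact that the apexes along $\gamma$ are ordered in the partial order so that $y \leq_C$ any common upper bound of $y_u$ and $y_v$ in the relevant region. I would check this by using that $\gamma$ is the boundary intersection of two translated copies of the same convex cone, so it is a ``concave-down'' curve whose chord from $y_u$ to $y_v$ lies inside $C_p\cup C_q$. Once the containment holds, $r\in Z_1$ lies in the interior of every $B_x$, up to measure-zero boundary cases excluded by general position. Hence no $x$ is equidistant to $r$, and no circumcenter exists. The reverse Funk case follows identically with $C^r$ in place of $C$.
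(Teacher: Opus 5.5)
\textbf{Gap: the containment is unproven, and in fact false.} Your reduction is the same as the paper's. A circumcenter in $\Omega$ is a bisector point whose tangent ball has $r$ on its boundary, so it suffices that every ball of the pencil contains $Z_1(p,q)$ in its interior. The paper simply asserts this (``Hence $r$ lies interior to all tangent balls''). You correctly isolate it as the crux, namely $(y_u+C)\cap(y_v+C)\cap\Omega\subseteq y+C$ for every apex $y$ on the arc of $\gamma$ between $y_u$ and $y_v$, but you give only a heuristic, and it does not yield the claim. Convexity puts the chord $[y_u,y_v]$ inside the zero cones of $p$ and $q$. Your convex-combination argument, however, needs each arc point to satisfy $y\le_C \lambda y_u+(1-\lambda)y_v$ for some $\lambda\in[0,1]$. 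The arc bulges away from the chord, so nothing forces this.

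You also swapped the cone types. In the forward Funk the tangent balls are $(y-C)\cap\Omega$ with $y\in\bd C_p\cap\bd C_q\subset\interior C$. So no finite endpoint apexes $y_u,y_v$ exist there, and $B(p:q)$, $B(q:p)$ are limits of unbounded balls. Your formula fits the reverse Funk with $\gamma=\bd C^r_p\cap\bd C^r_q$.

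\textbf{Counterexample.} Take $C=\{x:\sqrt{x_1^2+x_2^2}\le x_3\}$ with the reverse Funk metric, $p=(0,0,101)$, $q=(5,0,103)$, $r=(-8,0,100)$, and $\Omega=C\cap\{x_3=100\}$, a disk of radius $100$. No site dominates another, and $r$ is closest to the apex. In $\Omega$, $Z_p$ and $Z_q$ are disks of radii $1$ and $3$ centered at $(0,0)$ and $(5,0)$.

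The admissible tangent balls are the disks externally tangent to both and contained in $\Omega$. They have centers $(2.5-\cosh t,\sqrt{5.25}\,\sinh t)$ and radii $2.5\cosh t-2$, for $\cosh t\le 20.6$.
\begin{itemize}
\item The end balls $B(p:q)$ and $B(q:p)$ (at $\cosh t=20.6$, apexes at height $50.5$) have centers $\approx(-18.1,\pm 47.15)$ and radius $49.5$. Both contain $(-8,0)$ (distance $\approx 48.2$), so $r\in Z_1(p,q)$.
\item The ball at $t=0$ has center $(1.5,0)$ and radius $0.5$, and misses $(-8,0)$. Its apex $(1.5,0,99.5)$ lies above every chord point (height $50.5$), so it is $\le_C$ none of them.
\item At $\cosh t=101/11$ the circle passes through $(-8,0)$. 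This gives two circumcenters $y\approx(-6.68,\pm 20.91,79.05)\in\interior C$ with $F_C(y,p)=F_C(y,q)=F_C(y,r)=0$.
\end{itemize}
The forward Funk fails the same way. With $p=(0,0,9)$, $q=(5,0,7)$, $r=(-8,0,10)$, the point $r$ lies in both limiting (half-plane) balls, yet $y\approx(-6.68,\pm 20.91,30.95)$ are forward circumcenters.

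\textbf{Consequence.} Since $r$ is interior to both end balls, any pencil ball whose interior misses $r$ forces a crossing by the intermediate value theorem. So your reduction is really an equivalence: the lemma holds exactly when the containment does. The lemma therefore needs an extra hypothesis guaranteeing the containment, and neither your sketch nor the paper's proof supplies one.
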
 

\begin{proof}
The bisector $J((p',w_p), (q',w_q))$ parameterizes centers of balls tangent to $Z_p$ and $Z_q$ within $\Omega$. The balls $B(p:q)$ and $B(q:p)$ are the limiting tangent balls as the center approaches $\partial \Omega$. Any ball passing through a point in $Z_1 = B(p:q) \cap B(q:p)$ would have to have its center outside of $\Omega$. Hence $r$ lies interior to all tangent balls to $Z_p$ and $Z_q$.
\end{proof}

Next we define the one center regions, then two center regions. 

\begin{definition}[One-Center Region]
Define $B_{1}(p,q) = (B(p:q) \cup B(q:p)) - Z_1$.
\end{definition}
\begin{lemma}\label{lem:B1}
    If $r$ lies within $B_1(p,q)$ then there is one ball through $p,q$ and $r$.
\end{lemma}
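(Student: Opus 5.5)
The plan is to parameterize all candidate circumcenters by the weighted bisector $J((p',w_p),(q',w_q))$ in $\Omega$. Then count how many times the corresponding family of balls sweeps over $r$.

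\textbf{Parameterizing the balls.} By Lemma~\ref{lem:ballInPlane} and Theorem~\ref{thrm:reduction}, a circumcenter of $p,q,r$ corresponds to a point $x$ of the bisector inside $\Omega$ whose (weighted) Funk ball $D(x)$, tangent to $Z_p$ and $Z_q$, passes through $r$. By Lemma~\ref{lem:star} the bisector is a simple curve from $u$ to $v$, and we traverse it as $x(t)$, $t\in[0,1]$. By Lemma~\ref{lem:ForwardBallCone} each $D(x(t))$ is a translated and scaled copy of the cross section of $C$ (or $C^r$), with the scale and apex varying continuously in $t$. The limits are $D(x(0))=B(p:q)$ and $D(x(1))=B(q:p)$.

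\textbf{Monotonicity.} The key step is to show that the family is ordered relative to the bisector. Let $H_u$ and $H_v$ be the two sides of the "chord" formed by $Z_p\cup Z_q$ together with the bisector. As $t$ increases, the part of $D(x(t))$ on the $H_u$ side should shrink monotonically, and the part on the $H_v$ side should grow monotonically. I would prove this by showing that for $t<t'$ the boundaries of $D(x(t))$ and $D(x(t'))$ cross only at the tangency points with $Z_p$ and $Z_q$, or along the common supporting pieces. This uses that both balls are homothets of the same convex set and are tangent to the same two convex sets $Z_p$ and $Z_q$. Two homothets of a convex body touching two common convex bodies on the same side must be nested on each side of the line of tangency. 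I expect this nesting argument to be the main obstacle. The difficulty is that it must hold for an arbitrary convex cross section, in both forward and reverse versions, including degenerate cases where the tangency is along a segment.

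\textbf{Counting crossings.} Given monotonicity, membership of $r$ in $D(x(t))$ changes at most once on each side. Define $f(t)$ as the signed Funk distance from $x(t)$ to $r$ minus the common distance to $p$ and $q$; this is continuous.

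Suppose $r\in B_1(p,q)$, say $r\in B(p:q)\setminus B(q:p)$, and $r\notin Z_0$. Then $f(0)<0$ and $f(1)>0$, so the intermediate value theorem gives at least one zero, that is, a circumcenter.

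For uniqueness, the monotonicity shows that $r$ lies on exactly one side, and on that side inclusion is monotone in $t$, so $\{t: r\in D(x(t))\}$ is an interval $[0,t_0]$. Hence $f$ has a single zero. The symmetric case $r\in B(q:p)\setminus B(p:q)$ is the same argument with the direction of traversal reversed. This gives exactly one ball through $p,q,r$.
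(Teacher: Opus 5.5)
There is a genuine gap, and it is in the uniqueness half. Your existence step is the paper's argument: the pencil runs from $B(p:q)$, which contains $r$, to $B(q:p)$, which does not, and the intermediate value theorem applies. The paper settles uniqueness only ``by continuity,'' which does not exclude several crossings, so you are right that more is needed. However, your nesting step is false, not merely hard to prove.

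Take a circular cone whose zero balls $Z_p,Z_q$ are the disks of radius $\epsilon$ about $(\mp1,0)$ in a large $\Omega$. For the forward metric put $p,q$ at equal height just below $H$; for the reverse metric put them just above it. The pencil is then the family of disks $D(c)$ with center $(0,c)$ and radius $\sqrt{1+c^2}-\epsilon$.

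\emph{The crossing claim fails.} The circles $\partial D(0)$ and $\partial D(1)$ meet at $(\pm x_0,(\sqrt{2}-1)\epsilon)$, where $x_0^2=(1-\epsilon)^2-(\sqrt{2}-1)^2\epsilon^2$. These points are neither tangency points nor on a common supporting piece.

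\emph{The monotonicity claim fails.} For small $\delta>0$, the point $r_0=(-1+\epsilon+\delta,0)$ lies in $D(c)$ if and only if $\sqrt{1+c^2}-\sqrt{(1-\epsilon-\delta)^2+c^2}\ge\epsilon$. The left side decreases in $|c|$ from $\epsilon+\delta$ to $0$, so this holds exactly for $|c|\le c_0$ with $c_0>0$. The zeros at $\pm c_0$ are simple, and moving $r$ changes $|r-(0,c)|$ by at most $|r-r_0|$ uniformly in $c$. Hence for every $r$ near $r_0$, the set $\{c : r\in D(c)\}$ is a middle interval $[c_-,c_+]$ with $c_-<0<c_+$, so $r$ lies on two members.

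These points lie in $Z_2$, consistent with Lemma~\ref{lem:Z2}, and they form an open set. So whatever curve through $Z_p\cup Z_q$ you use to define $H_u,H_v$, some of them lie strictly on one side, and for them inclusion is not monotone in $t$. Two members are nested only on the two sides of the line through their own two crossing points, and that line moves with the pair.

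What is missing is an argument that uses $r\in B(p:q)\setminus B(q:p)$ directly rather than the side of a chord. In the forward version of the example, every pairwise crossing has $|y|<\epsilon$. That is, it lies strictly between the limiting half-planes $\{y\ge\epsilon\}$ and $\{y\le-\epsilon\}$, which is why the lemma itself survives.

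For circular cones, one fix is to invert about $r$:
\begin{enumerate}
\item Consider circles through $r$ tangent to $Z_p$ and $Z_q$ in the same way, meaning either externally tangent to both, as your pencil members are, or enclosing both. These correspond to outer common tangent lines of the image disks $Z_p^*,Z_q^*$.
\item Inversion about a point outside both disks preserves containment, and neither $Z_p$ nor $Z_q$ contains the other. So there are exactly two such lines, hence exactly two such circles.
\item Orient every member of this family by its side away from $Z_p\cup Z_q$. Along your pencil, membership of $r$ changes an odd number of times, since it starts in $B(p:q)$ and ends outside $B(q:p)$.
\item Along the rest of the family, which leads from $B(q:p)$ back to $B(p:q)$, membership also changes an odd number of times.
\item So each part contains at least one of the two circles, and your pencil contains exactly one.
\end{enumerate}
For polygonal or general convex cones you need a real substitute. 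One option is to show that the boundaries of two pencil members meet in at most two components and that these avoid $B(p:q)\cup B(q:p)$. Neither your proposal nor the paper's ``by continuity'' provides this.
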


\begin{proof}
Without loss of generality let $r \in B(p:q)$ and $r \not \in B(q:p)$. Then the pencil of Funk balls from $v$ to $u$ passes from one side of $r$ to the other. By the intermediate value theorem, there is a Funk ball passing through $p,q,r$ in $\Omega$. By continuity, it passes through once.
\end{proof}

\begin{definition}[Two Center Region]
Given two sites $p, q \in \interior C$, the two center region is the space between $Z_p,Z_q,B(p:q),B(q:p)$. We call it $Z_2(p,q)$.
\end{definition}

\begin{lemma}\label{lem:Z2}
    If $r\in Z_2(p,q)$ then it lies on the boundary of two Funk balls through $p$ and $q$.
\end{lemma}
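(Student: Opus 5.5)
We parameterize the family of Funk balls tangent to both $Z_p$ and $Z_q$ by their centers along the weighted bisector $J((p',w_p),(q',w_q))$ inside $\Omega$. By Lemma~\ref{lem:star} and Lemma~\ref{lem:polygonalBisector} (or the elliptical analysis), this bisector is a connected simple curve from $u$ to $v$. Write $\gamma:[0,1]\to \Omega$ with $\gamma(0)=u$, $\gamma(1)=v$. For each $t$, let $B_t$ be the Funk ball centered at $\gamma(t)$ whose radius equals the common weighted distance to $p'$ and $q'$. By Lemma~\ref{lem:tangentBalls} and its corollary, $B_t$ is tangent to $Z_p$ and $Z_q$. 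The limiting balls are $B_0=B(p:q)$ and $B_1=B(q:p)$. We then study the continuous function
\[
f(t) ~=~ \zeta(\gamma(t),r) - \big(\zeta(\gamma(t),p')+w_p\big),
\]
where $\zeta$ is the forward or reverse Funk distance as appropriate. Its zeros are exactly the centers of balls through $p,q,r$.

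The key steps, in order, are as follows.
\begin{enumerate}
\item Show the endpoint signs. Since $r\in Z_2(p,q)$, $r$ lies outside both tangent balls $B(p:q)$ and $B(q:p)$, so $f(0)>0$ and $f(1)>0$; that is, $r$ is strictly outside $B_0$ and $B_1$.
\item Find an interior parameter $t^\ast$ with $f(t^\ast)<0$. Membership of $r$ in the region ``between'' $Z_p,Z_q,B(p:q),B(q:p)$ means that $r$ sits in the pocket bounded by the two zero balls and the two limiting balls. As $t$ sweeps, the boundaries of the tangent balls $B_t$ sweep across this pocket, so some ball $B_{t^\ast}$ contains $r$ in its interior. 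I would make this concrete by choosing $t^\ast$ so that $B_{t^\ast}$ is the ball whose boundary passes through the tangency points with $Z_p$ and $Z_q$ on the side facing $r$. Equivalently, I would use that the union of the pencil $\{B_t\}$ covers the pocket: every point of $Z_2$ is separated from $Z_0$ by some $\bd B_t$.
\item Apply the intermediate value theorem on $[0,t^\ast]$ and on $[t^\ast,1]$. This gives two distinct parameters where $f$ vanishes, hence two Funk balls through $p,q,r$, with centers lifted back to $C$ via Lemma~\ref{lem:ballInPlane} and Theorem~\ref{thrm:reduction}.
\end{enumerate}

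The main obstacle is step~2: making rigorous that the pencil of tangent balls covers the two-center region. I would first try a nesting argument. Each $B_t$ is a translated and scaled copy of the cross-section shape (Lemma~\ref{lem:ForwardBallCone}) tangent to $Z_p$ and $Z_q$. The tangency points move monotonically along $\bd Z_p$ and $\bd Z_q$ as $t$ increases; this monotonicity follows from star-shapedness, since the bisector crosses each spoke at most once. Consequently, the portion of $\bd B_t$ between the two tangency points sweeps the pocket monotonically from the side of $B_0$ to the side of $B_1$. A point $r$ in the pocket is therefore first outside, then inside (or on), then outside again, which gives the required sign change. If strict containment fails at some $t$, general position (no site on a zero-ball boundary) rules out degenerate tangency, so the two zeros remain distinct.
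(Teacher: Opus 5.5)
Your argument is essentially the paper's. It too moves along the pencil of balls centered on $J((p',w_p),(q',w_q))$ and uses that $r$ lies outside both end balls $B(p:q)$ and $B(q:p)$, so $r$'s cell is ``sandwiched'' and the bisector must enter and then leave it (your two sign changes of $f$); it takes the fact that $r$'s cell meets the bisector at all from describing $Z_2$ as the region swept out by the boundaries of these balls, which is exactly the step you flag as the main obstacle. The paper also uses star-shapedness to argue that the bisector enters and leaves $r$'s cell at most once each, giving exactly two circumcenters as Theorem~\ref{thm:circumcenter-characterization} needs, whereas your sketch only gives at least two (and in the forward metric $f(0)$ and $f(1)$ are of the form $\infty-\infty$, so the endpoint signs must be read as limits as $t\to 0,1$).
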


\begin{proof}
    This region is the region traced out by the boundaries of the Funk balls (in the opposite direction to the Voronoi) outside of $B(p:q)$ and $B(q:p)$. This region can come in three forms: there is a $Z_1$ region and this $Z_2$ region is split into two pieces on either side of it, there is no $Z_1$ region and the $Z_2$ region is one connected piece, there is no $Z_1$ region and $Z_p$ and $Z_q$ intersect and the $Z_2$ is split above and below them (see Figure \ref{fig:Regions} (b)-(e)). We will tackle all these cases at once. Consider that by the nature of $r$ not being in either infinite ball then $p$ and $q$ are both closer to the boundary than $r$ so $r$'s cell is sandwiched between them. This means there are at least two circumcenters. However by star-shapedness of the cells we can see that we can leave and enter the cell at most once each and so there are exactly two circumcenters.
\end{proof}

\begin{figure}[h]
    \centering
    \includegraphics[width=.8\textwidth]{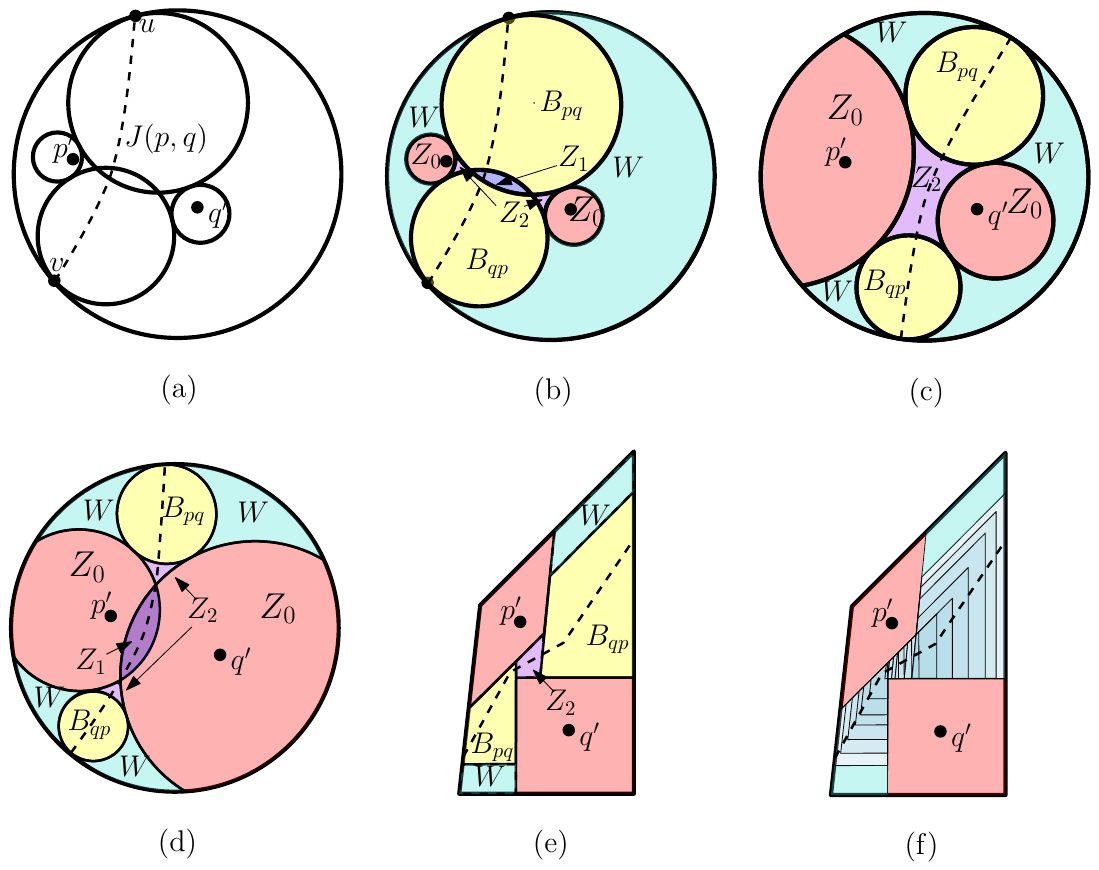}
    \caption{In the reverse Funk, (a) Cross section of a circular cone, (b) (c) (d) regions labeled in a circular cone with differently weighted sites, (e) regions labeled in a polygonal cone, (f) pencil of balls along the bisector in a polygonal cone.}
    \label{fig:Regions}
\end{figure}

Now we define our final region, the outer region. This is the area exterior to all the other regions, on the outside of $B(p:q)$, $B(q:p)$, $Z_p$, and $Z_q$ (see Figures \ref{fig:Regions}(b)-(e)).

\begin{definition}[Outer Region]
   The outer region of $p$ and $q$, $W(p,q)$ is $\Omega - (Z_0(p,q) \cup Z_1(p,q) \cup Z_2(p,q) \cup B(p:q) \cup B(q:p))$.
\end{definition}

\begin{lemma}\label{lem:W}
    If $r\in W(p,q)$ then there is no ball through $r,p,$ and $q$.
\end{lemma}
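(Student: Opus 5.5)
We argue by contradiction, using the pencil of Funk balls parameterized by the weighted bisector $J((p',w_p),(q',w_q))$ inside $\Omega$. By Lemma~\ref{lem:ballInPlane} it suffices to work in the cross section $\Omega$ with the weighted sites $p',q'$ and the unweighted site $r$. Any ball through $p,q,r$ has its center $c$ on this bisector. Equivalently, it is a Funk ball (opposite in direction to the Voronoi metric) tangent to $Z_p$ and $Z_q$ whose boundary passes through $r$. As $c$ moves along the bisector from $v$ to $u$, these balls vary continuously. Their limits at the two ends are $B(q:p)$ and $B(p:q)$.

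The first step is to describe the sweep of the pencil. Each ball in the pencil is tangent to $Z_p$ and $Z_q$. The set swept by the boundaries of these balls, taken outside $B(p:q)\cup B(q:p)$ and outside $Z_0$, is exactly the region $Z_2(p,q)$ in the proof of Lemma~\ref{lem:Z2}. The boundary portions lying inside $B(p:q)\cup B(q:p)$ lie in $B_1\cup Z_1$. We would make this precise with a monotonicity claim: as $c$ moves toward $u$, the part of the ball on the $u$-side of the line through the tangency points grows, and the part on the $v$-side shrinks. This follows from the similar-triangles formula (Proposition~\ref{prop:TriangleCone}), since shifting the apex of the ball cone along the bisector moves each facet or boundary ray monotonically. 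The interior portions of the balls therefore interpolate between $B(q:p)$ and $B(p:q)$. As a result, every point lying on the boundary of some pencil ball lies in $Z_0\cup Z_1\cup B(p:q)\cup B(q:p)\cup Z_2$.

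The second step handles $r\in W(p,q)$. By definition, $r$ lies outside $Z_0$, outside both tangent balls, and outside the region $Z_2$ traced by the pencil boundaries. By the first step, no pencil ball has $r$ on its boundary. Since every ball through $p,q,r$ is a pencil ball, no ball passes through $p,q,r$. In Voronoi terms, this means $r$ is strictly farther from every bisector point than $p$ and $q$ are. So $r$'s Voronoi cell never reaches $J(p,q)$, which is consistent with star-shapedness (Lemma~\ref{lem:star}).

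We expect the first step to be the main obstacle: proving that the union of the pencil boundaries is exactly $Z_2\cup$ (portions inside the tangent balls), with nothing escaping into $W$. The plan is to parameterize the bisector and show that the distance difference $F(c,r)-F(c,p)$ (with weights, in the appropriate direction) is continuous in $c$. We would then show that when $r\in W$ it has constant sign along the whole bisector. At the endpoints $u$ and $v$ the sign is fixed because $r$ lies outside $B(p:q)$ and $B(q:p)$. Since $r$ lies outside the swept region, the difference has no zero in between.
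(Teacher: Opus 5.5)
Your argument matches the paper's: every ball through $p,q,r$ is a member of the pencil of Funk balls centered on the weighted bisector $J((p',w_p),(q',w_q))$ between $v$ and $u$. The paper takes $Z_2$ to be exactly the region these boundaries sweep outside $B(p:q)\cup B(q:p)$, so no pencil ball can pass through $r\in W(p,q)$. Your monotonicity claim and the sign-function plan in the last paragraph go beyond the paper and are not needed once $Z_2$ is described that way. Note also that the closing step, ``since $r$ lies outside the swept region, the difference has no zero in between,'' only restates the conclusion rather than proving it.
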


\begin{proof}
    In this case $r$ is outside of the two infinite balls $B(p:q)$ and $B(q:p)$ and $Z_2(p,q)$. Hence no ball on the pencil of balls on the bisector will pass through it (see Figure \ref{fig:Regions} (f)).
\end{proof}

\begin{theorem}\label{thm:circumcenter-characterization}
Let $p, q$ be two sites in a 3-dimensional cone $C$ with $r \in C$ such that neither 
$p$ nor $q$ dominates each other in the forward Funk metric (or reverse analogously). Then the number of forward
Funk circumcenters (or reverse) through $p, q, r$ is determined by which region $r$ 
occupies:
\begin{itemize}
    \item If $r \in Z_0(p,q)$, there are 0 circumcenters (Lemma \ref{lem:Z0}).
    \item If $r \in Z_1(p,q)$, there are 0 circumcenters (Lemma \ref{lem:Z1}).
    \item If $r \in B_1(p,q)$, there is exactly 1 circumcenter (Lemma \ref{lem:B1}).
    \item If $r \in Z_2(p,q)$, there are exactly 2 circumcenters (Lemma \ref{lem:Z2}).
    \item If $r \in W(p,q)$, there are 0 circumcenters (Lemma \ref{lem:W}).
\end{itemize}
\end{theorem}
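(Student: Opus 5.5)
The theorem is essentially an assembly of the five preceding lemmas, so the plan is to show that the five regions partition $\Omega$ and then invoke the lemmas case by case. First, I reduce the three-dimensional question to the planar cross section $\Omega = H\cap C$ through $r$. By Lemma \ref{lem:ballInPlane} and Theorem \ref{thrm:RaysThroughApex}, circumcenters of $p,q,r$ in $C$ come in rays through the apex, and we count them up to this identification. By Theorem \ref{thrm:reduction}, each such ray meets $\Omega$ in exactly one point, which is equidistant (in the weighted sense) to $p'$, $q'$ and $r$. So the number of circumcenters equals the number of points of the weighted bisector $J((p',w_p),(q',w_q))$ that are also equidistant to $r$. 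Equivalently, it is the number of balls in the pencil of Funk balls centered on $J$ and tangent to $Z_p$ and $Z_q$ that pass through $r$.

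Second, I check that $\Omega$ is the disjoint union of $Z_0$, $Z_1$, $B_1$, $Z_2$ and $W$, up to boundaries, which we exclude by general position. The sets $Z_1$ and $B_1$ partition $B(p:q)\cup B(q:p)$ by definition. The set $W$ is defined as the complement of everything else. What needs checking is that $Z_2$ is disjoint from $Z_0$ and from the tangent balls, and this holds because $Z_2$ is defined as the region lying between them. Hence each $r$ falls into exactly one case.

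Third, I apply the lemmas in order. If $r\in Z_0$, Lemma \ref{lem:Z0} says $r$ is dominated, so it has no cell and there is no circumcenter. If $r\in Z_1$, Lemma \ref{lem:Z1} gives $0$. If $r\in B_1$, Lemma \ref{lem:B1} gives $1$. If $r\in Z_2$, Lemma \ref{lem:Z2} gives $2$. If $r\in W$, Lemma \ref{lem:W} gives $0$.

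The key quantitative step is to parametrize $J$ continuously from $v$ to $u$ and to look at the sign of $\zeta(r,x)-\zeta(p',x)-w_p$ along it. Each circumcenter is a sign change. At the endpoints, the sign is decided by whether $r$ lies in $B(q:p)$ or in $B(p:q)$. The intermediate value theorem then produces at least the claimed number of circumcenters: one in $B_1$, and two in $Z_2$, where the sign at both ends is the same but the sign in the interior is opposite. The matching upper bound comes from star-shapedness (Lemma \ref{lem:star}): the cell of $r$, intersected with the curve $J$, can be entered at most once and left at most once. This yields at most two sign changes, and at most one when the endpoint signs differ.

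The main obstacle I expect is making the upper bound rigorous. Star-shapedness controls segments from $r$, not the bisector curve $J$. To close this gap, I would argue that the portion of $V(r)$ lying on $J$ is connected. For this I would use that along the pencil the balls vary monotonically: nested tangent balls through $Z_p$ and $Z_q$ sweep $\Omega$ monotonically in the sector. If that fails, I would fall back on the piecewise-linear or quadratic structure of the bisectors in each sector.
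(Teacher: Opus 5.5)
Your outline is the paper's own: the theorem is Lemmas~\ref{lem:Z0}--\ref{lem:W} put together, with an intermediate-value argument along the pencil for existence and star-shapedness for the upper bound. You are right that star-shapedness about $r$ says nothing directly about the curve $J$. The gap, however, is not confined to the upper bound. Your sign-change bookkeeping fixes only the \emph{parity} of the number of circumcenters, read off from the two endpoint signs. In the $Z_1$ case both ends have the same sign, so getting $0$ requires that $r$ lie inside \emph{every} pencil ball. That is exactly what Lemma~\ref{lem:Z1} asserts and what you cite, and it is false.

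In the forward Funk metric take $C=\{z\in\RE^3 : z_3\ge (z_1^2+z_2^2)^{1/2}\}$ and $p=(-1,0,3.4)$, $q=(1,0,3.8)$, $r=(3,0,4)$. Then $r$ is farthest from the apex and neither of $p,q$ dominates the other. In $H=\{z_3=4\}$ the zero balls $Z_p,Z_q$ are the disks of radii $0.6$ and $0.2$ centered at $(-1,0)$ and $(1,0)$. The pencil balls $C^r_y\cap H$, $y\in\bd C_p\cap\bd C_q$, are exactly the disks externally tangent to $Z_p$ and $Z_q$. So the limiting balls $B(p:q)$ and $B(q:p)$ are the half-planes beyond the two external common tangents, intersected with $\Omega$. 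These tangents cross at $(2,0)$, hence $r=(3,0)\in Z_1$.

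Yet the pencil disk of radius $0.6$ centered at $(0.2,0)$ misses $r$. Moreover, the points $y_\pm=(3.94,\pm b,23.3)$ with $b^2=19.9^2-4.94^2$ lie on $\bd C_p\cap\bd C_q\cap\bd C_r$. Hence $F_C(p,y_\pm)=F_C(q,y_\pm)=F_C(r,y_\pm)=0$, and there are two circumcenters. Note that the pencil disk through $r$ has Euclidean center $(3.94,b)$ outside $\Omega$, while its Funk center (the central projection of $y_+$ onto $H$) lies inside $\Omega$, contrary to what the proof of Lemma~\ref{lem:Z1} asserts.

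The same example defeats your proposed repair. $V(r)\cap J$ need not be connected: here it is the two end pieces of $J$. Nor does the pencil sweep $\Omega$ monotonically. If each point changed status at most once along the pencil, no point outside both limiting balls could lie in an intermediate ball, and $Z_2$ would be empty. So the bound of two needs a different idea.

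For elliptical cones one is at hand. Rescaling a circumcenter so that the common distance is $0$ identifies circumcenter rays with points of $\bd C_p\cap\bd C_q\cap\bd C_r$. Since $C_p,C_q,C_r$ are translates of one quadratic cone, the pairwise differences of their defining equations are linear. So in general position this intersection lies on a line and has at most two points.

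With that bound your sign argument does give exactly one circumcenter in $B_1$, none in $W$, and two in $Z_2$ (read as the points outside both limiting balls but inside some pencil ball). But $Z_1$ must be split into the points lying in every pencil ball (no circumcenter) and the rest (two). So the $Z_1$ clause cannot be proved as stated, and for polygonal cones the upper bound still requires its own argument.
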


\section{Conclusions}

In this paper, we presented one of the first algorithmic studies of the Funk metric in the conical geometry. Our main contributions include: establishing that bisectors consist of rays through the apex of the cone; proving a dimensional reduction from $d$-dimensional Funk Voronoi diagrams to $(d-1)$-dimensional additively weighted Voronoi diagrams on bounded cross sections of the cone; providing efficient algorithms for computing these in $d$-dimensional elliptical cones and 3-dimensional polygonal cones; and categorizing circumcenters for 3-dimensional cones. Several natural extensions remain open: extending our polygonal result to arbitrary dimension, investigating the relationship between Funk Voronoi and Hilbert Voronoi, and studying other classical computational geometry structures in the Funk metric.

\bibliography{shortcuts,hilbert}

\section{Appendix}
\begin{figure}[h]
    \centering
    \includegraphics[width=1.0\textwidth]{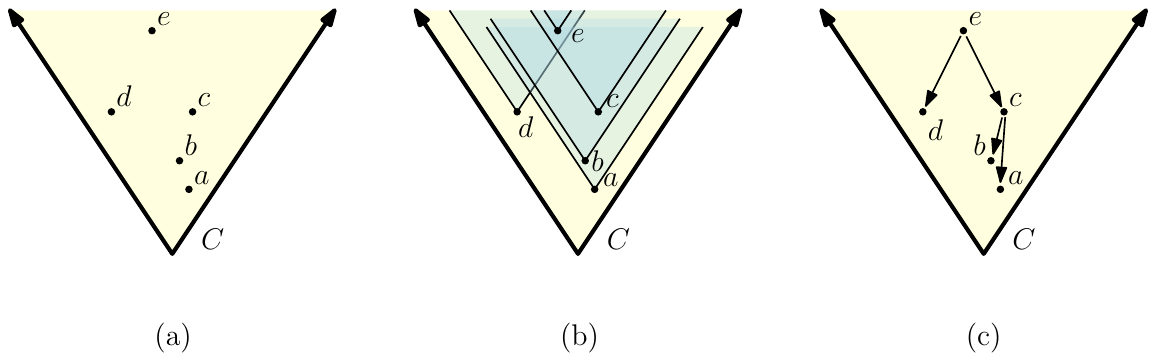}
    \caption{(a) A collection of 5 points in a cone $C$. (b) The cone $C$ translated to the points. (c) A directed Hasse diagram describing their partial order, $d \leq_C e, c \leq_C e, b \leq_C c, a \leq_C b$. }
    \label{fig:HasseDef}
\end{figure}

Here we prove for completeness that the triangle inequality holds in the Funk metrics and that geodesics are straight lines. We also provide a figure that illustrates the nature of the partial order of a set of points in a cone under the Funk metric (see Figure \ref{fig:HasseDef}).

\begin{lemma}\label{lem:triangleInequality}
Given $a, b, c\in \interior  C$, the forward and reverse Funk metrics satisfy the triangle inequality: 
\begin{enumerate}
    \item $F_{C}(a, c) 
        ~ \leq ~ F_C(a, b) + F_{C}(b, c)$
    \item $F^r_{C}(a, c) 
        ~ \leq ~ F^r_C(a, b) + F^r_{C}(b, c)$
\end{enumerate}

\end{lemma}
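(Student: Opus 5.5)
We prove the triangle inequality directly from the infimum definition (Definition~\ref{def:infCone}), using only that $C$ is a convex cone, so that $\leq_C$ is transitive and compatible with positive scaling. The reverse statement will then follow from the forward one.

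\textbf{Step 1: order facts.} We first record two facts about $\leq_C$ for $\beta>0$.
\begin{itemize}
\item Scaling: $x \leq_C y$ implies $\beta x \leq_C \beta y$, since $\beta(y-x) \in C$ whenever $y-x\in C$.
\item Transitivity: $x\leq_C y$ and $y \leq_C z$ imply $x \leq_C z$, since $z-x=(z-y)+(y-x)$ is a sum of two elements of $C$, and $C$ is closed under addition because it is a convex cone.
\end{itemize}

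\textbf{Step 2: the infimum is attained and positive.} For $a,b\in\interior C$, the set $\{\beta : a\leq_C \beta b\}$ is nonempty, because $\beta b - a \in \interior C$ for large $\beta$ (as $b$ is interior). It is closed, because $C$ is closed. It is upward closed, because $\beta' b - a = (\beta'-\beta)b + (\beta b - a)$. It is bounded below by a positive number: for $\beta \le 0$ the point $\beta b - a$ lies in $-\interior C$, which meets $C$ only at the origin since $C$ is pointed, and $a\neq 0$. So the infimum $\beta_1 = e^{F_C(a,b)}$ is a positive minimum with $a \leq_C \beta_1 b$.

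\textbf{Step 3: the forward inequality.} Let $\beta_1 = e^{F_C(a,b)}$ and $\beta_2=e^{F_C(b,c)}$, so that $a \leq_C \beta_1 b$ and $b\leq_C \beta_2 c$. Scaling the second relation by $\beta_1$ gives $\beta_1 b \leq_C \beta_1\beta_2 c$. Transitivity then gives $a \leq_C \beta_1\beta_2 c$. Hence $\inf\{\beta : a\leq_C \beta c\} \le \beta_1\beta_2$, and taking logarithms yields
\[
F_C(a,c)\le F_C(a,b)+F_C(b,c).
\]
If one prefers not to use attainment, the same argument works with $\beta_1+\epsilon$ and $\beta_2+\epsilon$, letting $\epsilon\to 0$.

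\textbf{Step 4: the reverse inequality.} By definition, $F^r_C(a,c)=F_C(c,a)$. Applying the forward inequality with the triple $(c,b,a)$ gives
\[
F_C(c,a) \le F_C(c,b)+F_C(b,a) = F^r_C(b,c)+F^r_C(a,b).
\]

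The only delicate point is Step 2, namely that the infimum is a positive real number so that the logarithm is defined. This is exactly where pointedness of $C$ and interiority of the points are used.
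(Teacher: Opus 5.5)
Your proof is correct and is essentially the paper's argument: combine $a\leq_C\beta_1 b$ with the scaled relation $\beta_1 b\leq_C\beta_1\beta_2 c$ by transitivity, then obtain the reverse case by applying the forward inequality to the triple $(c,b,a)$. Your Steps 1 and 2 make explicit what the paper uses tacitly, namely that $\leq_C$ is compatible with positive scaling and that the infimum is a positive, attained minimum, so the logarithm is defined.
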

\begin{proof} 
Let $\lambda$ be the infimum value so that  $\lambda b$ is in $C_a$, and let $\eta$  be the infimum value so that $\eta c \in C_b$. By the transitivity of $\leq_C$ we can see that $\lambda \eta c \in C_a$. Hence, $e^{F_C(a,c)} \leq \lambda \eta = e^{F_C(a,b)} e^{F_C(b,c)}$ and so by taking logarithms $F_{C}(a, c) \leq F_C(a, b) + F_{C}(b, c)$. The reverse Funk metric follows directly from definition:
\[
    F_C^r(a,c)
        ~ = ~ F_C(c,a)
        ~ \leq ~ F_C(c,b) + F_C(b,a)
        ~ = ~ F_C^r(a,b) + F_C^r(b,c). \qedhere 
\]
\end{proof}

\begin{lemma}\label{lem:lineGeodesic}
Straight lines are geodesic in the forward and reverse Funk metrics.
\end{lemma}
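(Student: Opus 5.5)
The plan is to show that for $a,b\in\interior C$ and any point $x$ on the segment from $a$ to $b$, the triangle inequality of Lemma~\ref{lem:triangleInequality} holds with equality:
\[
F_C(a,b) = F_C(a,x) + F_C(x,b).
\]
This equality is the property used in Lemma~\ref{lem:star}. The reverse Funk statement then follows at once from $F^r_C(a,b)=F_C(b,a)$, applied to the same segment traversed in the opposite direction.

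Since the inequality $\leq$ is already given by Lemma~\ref{lem:triangleInequality}, only $\geq$ needs proof. I will prove it by reducing to a two-dimensional computation.

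First, I restrict to the plane through the origin, $a$ and $b$. The point $x$ lies in this plane, so the whole configuration is planar. (If $a,b$ and the origin are collinear, Lemma~\ref{lem:additive} already gives the equality, so I treat that case separately.)

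Next, I apply Proposition~\ref{prop:TriangleCone} to the pair $(a,b)$. This gives $F_C(a,b)=\ln \frac{d_2(a,R)}{d_2(b,R)}$ for the boundary ray $R$ determined by where the ray $\overrightarrow{0b}$ meets $\bd C_a$.

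The key step is to argue that the same boundary ray $R$ is the relevant one for both the pair $(a,x)$ and the pair $(x,b)$. The reason is geometric. The ray from $a$ through $b$ leaves $C_a$, or enters it, across the side of $\bd C_a$ parallel to $R$. Because $x$ lies on segment $ab$, the cone $C_x$ has a boundary side parallel to $R$ that passes through points collinear with $a$ and $b$. In the slice, $F_C(p,q)$ is computed from whichever side of $C_p$ the ray $\overrightarrow{0q}$ meets, and this choice depends only on the angular order of $p$ and $q$ around the origin. Since $x$ lies angularly between $a$ and $b$, the angular orders $\langle a,x\rangle$ and $\langle x,b\rangle$ agree with $\langle a,b\rangle$, so all three pairs use the same $R$.

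With a single $R$ for all three pairs, the logarithms telescope:
\[
\ln\frac{d_2(a,R)}{d_2(x,R)}+\ln\frac{d_2(x,R)}{d_2(b,R)}=\ln\frac{d_2(a,R)}{d_2(b,R)}.
\]
This is exactly the computation in Lemma~\ref{lem:additive}.

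I expect the main obstacle to be justifying that the selected side is the same for all three pairs. I would handle it by describing the choice of $R$ precisely: $R$ is the boundary ray on the side away from which $b$ lies angularly relative to $a$, which is the same side as in the ordering used in Lemma~\ref{lem:forward2DBisector}. Betweenness of $x$ then preserves this side, including the degenerate case where $a,b$ lie on a common ray, where any side works.
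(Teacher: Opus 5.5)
Your proposal is correct and is essentially the paper's proof: restrict to the two-dimensional slice through the apex, observe that $F_C(a,x)$, $F_C(x,b)$ and $F_C(a,b)$ are all measured against the same boundary ray $R$, telescope the logarithms, and obtain the reverse case from $F^r_C(a,b)=F_C(b,a)$. Your angular-order argument for why $R$ is common to all three pairs is more careful than the paper's one-line justification. One slip: the side is selected by the ray from the origin through $b$ (the boundary ray toward which $b$ lies angularly from $a$), not by the ray from $a$ through $b$, which starts at the apex of $C_a$ and never crosses $\partial C_a$.
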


\begin{proof}
We first consider the forward Funk metric. Consider three collinear points $a, b, c \in \interior C$ where $b$ lies on the line segment $a$ to $c$.  The ray from $a$ through $b$ is the same as the ray from $a$ through $c$, so both distances $F_C(a,b)$ and $F_C(a,c)$ are computed using the same boundary ray $R$ of $C$. Writing out the distances, we have
\begin{align*}
    F_C(a,b) + F_C(b,c) 
        & ~ = ~ \ln\frac{d_2(a,R)}{d_2(b,R)} + \ln\frac{d_2(b,R)}{d_2(c,R)} \\
        & ~ = ~ \ln\frac{d_2(a,R)}{d_2(c,R)} 
          ~ = ~ F_C(a,c). 
\end{align*}
The reverse Funk is analogous.
\end{proof}
Note that straight lines are not necessarily the only geodesics in the Funk metric.

\end{document}